\let\NAT@parse\undefined
\newsavebox{\ieeealgbox}
\newenvironment{boxedalgorithmic}
  {\begin{lrbox}{\ieeealgbox}
   \begin{minipage}{\dimexpr\columnwidth-2\fboxsep-2\fboxrule}
   \begin{algorithmic}[1]}
  {\end{algorithmic}
   \end{minipage}
   \end{lrbox}\noindent\fbox{\usebox{\ieeealgbox}}}
\newcommand{\bsy}[1]{\boldsymbol{#1}}
\newcommand{\supp}{\text{supp}}
\newcommand\independent{\perp\!\!\!\perp}
\DeclareMathOperator*{\argmin}{arg\,min}
\newtheorem{thm}{Theorem}
\newtheorem{lem}{Lemma}[thm]
\renewcommand{\thelem}{\the \numexpr (\value{thm}+1) \relax.\arabic{lem}}
\newtheorem{corr}{Corollary}[thm]
\newtheorem{lemSA}{Lemma}
\newcounter{algoCounter}
\title{Simultaneous Orthogonal Matching Pursuit With Noise Stabilization: Theoretical Analysis} 
\author{ Jean-Fran\c{c}ois Determe\thanks{Jean-Fran\c{c}ois Determe and Fran\c{c}ois Horlin are with the OPERA Wireless Communications Group, Universit\'e Libre de Bruxelles, 1050 Brussels, Belgium. E-mail: jdeterme@ulb.ac.be, fhorlin@ulb.ac.be. Jean-Fran\c{c}ois Determe is funded by the Belgian National Science Foundation (F.R.S.-FNRS).}  \quad J\'er\^{o}me Louveaux\footnotemark[2] \quad  Laurent Jacques\thanks{Laurent Jacques and J\'{e}r\^{o}me Louveaux are with the ICTEAM departement, Universit\'e Catholique de Louvain. Laurent Jacques is funded by the Belgian National Science Foundation (F.R.S.-FNRS).} \quad Fran\c{c}ois Horlin\footnotemark[1] }
\begin{document}
\maketitle

\begin{abstract}
This paper studies the joint support recovery of similar sparse vectors on the basis of a limited number of noisy linear measurements, \textit{i.e.}, in a multiple measurement vector (MMV) model. The additive noise signals on each measurement vector are assumed to be Gaussian and to exhibit different variances. The simultaneous orthogonal matching pursuit (SOMP) algorithm is generalized to weight the impact of each measurement vector on the choice of the atoms to be picked according to their noise levels. The new algorithm is referred to as SOMP-NS where NS stands for noise stabilization.

To begin with, a theoretical framework to analyze the performance of the proposed algorithm is developed. This framework is then used to build conservative lower bounds on the probability of partial or full joint support recovery. Numerical simulations show that the proposed algorithm outperforms SOMP and that the theoretical lower bound provides a great insight into how SOMP-NS behaves when the weighting strategy is modified.
\end{abstract}

\section{Introduction}
The recovery of sparse signals of high dimensions on the basis of noisy linear measurements is an important problem in the field of signal acquisition and processing. When the number of linear observations is significantly lower than the dimension of the signal to be recovered, the signal recovery may exploit the property of sparsity to deliver correct results. The field of research that studies such problems is often referred to as \textit{compressed sensing} or \textit{compressive sensing} (CS) \cite{donoho2006compressed}.\\

Several computationally tractable methods to address CS problems have been developed in the last two decades \cite{davis1997adaptive, cai2011orthogonal, cotter2005sparse, leviatan2006simultaneous, needell2010signal, tropp2005simultaneous, tropp2006algorithms, tropp2006just, tropp2007signal}. Among them, greedy methods prove to be valuable choices as their complexity is significantly lower than that of algorithms based on $\ell_1$-minimization \cite{tropp2007signal}. \\

While many CS problems involve only one sparse signal and the corresponding \textit{measurement vector}, \textit{i.e.}, the vector gathering all the linear observations of this signal, some applications either require or at least benefit from the presence of several sparse signals and measurement vectors. Examples of such applications are available in Section~\ref{subsec:applications}. Models involving one measurement vector are referred to as single measurement vector (SMV) models while multiple measurement vector (MMV) models involve at least two measurement vectors \cite{eldar2009robust}.\\

When the supports of the sparse signals are similar, it is possible to improve the reliability of the recovery by making joint decisions  to determine the estimated support \cite{chen2006theoretical, gribonval2008atoms}. Thereby, all the measurement vectors intervene in the estimation of the support and the final support is common to all the sparse vectors. Algorithms performing joint recovery are also capable to weaken the influence of additive measurement noise on the performance provided that the noise signals are statistically independent and exhibit some degree of isotropy.\\

Orthogonal matching pursuit (OMP) is one of the most extensively used greedy algorithm designed to solve SMV problems \cite{mallat1993matching}.  Among several greedy algorithms conceived to deal with multiple measurement vectors, the extension of OMP to the MMV paradigm, referred to as simultaneous orthogonal matching pursuit (SOMP), is of great interest as it remains simple, both conceptually and algorithmically \cite{tropp2006algorithms}.

\subsection{Motivation \& Objective}

The classical SOMP algorithm does not account for the possibly different measurement vector noise levels. In some sense, all the measurement vectors are considered equally worthy. However, it is clear that an optimal joint support recovery method should necessarily take into account the noise levels by accordingly weighting the impact of each measurement vector on the decisions that are taken. The first aim of this paper is to extend SOMP by gifting it with weighting capabilities. The new algorithm will be referred to as SOMP with noise stabilization (SOMP-NS) and basically extends the decision metric of SOMP to weight the impact of each measurement vector onto the decisions that are taken.\\

The second objective is to provide theoretical and numerical evidence that the proposed algorithm indeed enables one to achieve higher performance than the other greedy alternatives when the noise levels, or more generally the signal-to-noise ratios (SNR), vary from one measurement vector to another. 

\subsection{Detailed contribution}

We study partial and full support recovery guarantees of SOMP-NS for a MMV signal model incorporating arbitrary sparse signals to be recovered and statistically independent  additive Gaussian noise vectors exhibiting diagonal covariance matrices, \textit{i.e.}, the entries within each vector are statistically independent. It is assumed that the variances of the entries within each noise vector are identical although they may be different for each measurement vector. The signal model is thoroughly detailed in Section~\ref{subsec:signalModel}. \\

Our first contribution is the proposal of SOMP-NS which generalizes SOMP by weighting the measurement vectors. The second contribution is a novel theoretical analysis of SOMP and SOMP-NS in the presence of additive Gaussian noise on the measurements. To the best of the authors' knowledge, the theoretical analysis in this paper has never been proposed, neither for SOMP nor for SOMP-NS.\\

Finally, numerical simulations will show that the weighting capabilities of SOMP-NS enable one to improve the performance with regards to SOMP when the noise vectors exhibit different powers. The numerical results will also provide evidence that the theoretical analysis accurately depicts key characteristics of SOMP-NS. In particular, closed-form formulas for the optimal weights will be derived from the theoretical analysis and will be compared to the simulation results.

\subsection{Related work}

Several authors have worked on similar problems. The study of full support recovery guarantees for OMP with $\ell_2$ or $\ell_{\infty}$-bounded noises as well as with Gaussian noises has been performed in \cite{cai2011orthogonal}. The authors of \cite{cai2011orthogonal} also provided conditions on the stopping criterion to ensure that OMP stops after having picked all the correct atoms. \\

Our analysis is similar to that performed by Tropp in \cite{tropp2006just} for convex programming methods in a SMV setting. Together with Gilbert \cite{tropp2007signal}, they analyzed the probability of full support recovery by means of OMP for Gaussian measurement matrices in the noiseless case. Their result has subsequently been refined by Fletcher and Rangan in \cite{fletcher2012orthogonal, rangan2009orthogonal} to account for additive measurement noise by means of a high-SNR analysis, \textit{i.e.}, it is assumed that the signal-to-noise ratio scales to infinity. All of the papers discussed so far only focus on the SMV framework.\\

The theoretical analysis of our paper is partially inspired from \cite{cai2011orthogonal} and has been generalized to the MMV framework. It is worth pointing out that our analysis does not require the high SNR assumption of \cite{fletcher2012orthogonal, rangan2009orthogonal}, properly captures the benefits provided by multiple measurement vectors but nevertheless presents some inaccuracies that are to be discussed at the end of this paper.\\

Gribonval \textit{et al.} have performed an analysis of SOMP  for a problem similar to ours in \cite{gribonval2008atoms}. They were interested in providing a lower bound on the probability of correct support recovery when the signal to be estimated is sparse and its non-zero entries are statistically independent mean-zero Gaussian random variables exhibiting possibly different variances.\\

While our statistical analysis considers the additive measurement noise as a random variable and the sparse signals to be recovered as deterministic quantities, the results obtained in \cite{gribonval2008atoms} use the opposite approach, \textit{i.e.}, the sparse signals are random and the noise is deterministic. Thus, the problem addressed in our paper differs from that presented in \cite{gribonval2008atoms} but both papers use similar mathematical tools and the criteria to ensure full support recovery with high probability share analogous properties. This last remark will be further discussed in Section~\ref{subsec:GribonvalrelatedThm}.

\subsection{Outline}

First of all, Section~\ref{sec:sigModel} progressively introduces the context, provides a detailed description of the signal model and depicts an associated application. Afterwards, Section~\ref{sec:SOMPandSOMPNS} provides descriptions of SOMP and SOMP-NS.\\

Before deriving the theoretical analysis, Section~\ref{sec:background} introduces the mathematical tools necessary for its execution. Section~\ref{sec:section3} then provides general theoretical results on the proper recovery of sparse vectors by means of SOMP-NS.
On the basis of the results from Section~\ref{sec:section3}, we show in Section~\ref{sec:recovGuarantees} that, for Gaussian noises,  the probability of failure of SOMP-NS decreases exponentially with regards to the number of measurement vectors. \\

In Section~\ref{sec:numresults}, extensive numerical simulations show that adequate weighting strategies enable SOMP-NS to outperform SOMP whenever the noise variances for each measurement vector are different. Also, a closed-form weighting strategy is derived from the theoretical analysis of the previous sections and these weights are compared to the optimal ones obtained by simulation. Finally, the simulation results show which aspects of the behavior of SOMP-NS are properly captured by the proposed theoretical analysis. The reasons why our analysis fails to capture some characteristics of SOMP are discussed and potential workarounds are proposed for investigation. 

\subsection{Conventions}

We find preferable to introduce here the common notations used in this paper. First of all, $ \forall n \in \mathbb{N}, \left[n\right] := \left\lbrace 1, 2, \dots, n \right\rbrace$. For any set $A$, $|A|$ refers to its cardinality. Matrices are denoted by upper case bold letters while vectors are written in lower case bold letters. The tranpose operation is denoted by the superscript ${ }^{\mathrm{T}}$. For $\bsy{x} \in \mathbb{R}^n$, $x_i$ denotes the $i$-th component of $\bsy{x}$. Similarly, the $j$-th column vector of any matrix is denoted by the corresponding lower case bold letter with subscript $j$, \textit{e.g.}, the $j$-th column of $\bsy{\Phi}$ is $\bsy{\phi}_j$. The $(i,j)$-th entry of a matrix is denoted by the upper case letter with subscript $i,j$, \textit{e.g.}, the $(i,j)$-th entry of $\bsy{\Phi}$ is written $\Phi_{i,j}$. The $\ell_p$ norm ($1 \leq p < \infty$) of $\bsy{\phi} \in \mathbb{R}^m$ is defined as $\| \bsy{\phi} \|_p := \left( \sum_{i=1}^{m} | \phi_i |^p \right)^{1/p}$. Moreover, the $\ell_{\infty}$ norm of vector $\bsy{\phi} \in \mathbb{R}^m$ is defined as $\|\bsy{\phi}\|_{\infty} := \max_{i \in \left[m \right]} |\phi_i|$. For any matrix $\bsy{\Phi} \in \mathbb{R}^{m \times n}$, $\mathrm{span}(\bsy{\Phi})$ denotes the space spanned by its column vectors. Also, the trace of $\bsy{\Phi}$ is $\mathrm{tr}(\bsy{\Phi})$ and the Frobenius norm of $\bsy{\Phi}$ is denoted by $ \| \bsy{\Phi} \|_{\mathrm{F}}$. For $S \subseteq \lbrack n \rbrack$, $\bsy{\Phi}_{S}$ denotes the submatrix of $\bsy{\Phi}$ that comprises its columns indexed by $S$. Likewise, $\bsy{x}_{S}$ is the subvector of $\bsy{x}$ comprising only the components indexed within $S$. The Moore-Penrose pseudoinverse of any matrix $\bsy{\Phi}$ is denoted by $\bsy{\Phi}^+$. The orthogonal complement of a vector subspace $A$ is given by $A^{\perp}$. For any random variable $X$, its cumulative density function (CDF) is denoted by $F_X$ while its probability density function (PDF) is written $f_X$ (when it exists). Similarly, the joint CDF and PDF of the random variables $X_1, \dots, X_K$ are written as $F_{X_1, \dots, X_K}$ and $f_{X_1, \dots, X_K}$ respectively. The probability measure is given by $\mathbb{P}$ while the mathematical expectation is denoted by $\mathbb{E}$. The statistical independence symbol is written $\independent$.

\section{Signal model}\label{sec:sigModel}

\subsection{Context}

We define the support of a vector $\bsy{x}$ as $\mathrm{supp}(\bsy{x}) := \lbrace i : x_i \neq 0 \rbrace$. The $\ell_0$ ``norm''\footnote{This is an abuse of language as this mathematical object does not satisfy the properties of a norm} of a vector $\bsy{x}$ is defined as $\| \bsy{x} \|_0 = | \mathrm{supp}(\bsy{x}) |$.  Loosely speaking, we say that $\bsy{x} \in \mathbb{R}^n$ is sparse whenever $\| \bsy{x} \|_0 \ll n$. Moreover, $\bsy{x}$ is said to be $s$-sparse whenever $\| \bsy{x} \|_0 \leq s$. \\

Let us consider a collection of $K$ signals $\bsy{f}_k \in \mathbb{R}^{n}$ ($1 \leq k \leq K$) that are sparse in an orthonormal basis, \textit{i.e.}, $\bsy{f}_k = \bsy{\Psi} \bsy{x}_k$ where  $\bsy{\Psi} \in \mathbb{R}^{n \times n}$ represents the orthonormal basis and $\bsy{x}_k \in \mathbb{R}^{n}$ is the sparse coefficient vector of $\bsy{f}_k$ expressed in $\bsy{\Psi}$.\\

We now consider a unique linear measurement process to recover each one of the $K$ sparse coefficient vector $\bsy{x}_k$. Moreover, the measurement process is assumed to deliver a number of observations $m$ significantly lower than $n$. Additive measurement noise vectors $\bsy{e}_k \in \mathbb{R}^{m}$ are also accounted for. Formally, the latter statements rewrite
\begin{equation}\label{eq:introSigModel1}
\bsy{y}_k = \bsy{\Phi} \bsy{\Psi} \bsy{x}_k + \bsy{e}_k
\end{equation}
where the \textit{measurement matrix} $\bsy{\Phi} \in \mathbb{R}^{m \times n}$ denotes the linear measurement process and the \textit{measurement vectors} $\lbrace \bsy{y}_k \rbrace_{1 \leq k \leq K}$ gather the observations.\\ 

Since the number of observations $m$ is lower than $n$, arbitrary vectors $\bsy{x}_k$ cannot be recovered from $\bsy{y}_k$, even in the noiseless case. However, in the noiseless case, it has been shown \cite{candes2005decoding, rudelson2005geometric, donoho2006compressed, candes2006near, candes2006robust} that $\bsy{x}_k$ can be recovered provided that it is sparse enough, \textit{i.e.}, the cardinality of its support is below a certain threshold, and that the measurement matrix $\bsy{\Phi}$ exhibits specific properties such as the restricted isometry property that is described afterwards.\\

The orthonormal basis $\bsy{\Psi}$ is often assumed to be the canonical basis. The reason that explains this simplification relies on the fact the the measurement matrix $\bsy{\Phi}$ is usually generated as a realization of a subgaussian random matrix. It can be shown that such random matrices are well-condtionned for sparse support recovery, even when multiplied by orthonormal matrices, \textit{i.e.}, $\bsy{\Phi} \bsy{\Psi} = \bsy{\Phi}'$ remains a subgaussian random matrix. This phenomenon is more thoroughly discussed in \cite[Section 9.1]{foucart2013mathematical}. Thereby, the signal model will be simplified to
\begin{equation}\label{eq:introSigModel2}
\bsy{y}_k = \bsy{\Phi} \bsy{x}_k + \bsy{e}_k.
\end{equation}


If several sparse signals $\bsy{x}_k$ share similar supports, then it is interesting to simultaneously recover their joint support $\bigcup_{1 \leq k \leq K} \mathrm{supp}(\bsy{x}_k)$ instead of performing $K$ independent and possibly different estimations of the support of every vector $\bsy{x}_k$. The reason that explains why such a strategy yields performance improvements is twofold:
\begin{enumerate}
\item It has been shown \cite{gribonval2008atoms} that when the sparse vectors $\bsy{x}_k$ share a similar support whose associated entries are highly variable from one vector to another, then the probability of correct support recovery increases with the number of available measurement vectors $K$.
\item In the noisy setting, the vectors $\bsy{e}_k$ are often statistically independent and isotropic which suggests that a joint support recovery procedure could be capable of filtering them. This property is  part of what the theoretical results and simulations of this paper establish.
\end{enumerate}

Once the joint support has been recovered, $\bsy{x}_k$ is easily recovered by solving a least squares problem involving only the non-zero entries of $\bsy{x}_k$ and the associated column vectors from $\bsy{\Phi}$.

\subsection{Signal model} \label{subsec:signalModel}

We wish to provide here a formal and precise statement of the signal model to be used in the rest of this paper.\\

As mentioned previously, we consider $K$ measurement vectors $\bsy{y}_k$ that are generated on the basis of $K$ sparse signals $\bsy{x}_k$ whose supports are similar. 
We consider the following MMV model:

\begin{equation}\label{eq:MMVSignalModel}
\bsy{Y} = \bsy{\Phi} \bsy{X} + \bsy{E}
\end{equation}
where $\bsy{Y} = \big( \bsy{y}_1, \cdots, \bsy{y}_K \big)$ is composed of the $K$ measurement vectors $\bsy{y}_k \in \mathbb{R}^m$. $\bsy{X} = \big(\bsy{x}_1, \cdots, \bsy{x}_K \big)$ comprises the $K$ sparse signals $\bsy{x}_k \in \mathbb{R}^n$. Finally, the columns of $\bsy{E} = \big( \bsy{e}_1, \cdots, \bsy{e}_K \big)$ correspond to measurement errors. Each error vector $\bsy{e}_k$ is distributed as $\mathcal{N}(0, \sigma_k^2 \bsy{I}_{m \times m})$ and all of them are statistically independent. The purpose of (\ref{eq:MMVSignalModel}) is to aggregate the $K$ equations defined in (\ref{eq:introSigModel2}) into a single relation. This representation will be preferred throughout the rest of this paper. \\

Before going further, we will point out that the mathematical problem of the joint support recovery is equivalent to finding the columns of $\bsy{\Phi}$ that enable one to fully express $\bsy{\Phi} \bsy{X}$. Thereby, $\bsy{\Phi}$ may be seen as a \textit{dictionary matrix} whose columns $\lbrace \bsy{\phi}_j \rbrace_{j \in \lbrack n \rbrack}$ are the \textit{atoms} of the associated dictionary. The problem of joint support recovery then boils down to determining which atoms to choose to simultaneously express the $K$ measurement vectors as their linear combinations. Although viewing $\bsy{\Phi}$ as a measurement process is well suited to the description of typical applications, the dictionary approach is more appropriate for the sake of presenting the mathematical results and will thus be adopted for the rest of this paper. However, the term \textit{measurement vector} will not be replaced to stay consistent with the standard MMV terminology. \\

The dictionary matrix $\bsy{\Phi}$ is assumed to satisfy the restricted isometry property of an order equal to or higher than the cardinality of the joint support $S = \cup_{k=1}^{K} \mathrm{supp}(\bsy{x}_k)$. Moreover, it will be assumed that each column of this matrix exhibits a unit $\ell_2$ norm. We briefly review two procedures to obtain a dictionary matrix that satisfies both properties above with high probability:
\begin{enumerate}
\item Generate a random Gaussian matrix and then normalize the $\ell_2$ norm of its columns. In such a way, the atoms are uniformly distributed on the unit hypersphere $\mathbb{S}^{n-1}$ of dimension $n$. It is possible to show that this class of random matrices satisfies the restricted isometry property with high probability (see \cite{adamczak2011restricted} for example).
\item Generate a matrix whose entries are statistically independent Rademacher random variables. Each column of the resulting matrix is then normalized by multiplying the matrix by $1/\sqrt{m}$ to obtain atoms exhibiting unit $\ell_2$ norms.
\end{enumerate}

\subsection{Applications} \label{subsec:applications}

The typical scenario associated with signal models (\ref{eq:introSigModel2}) and (\ref{eq:MMVSignalModel}) is depicted in Figure~\ref{fig:FirstScenario}. The idea is to observe a physical quantity, \textit{e.g.}, a chemical composition, a wireless signal, etc.  at different locations and/or time instants by means of $K$ \textit{sensing nodes} $N_k$ whose only purpose is to acquire observations, \textit{i.e.}, measurement vectors, and repatriate them to a central node (CN) that will simultaneously process all the data. In such a configuration, the sensing nodes are generally cheap and exhibit very limited computational capabilities and power consumption while the central node is more costly because it has to deliver higher performance.\\


\begin{figure}[!h]
\centering
\includegraphics[scale=1.1]{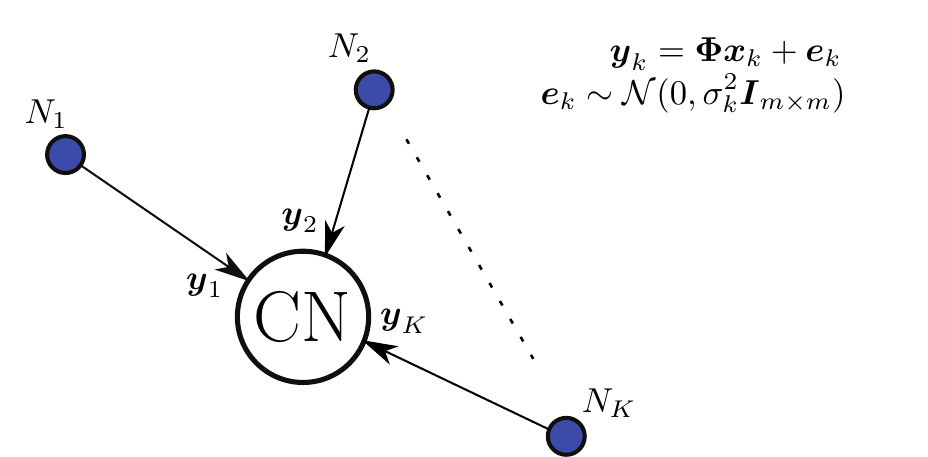}
\caption{First scenario -- $K$ nodes with different noise levels generate $K$ measurement vectors on which joint estimation of the sparse support $\cup_{k=1}^{K} \mathrm{supp}(\bsy{x}_k)$ can be performed}
\label{fig:FirstScenario}
\end{figure}

Although several applications of the signal model (\ref{eq:introSigModel2}) are presented in \cite[Section 3.3]{baron2009distributed}, we propose to take as an example the spectrum sensing problem. Spectrum sensing aims at scanning multi-gigahertz electromagnetic (EM) spectrums at a rate that is below that of Nyquist. The reason that motivates this objective is that, although most of the available frequency bands are licensed to specific users and thus costly to acquire, it has been observed that the spectrum occupancy is limited, \textit{i.e.}, the spectrum is sparse in the frequency domain at a given time instant and location. Therefore, it would be interesting to observe this spectrum through an appropriate linear measurement process, as described in Equation~(\ref{eq:introSigModel2}), and then use algorithms tailored for CS problems to determine, in real-time, which frequency bands are free and can be used to transmit information.\\

In this application, each entry of the sparse signals $\bsy{x}_k$ represents the power of a given frequency band. Since most of the spectrum is assumed to be unused at a given time instant and location, the vectors $\bsy{x}_k$ are expected to be sparse. Although the nodes should ideally be exposed to the same spectrum, this is not the case in practice because of the Rayleigh fading that strongly attenuates some frequency bands, thus being invisible to some nodes. In multiple input multiple output (MIMO) wireless communications, this issue is circumvented by placing the receivers sufficiently far away from one another so that the fading becomes statistically independent for each node and thus highly unlikely to strongly attenuate the same frequency band for every node. In a likewise fashion, the same solution will work for our framework since occasional ``holes'' will reveal to be a nonissue when performing joint decisions.\\

Finally, the different nodes may exhibit different noise levels because of discrepancies in the fabrication process or because the hardware components (\textit{e.g.}, amplifiers, multipliers, filters) of each node are different. This last remark justifies the multiple noise variances hypothesis of this paper.

\section{Towards a weighted greedy algorithm}\label{sec:SOMPandSOMPNS}

The two next subsections present two methods for addressing the problem of joint support recovery envisioned in Section~\ref{subsec:signalModel}. The first method, SOMP, is standard in the literature but does not include noise stabilization. The second method, our contribution, generalizes the first one by multiplying each measurement vector $\bsy{y}_k$ ($1 \leq k \leq K$) by a weight $q_k \geq 0$.

\subsection{Simultaneous orthogonal matching pursuit}

The original OMP algorithm \cite{mallat1993matching, tropp2004greed} has been generalized in several ways to deal with matrix signals $\bsy{Y} \in \mathbb{R}^{m \times K}$,  \textit{i.e.}, MMV problems. Simultaneous orthogonal matching pursuit (SOMP) is a possible generalization of OMP \cite{chen2006theoretical, tropp2005simultaneous, tropp2006algorithms}.\\

SOMP is a greedy algorithm that provides approximate solutions to the joint support recovery problem by successively picking atoms from $\bsy{\Phi}$ to simultaneously approximate the $K$ measurement vectors $\bsy{y}_k \in \mathbb{R}^{n}$. SOMP is described in Algorithm \ref{alg:SOMP}. \\

\begin{figure}[!h]
\textsc{Algorithm \refstepcounter{algoCounter}\label{alg:SOMP}\arabic{algoCounter}}:\\ 
Simulatenous orthogonal matching pursuit (SOMP)\\

\vspace{-2mm}
\begin{boxedalgorithmic}
\small
\REQUIRE $\bsy{Y} \in \mathbb{R}^{m \times K}$, $\bsy{\Phi} \in \mathbb{R}^{m \times n}$, $s \geq 1$
\STATE Initialization: $\bsy{R}^{(0)} \leftarrow \bsy{Y}$ and $S_0 \leftarrow \emptyset$
\STATE $t \leftarrow 0$
\WHILE{$t < s$}
\STATE Determine the atom of $\bsy{\Phi}$ to be included in the support: \\ $j_t \leftarrow \mathrm{argmax}_{j} ( \| (\bsy{R}^{(t)})^{\mathrm{T}} \bsy{\phi}_j \|_1 )$
\STATE Update the support : $S_{t+1} \leftarrow S_{t} \cup \left\lbrace j_t \right\rbrace$
\STATE Projection of each measurement vector onto $\mathrm{span}(\boldsymbol{\Phi}_{S_{t+1}})$: \\$\bsy{Y}^{(t+1)} \leftarrow \boldsymbol{\Phi}_{S_{t+1}} \boldsymbol{\Phi}_{S_{t+1}}^{+} \bsy{Y}$
\STATE Projection of each measurement vector onto $\mathrm{span}(\boldsymbol{\Phi}_{S_{t+1}})^{\perp}$~: \\ $\bsy{R}^{(t+1)} \leftarrow \bsy{Y} - \bsy{Y}^{(t+1)}$
\STATE $t \leftarrow t + 1$
\ENDWHILE
\RETURN $S_s$ \COMMENT{Support at last step}
\end{boxedalgorithmic}
\end{figure}

We now explain how SOMP proceeds. The residual at iteration $t$, denoted by $\bsy{R}^{(t)}$, consists of the projection of each one of the original signals $\bsy{y}_k$ onto the orthogonal complement of $\mathrm{span}(\bsy{\Phi}_{S_t})$. In such a way, the residual is orthogonal to every atom that has been chosen so far. Initially, the residual is chosen equal to the original signal. The decision on which atom to choose (step 4) is based on the sum of the inner products of every atom $\bsy{\phi}_{j}$  with each residual measurement vector $\bsy{r}^{(t)}_k$ (where $\bsy{r}^{(t)}_k$ refers to the $k$-th column of $\bsy{R}^{(t)}$) since
\begin{equation}
\| (\bsy{R}^{(t)})^{\mathrm{T}} \bsy{\phi}_j \|_1 = \sum_{k=1}^{K} | \langle \bsy{\phi}_{j},  \bsy{r}^{(t)}_k \rangle |.
\end{equation}

The index of the atom maximizing the $\ell_1$ norm is included in the support (step 5). Then, the original signal $\bsy{Y}$ is projected onto the  orthogonal complement of $\mathrm{span}(\boldsymbol{\Phi}_{S_{t+1}})$ (steps 6 and 7). \\

In this setting, SOMP stops after exactly $s$ iterations. However, it is worth mentioning that the stopping criterion usually comprises a criterion based on the number of iterations as well as another one relying on the norm of the residual, \textit{i.e.}, if the norm of the residual is below a certain threshold, then OMP stops. Different norms can be used for the second criterion but these considerations will not be further discussed in this paper. The interested reader can consult \cite{cai2011orthogonal} for related matters.  \\

Furthermore, maximizing the $\ell_1$ norm in step $4$ is not the unique choice. Other authors have investigated different norms, \textit{e.g.}, the $\ell_2$ and $\ell_{\infty}$ norms. Nevertheless, some numerical simulations reveal that the choice of the norm has very little effect on the performance (see \cite[Figure 3]{chen2006theoretical}).

\subsection{SOMP-NS: A weighted greedy algorithm}

We now present the development of a noise stabilization strategy to be used in conjunction with SOMP that has low computational requirements. The equivalent new algorithm is referred to as SOMP-NS where NS stands for \textit{noise stabilization}. Algorithm~\ref{alg:SOMPNS} describes the first form of SOMP-NS.\\

This novel algorithm is a  generalization of SOMP that weights the impact of each measurement vector within matrix $\bsy{Y} \in \mathbb{R}^{m \times K}$ on the decisions performed at each iteration.\\

\begin{figure}[!h]
\textsc{Algorithm \refstepcounter{algoCounter}\label{alg:SOMPNS}\arabic{algoCounter}}:\\ 
SOMP with noise stabilization (SOMP-NS) -- First form\\

\vspace{-2mm}
\begin{boxedalgorithmic}
\small
\REQUIRE $\bsy{Y} \in \mathbb{R}^{m \times K}$, $\bsy{\Phi} \in \mathbb{R}^{m \times n}$, $\lbrace q_k \rbrace_{1 \leq k \leq K}$, $s \geq 1$
\STATE $\bsy{R}^{(0)} \leftarrow \bsy{Y}$ and $S_0 \leftarrow \emptyset$ \COMMENT{Initialization}
\STATE $t \leftarrow 0$
\WHILE{$t < s$}
\STATE Determine the atom of $\bsy{\Phi}$ to be included in the support:\\
$j_t \leftarrow \mathrm{argmax}_{j} \left( \sum_{k=1}^{K} \left| \left\langle \bsy{r}^{(t)}_k , \bsy{\phi}_{j} \right\rangle  \right| q_k \right)$
\STATE $S_{t+1} \leftarrow S_{t} \cup \left\lbrace j_t \right\rbrace$
\STATE Projection of each measurement vector onto $\mathrm{span}(\boldsymbol{\Phi}_{S_{t+1}})$:\\
$\bsy{Y}^{(t+1)} \leftarrow \boldsymbol{\Phi}_{S_{t+1}} \boldsymbol{\Phi}_{S_{t+1}}^{+} \bsy{Y}$
\STATE Projection of each measurement vector onto $\mathrm{span}(\boldsymbol{\Phi}_{S_{t+1}})^{\perp}$:\\
$\bsy{R}^{(t+1)} \leftarrow \bsy{Y} - \bsy{Y}^{(t+1)}$
\STATE $t \leftarrow t + 1$
\ENDWHILE
\RETURN $S_s$ \COMMENT{Support at last step}
\end{boxedalgorithmic}
\end{figure}

SOMP-NS is actually very close to SOMP. Both algorithms decide on which atom to pick on the basis of a sum of absolute values of inner products, each term in the sum depending only on one measurement vector. SOMP gives the same importance to each measurement vector whereas its weighted counterpart introduces weights $q_k \geq 0$ ($1 \leq k \leq K$) so as to give more or less importance to each measurement vector.\\

A second form of SOMP-NS that is more computationally efficient is available in Algorithm \ref{alg:SOMPNSv2}. In the second form, \textbf{SOMP}($\bsy{Y}, \bsy{\Phi}, s$) refers to the regular SOMP algorithm described in Algorithm \ref{alg:SOMP}.

\begin{figure}[!h]
\textsc{Algorithm \refstepcounter{algoCounter}\label{alg:SOMPNSv2}\arabic{algoCounter}}:\\ 
SOMP with noise stabilization (SOMP-NS) -- Second form\\

\vspace{-2mm}
\begin{boxedalgorithmic}
\small
\REQUIRE $\bsy{Y} \in \mathbb{R}^{m \times K}$, $\bsy{\Phi} \in \mathbb{R}^{m \times n}$, $\lbrace q_k \rbrace_{1 \leq k \leq K}$, $s \geq 1$
\STATE $\bsy{Q} \leftarrow \mathrm{diag}(q_1, \dots, q_K)$
\STATE The columns of $\bsy{Y}$ are weighted beforehand: $\bsy{Y} \leftarrow \bsy{Y} \bsy{Q}$ 
\STATE Apply the regular SOMP algorithm: $S \leftarrow $ \textbf{SOMP}($\bsy{Y}, \bsy{\Phi}, s$)
\RETURN $S$
\end{boxedalgorithmic}
\end{figure}

Both forms are equivalent since $ | \langle \bsy{r}^{(t)}_k , \bsy{\phi}_{j} \rangle  | q_k = | \langle \bsy{r}^{(t)}_k q_k , \bsy{\phi}_{j} \rangle |   = | \langle (\bsy{R}^{(t)} \bsy{Q})_k , \bsy{\phi}_{j} \rangle | $ and $(\bsy{R}^{(t)} \bsy{Q})_k = ( (\bsy{I} - \boldsymbol{\Phi}_{S_{t}} \boldsymbol{\Phi}_{S_{t}}^{+})\bsy{Y} \bsy{Q})_k = (\bsy{I} - \boldsymbol{\Phi}_{S_{t}} \boldsymbol{\Phi}_{S_{t}}^{+}) (\bsy{Y} \bsy{Q})_k$. The last equality holds true because the orthogonal projector $(\bsy{I} - \boldsymbol{\Phi}_{S_{t}} \boldsymbol{\Phi}_{S_{t}}^{+})$ is applied to each column of $\bsy{Y} \bsy{Q}$ separately. Although the residual matrix $\bsy{R}^{(t)}$ is different for the two forms of SOMP-NS, the difference only consists in a multiplicative term $q_k$ for each column $k$ of $\bsy{R}^{(t)}$ and it does not modify the atoms added to the estimated support.

\section{Technical background for the theoretical analysis}\label{sec:background}

This section briefly explains the mathematical tools needed to conduct the theoretical analysis.
\subsection{Dictionary and coherence}

Let $\bsy{\Phi} \in \mathbb{R}^{m \times n}$ be a matrix composed of $n$ column vectors $\bsy{\phi}_j \in \mathbb{R}^{m}$. Moreover, $\forall j \in \left[n\right]$, $\|\bsy{\phi}_j\|_2 = 1$.  The coherence $\mu(\bsy{\Phi})$ of $\bsy{\Phi}$ is defined as
 
\begin{equation}
\mu(\bsy{\Phi}) := \max_{i \neq j} \left| \left\langle \bsy{\phi}_i, \bsy{\phi}_j \right\rangle \right| = \max_{i \neq j} \left| \bsy{\phi}_i^{\mathrm{T}} \bsy{\phi}_j \right|
\end{equation}
Similarly, one can define the cumulative coherence function (also referred to as the Babel function) of $\bsy{\Phi}$ as
\begin{equation}
\mu_1(\bsy{\Phi}, p) = \max_{\Lambda, \left| \Lambda \right| = p} \max_{j \not\in  \Lambda} \sum_{i \in \Lambda} \left| \langle \bsy{\phi}_i, \bsy{\phi}_j \rangle \right|.
\end{equation}
Trivially, $\mu(\bsy{\Phi}) = \mu_1(\bsy{\Phi}, 1)$. For the sake of brevity, $\mu(\bsy{\Phi})$ and $\mu_1(\bsy{\Phi}, p)$ will be respectively denoted as $\mu$ and $\mu_1(p)$ from now on.\\
It is also frequent to use the bound
\begin{equation}\label{eq:mu1tomuBound}
\mu_1(p) \leq p \mu.
\end{equation}

\subsection{Support and norms}

If $\bsy{U} = \big( \bsy{u}_1, \; \bsy{u}_2, \;  \dots, \; \bsy{u}_n \big) \in \mathbb{R}^{m \times n}$ where $\bsy{u}_j \in \mathbb{R}^m$ ($1 \leq j \leq n$), then $\supp(\bsy{U}) := \bigcup_{j \in \left[n \right]}  \supp(\bsy{u}_j)$. The definition above extends the notion of support to matrices, \textit{i.e.}, the support of a matrix is the union of the supports of its columns. Similarly to the vector case, if $\bsy{U} = \big( \bsy{u}_1, \;  \bsy{u}_2, \;  \dots, \; \bsy{u}_n \big) \in \mathbb{R}^{m \times n}$, then $
\|\bsy{U} \|_0 := |\supp(\bsy{U})| = \left|\bigcup_{j \in \left[n \right]} \supp(\bsy{u}_j) \right|$.\\

We define $\|\bsy{\phi}\|_{\mathrm{min}} := \min_{i \in \left[m \right]} |\phi_i|$ which is not a norm. 
%
Furthermore, for  $\bsy{\Phi} \in \mathbb{R}^{m \times n}$, $\|\bsy{\Phi} \|_{p \rightarrow q}$ is defined as \cite[Equation A.8]{foucart2013mathematical} $
\|\bsy{\Phi} \|_{p \rightarrow q} := \sup_{\|\bsy{\phi} \|_p = 1} \left\|\bsy{\Phi} \bsy{\phi}\right\|_q$ where $1 \leq p, q \leq \infty$.
For the sake of simplifying the notations, we will adopt the convention $\|\bsy{\Phi} \|_{p} := \|\bsy{\Phi} \|_{p \rightarrow p}$. It can be shown that, for $\bsy{\Phi} \in \mathbb{R}^{m \times n}$, \cite[Lemma A.5]{foucart2013mathematical}
\begin{equation}\label{eq:inftyToOneNorm}
\|\bsy{\Phi} \|_{\infty} = \max_{i \in \left[m \right]} \sum_{j=1}^{n} \left| \Phi_{i,j} \right| = \|\bsy{\Phi}^{\mathrm{T}} \|_{1}.  
\end{equation}
\subsection{Sparse rank}\label{sec:sparkDef}

The sparse rank \cite{donoho2003optimally} of $\bsy{\Phi} \in \mathbb{R}^{m \times n}$, denoted by $\mathrm{spark}(\bsy{\Phi})$, is given by 
\begin{equation}
\mathrm{spark}(\bsy{\Phi}) = \min_{\bsy{u} \neq \bsy{0}} \|\bsy{u} \|_0  \text{  s.t.  }  \bsy{\Phi} \bsy{u} = \bsy{0}. 
\end{equation}
$\mathrm{spark}(\bsy{\Phi})$ is thus the smallest number of linearly dependent columns of $\bsy{\Phi}$. Equivalently, it means that, if $\mathrm{spark}(\bsy{\Phi}) > s$, then, for every support $S$ such that $|S| \leq s$, the columns of $\bsy{\Phi}_{S}$ are linearly independent, \textit{i.e.}, $\bsy{\Phi}_{S}$ has full column rank.\\

Note that computing $\mathrm{spark}(\bsy{\Phi})$ for a given matrix $\bsy{\Phi}$ is not computationally tractable as this problem is even harder to solve than a $\ell_0$ norm minimization problem which is known to be NP-hard \cite{elad2010sparse}.

\subsection{Restricted isometry property}\label{subsec:RIPDef}

The matrix $\bsy{\Phi} \in \mathbb{R}^{m \times n}$ satisfies the so-called Restricted Isometry Property (RIP) \cite{candes2006stable} of order $s$ if there exists a constant $\delta_s < 1$ such that
\begin{equation}\label{eq:RIPDef}
(1 - \delta_s) \|\bsy{u}\|_2^2 \leq \|\bsy{\Phi} \bsy{u}\|_2^2 \leq (1 + \delta_s) \|\bsy{u}\|_2^2
\end{equation}
for all $s$-sparse vectors $\bsy{u}$. The smallest $\delta_s$ that satisfies Equation~(\ref{eq:RIPDef}) is called the Rectricted Isometry Constant (RIC) of order $s$. The RIC of order $s$ can theoretically be computed by considering
\begin{align*}
U_s = \max_{\substack{S \subseteq \left[n \right] \\ |S| = s}} \lambda_{\mathrm{max}}(\bsy{\Phi}_{S}^{\mathrm{T}} \bsy{\Phi}_{S}) - 1 \\
L_s = 1 - \min_{\substack{S \subseteq \left[n \right] \\ |S| = s}} \lambda_{\mathrm{min}}(\bsy{\Phi}_{S}^{\mathrm{T}} \bsy{\Phi}_{S})
\end{align*}
where $\lambda_{\mathrm{min}}$ and $\lambda_{\mathrm{max}}$ denote the smallest and largest eigenvalues respectively. Then,
\begin{equation}\label{eq:RICasMax}
\delta_s = \max(U_s, L_s).
\end{equation}

Evaluating $U_s$ and $L_s$ is not computationally tractable as it requires to determine the smallest and largest eigenvalues of $\binom{n}{s}$ matrices of size $m \times s$. In particular, this problem has been shown to be NP-Hard in the general case \cite{tillmann2013computational}.  It is therefore interesting to find an upper bound on $\delta_s$ that can be easily computed.

\begin{lemSA}[Coherence bound on the RIP]\label{lem:RICLambda}
If $\mu (s-1) < 1$, then 
\begin{align*}
\max_{\substack{S \subseteq \left[m \right] \\ |S| = s}} \lambda_{\mathrm{max}}(\bsy{\Phi}_{S}^{\mathrm{T}} \bsy{\Phi}_{S}) \leq 1 + \mu_1(s-1) \leq 1 + (s-1) \mu\\
\min_{\substack{S \subseteq \left[m \right] \\ |S| = s}} \lambda_{\mathrm{min}}(\bsy{\Phi}_{S}^{\mathrm{T}} \bsy{\Phi}_{S}) \geq 1 - \mu_1(s-1) \geq 1 - (s-1) \mu.
\end{align*}
The first inequality of each line holds if $\mu_1 (s-1) < 1$.
\end{lemSA}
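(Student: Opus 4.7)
The plan is to apply the Gershgorin disk theorem to the Gram matrix $\bsy{G}_S := \bsy{\Phi}_S^{\mathrm{T}} \bsy{\Phi}_S$ for a fixed support $S \subseteq [n]$ with $|S| = s$, and then take maxima/minima over all such $S$. Since each atom is $\ell_2$-normalized, the diagonal entries of $\bsy{G}_S$ are all equal to $1$, and its off-diagonal entries are the inner products $\langle \bsy{\phi}_i, \bsy{\phi}_j \rangle$ for distinct $i,j \in S$. This setup is what makes coherence-type bounds natural here.

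First I would recall Gershgorin's theorem: every eigenvalue $\lambda$ of $\bsy{G}_S$ lies in at least one disk of the form $\{z : |z - 1| \leq R_i\}$, where $R_i := \sum_{j \in S, j \neq i} |\langle \bsy{\phi}_i, \bsy{\phi}_j \rangle|$ is the Gershgorin radius of row $i \in S$. Next, I would bound each $R_i$ using the definition of the Babel function. Fixing any $i \in S$ and taking $\Lambda = S \setminus \{i\}$ (which has cardinality $s-1$ and does not contain $i$), the definition
\[
\mu_1(s-1) \;=\; \max_{\substack{\Lambda \subseteq [n] \\ |\Lambda| = s-1}} \; \max_{j \notin \Lambda} \sum_{\ell \in \Lambda} |\langle \bsy{\phi}_\ell, \bsy{\phi}_j \rangle|
\]
yields directly that $R_i \leq \mu_1(s-1)$. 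Hence every eigenvalue of $\bsy{G}_S$ satisfies $1 - \mu_1(s-1) \leq \lambda \leq 1 + \mu_1(s-1)$. Since $\bsy{G}_S$ is symmetric (so its eigenvalues are real), this gives
\[
\lambda_{\max}(\bsy{G}_S) \leq 1 + \mu_1(s-1), \qquad \lambda_{\min}(\bsy{G}_S) \geq 1 - \mu_1(s-1).
\]
Taking the maximum (resp. minimum) over all $S$ with $|S| = s$ preserves these bounds, proving the leftmost inequality on each line. The condition $\mu_1(s-1) < 1$ is what makes the lower bound strictly positive (and in particular ensures $\bsy{G}_S$ is invertible, so the eigenvalue bounds describe a nontrivial isometry-type statement).

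Finally, to obtain the second inequality on each line, I would simply invoke inequality (\ref{eq:mu1tomuBound}), $\mu_1(s-1) \leq (s-1)\mu$, and substitute it into both bounds; this is also precisely the reason why the hypothesis $\mu(s-1) < 1$ suffices in place of $\mu_1(s-1) < 1$ for the weaker form of the bound.

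There is no real obstacle here: the argument is a textbook Gershgorin estimate. The only thing to be careful about is aligning the indexing of the Babel function with the Gershgorin row sum — specifically verifying that choosing $\Lambda = S\setminus\{i\}$ and evaluating at $j = i$ is a legal choice in the definition of $\mu_1(s-1)$, which it is since $i \notin \Lambda$ and $|\Lambda| = s-1$.
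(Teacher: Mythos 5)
Your proof is correct and complete. The paper offers no argument of its own for this lemma---its ``proof'' is a one-line citation to \cite{cai2011orthogonal}---and the Gershgorin-disk estimate you give (unit diagonal of the Gram matrix $\bsy{\Phi}_S^{\mathrm{T}}\bsy{\Phi}_S$ from the $\ell_2$-normalization of the atoms, each off-diagonal row sum bounded by $\mu_1(s-1)$ via the legal choice $\Lambda = S\setminus\{i\}$ with $j=i$, and finally $\mu_1(s-1)\le(s-1)\mu$ from Equation~(\ref{eq:mu1tomuBound}) for the weaker form) is exactly the standard argument underlying that citation, so you have in effect supplied the self-contained proof the paper delegates to a reference; your side remark that the eigenvalue inclusions hold unconditionally, with $\mu_1(s-1)<1$ needed only to make the lower bound nontrivial, is also accurate.
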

\begin{proof}
This result is obtained in \cite{cai2011orthogonal}. 
\end{proof}%
A consequence of Lemma \ref{lem:RICLambda} is that, if $\mu (s-1) < 1$, then
\begin{equation}\label{eq:RICToCoherence}
\delta_s \leq \mu_1(s-1) \leq (s-1) \mu.
\end{equation}
It is worth noticing that if $\delta_s < 1$, then $\mathrm{spark}(\bsy{\Phi}) > s$. The reason is that $\delta_s < 1$ implies that $\min_{S \subseteq \left[m \right], |S| = s} \lambda_{\mathrm{min}}(\bsy{\Phi}_{S}^{\mathrm{T}} \bsy{\Phi}_{S}) > 0$ which in turn implies that, for every support $S$ of cardinality equal to or less than $s$, $\bsy{\Phi}_S$ has full column rank.

\subsection{Lipschitz functions}
A Lipschitz function $f : \mathbb{R}^K \rightarrow \mathbb{R} : \bsy{g} \mapsto f(\bsy{g})$ with regards to metric $\ell_2$ is a function that satisfies
\begin{equation}
\exists L > 0 : \forall \bsy{x}, \bsy{y} \in \mathbb{R}^K, \, \left| f(\bsy{x}) - f(\bsy{y}) \right| \leq L \left\| \bsy{x} - \bsy{y} \right\|_2 .
\end{equation}
The constant $L$ is called the Lipschitz constant.


\section{SOMP-NS: general theoretical results}\label{sec:section3}
We now wish to derive basic theoretical results needed to analyze the performance of SOMP-NS in both the noiseless and noisy cases. The theoretical framework developed hereafter will be used in Section~\ref{sec:recovGuarantees} to give lower bounds on the probability of partial or full recovery using SOMP-NS when the additive noise is Gaussian.

\subsection{Support-dependent exact recovery criterion without noise}\label{subsec:ERCNoiseless}

We now wish to build an exact recovery criterion (ERC) that guarantees, for dictionary matrix $\bsy{\Phi}$, the recovery of every sparse signal of support $S$ by means of SOMP-NS. The result that will be obtained is similar to an earlier result of J.~A.~Tropp, sometimes referred to as Tropp's ERC \cite{tropp2004greed}. \\

First of all, it is worth noticing that the maximum correlation in step 4 of Algorithm \ref{alg:SOMPNS} can be written as 
\begin{equation}
\mathrm{max}_{j \in \left[ n \right]} \left( \sum_{k=1}^{K} \left| \left\langle \bsy{r}^{(t)}_k, \bsy{\phi}_{j} \right\rangle  \right| q_k \right) = \left\|\bsy{\Phi}^{\mathrm{T}} \bsy{R}^{(t)} \bsy{Q} \right\|_{\infty}
\end{equation}
where $\bsy{Q} = \mathrm{diag}\left(q_1, \ldots, q_K \right)$ is the diagonal matrix that contains the weights to be applied to each column vector of $\bsy{R}^{(t)}$. It is assumed that $q_k \geq 0$ ($1 \leq k \leq K$).\\

The next result allows to state the ERC for SOMP-NS. It will also be used to develop the exact recovery criterion for the noisy case later.

\begin{lem}[A lower bound on the maximal residual projection]\label{lem:ERC}
Let $\bsy{\Phi} \in \mathbb{R}^{m \times n}$ where $\bsy{\Phi}_S$ has full column rank for some support $S \subset \lbrack n \rbrack$. Moreover, $\bsy{P}^{(t)} = \bsy{\Phi}_{S_t} \bsy{\Phi}_{S_t}^+$ denotes the orthogonal projector onto $\mathrm{span}(\bsy{\Phi}_{S_t})$ where  $S_t \subseteq S$.  Let $\bsy{R}^{(t)} = (\bsy{I} - \bsy{P}^{(t)}) \bsy{Y} = (\bsy{I} - \bsy{P}^{(t)}) \bsy{\Phi} \bsy{X}$, $\bsy{Y} \in \mathbb{R}^{m \times K}$ and $\bsy{X} \in \mathbb{R}^{n \times K}$ where $K \geq 1$. It is assumed that $\mathrm{supp}(\bsy{X}) = S$. Under these conditions, the following inequality holds true
\begin{equation}
\|\bsy{\Phi}_S^{\mathrm{T}} \bsy{R}^{(t)} \bsy{Q} \|_{\infty} \left\|\bsy{\Phi}_S^+ \bsy{\Phi}_{\overline{S}} \right\|_{1} \geq \|\bsy{\Phi}_{\overline{S}}^{\mathrm{T}} \bsy{R}^{(t)} \bsy{Q} \|_{\infty}
\end{equation}
where $\overline{S}$ denotes the relative complement of $S$ with respect to $\left[n \right]$.
\end{lem}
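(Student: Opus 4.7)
My plan is to reduce the inequality to submultiplicativity of the induced $\ell_\infty \rightarrow \ell_\infty$ operator norm, after exhibiting the key factorization
\begin{equation*}
\bsy{\Phi}_{\overline{S}}^{\mathrm{T}} \bsy{R}^{(t)} \bsy{Q} \;=\; (\bsy{\Phi}_S^+ \bsy{\Phi}_{\overline{S}})^{\mathrm{T}}\, \bsy{\Phi}_S^{\mathrm{T}} \bsy{R}^{(t)} \bsy{Q}.
\end{equation*}
The starting observation is that every column of $\bsy{R}^{(t)}$ lies in $\mathrm{span}(\bsy{\Phi}_S)$: since $\mathrm{supp}(\bsy{X}) = S$, we have $\bsy{Y} = \bsy{\Phi}_S \bsy{X}_S$, and because $S_t \subseteq S$, the projector $\bsy{I}-\bsy{P}^{(t)}$ leaves $\mathrm{span}(\bsy{\Phi}_S)$ invariant. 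The full column rank of $\bsy{\Phi}_S$ makes $\bsy{\Phi}_S \bsy{\Phi}_S^+$ the orthogonal projector onto $\mathrm{span}(\bsy{\Phi}_S)$, which therefore fixes $\bsy{R}^{(t)}$, yielding the invariance $\bsy{R}^{(t)} = \bsy{\Phi}_S \bsy{\Phi}_S^+ \bsy{R}^{(t)}$.

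Substituting this invariance into $\bsy{\Phi}_{\overline{S}}^{\mathrm{T}} \bsy{R}^{(t)} \bsy{Q}$ and using $(\bsy{\Phi}_S^+)^{\mathrm{T}} = \bsy{\Phi}_S (\bsy{\Phi}_S^{\mathrm{T}} \bsy{\Phi}_S)^{-1}$, valid again because $\bsy{\Phi}_S$ has full column rank, produces the factorization displayed above. The diagonal weighting $\bsy{Q}$ multiplies on the right throughout and plays no active role in this rearrangement; it simply rides along.

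I would then take the induced $\|\cdot\|_\infty$ norm of both sides and apply submultiplicativity $\|\bsy{A}\bsy{B}\|_\infty \le \|\bsy{A}\|_\infty\,\|\bsy{B}\|_\infty$ to get
\begin{equation*}
\|\bsy{\Phi}_{\overline{S}}^{\mathrm{T}} \bsy{R}^{(t)} \bsy{Q}\|_\infty \;\le\; \|(\bsy{\Phi}_S^+ \bsy{\Phi}_{\overline{S}})^{\mathrm{T}}\|_\infty \, \|\bsy{\Phi}_S^{\mathrm{T}} \bsy{R}^{(t)} \bsy{Q}\|_\infty,
\end{equation*}
and then identify the middle factor through Equation~(\ref{eq:inftyToOneNorm}) as $\|(\bsy{\Phi}_S^+ \bsy{\Phi}_{\overline{S}})^{\mathrm{T}}\|_\infty = \|\bsy{\Phi}_S^+ \bsy{\Phi}_{\overline{S}}\|_1$, which is precisely the claimed bound after rearrangement.

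The main obstacle is not a computation but bookkeeping around the pseudoinverse: the full-column-rank hypothesis has to be invoked in both places where $\bsy{\Phi}_S^+$ is treated as a genuine inverse (once to make $\bsy{\Phi}_S \bsy{\Phi}_S^+$ an orthogonal projector onto $\mathrm{span}(\bsy{\Phi}_S)$, and once to apply the explicit formula $(\bsy{\Phi}_S^+)^{\mathrm{T}} = \bsy{\Phi}_S(\bsy{\Phi}_S^{\mathrm{T}} \bsy{\Phi}_S)^{-1}$), and one must verify that the inclusion $S_t \subseteq S$ is really what allows $\bsy{R}^{(t)}$ to stay inside $\mathrm{span}(\bsy{\Phi}_S)$. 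Once these two points are in place, the rest is a one-line matrix manipulation followed by a standard operator-norm inequality.
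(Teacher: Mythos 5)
Your proof is correct and follows essentially the same route as the paper's: both establish that the columns of $\bsy{R}^{(t)}$ remain in $\mathrm{span}(\bsy{\Phi}_S)$ (using $S_t \subseteq S$), insert the identity $\bsy{R}^{(t)} = \bsy{\Phi}_S \bsy{\Phi}_S^+ \bsy{R}^{(t)}$ via the orthogonal projector $\bsy{\Phi}_S \bsy{\Phi}_S^+$, and then apply submultiplicativity of the induced $\ell_\infty$ norm together with the identity $\|\bsy{A}\|_\infty = \|\bsy{A}^{\mathrm{T}}\|_1$ from Equation~(\ref{eq:inftyToOneNorm}). The only cosmetic difference is that you package the rearrangement as an explicit factorization of $\bsy{\Phi}_{\overline{S}}^{\mathrm{T}} \bsy{R}^{(t)} \bsy{Q}$ before taking norms, whereas the paper substitutes and bounds in one chain.
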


\begin{proof} 
Following the steps of \cite[Theorem 4.5]{chen2006theoretical}, the lemma is easily obtained. Each column of the matrix $\bsy{R}^{(t)}$ can be expressed as a linear combination of the columns of $\bsy{\Phi}_S$. The reason that explains this last statement is that $\bsy{Y}_k = (\bsy{\Phi} \bsy{X})_k = (\bsy{\Phi}_S \bsy{X}_S)_k \in \mathrm{span}(\bsy{\Phi}_S)$ and $((\bsy{I} - \bsy{P}^{(t)}) \bsy{Y})_k \in \mathrm{span}(\bsy{\Phi}_{S}) \cup \mathrm{span}(\bsy{\Phi}_{S_t}) = \mathrm{span}(\bsy{\Phi}_{S})$.\\

Moreover, $\bsy{\Phi}_S$ is guaranteed to have full column rank which implies that the Moore-Penrose pseudoinverse $\bsy{\Phi}_S^+$ is equal to $(\bsy{\Phi}_S^{\mathrm{T}} \bsy{\Phi}_S)^{-1} \bsy{\Phi}_S^{\mathrm{T}}$ and consequently that $ ((\bsy{\Phi}_S^+)^{\mathrm{T}} \bsy{\Phi}_S^{\mathrm{T}}) \bsy{R}^{(t)} =  \bsy{\Phi}_S \bsy{\Phi}_S^+ \bsy{R}^{(t)} = \bsy{R}^{(t)}$. Furthermore, it is easily established \cite[Lemma 4.4]{chen2006theoretical} that for two matrices $\bsy{A}$ and $\bsy{B}$, $\| \bsy{A} \bsy{B} \|_{\infty} \leq \| \bsy{A} \|_{\infty} \| \bsy{B} \|_{\infty}$.


Combining the results above yields
\begin{align*}
\|\bsy{\Phi}_{\overline{S}}^{\mathrm{T}} \bsy{R}^{(t)}  \bsy{Q} \|_{\infty} &= \|\bsy{\Phi}_{\overline{S}}^{\mathrm{T}} (\bsy{\Phi}_S^+)^{\mathrm{T}} \bsy{\Phi}_S^{\mathrm{T}} \bsy{R}^{(t)}  \bsy{Q} \|_{\infty} \\
 & \leq \|\bsy{\Phi}_{\overline{S}}^{\mathrm{T}} (\bsy{\Phi}_S^+)^{\mathrm{T}} \|_{\infty} \| \bsy{\Phi}_S^{\mathrm{T}} \bsy{R}^{(t)}  \bsy{Q} \|_{\infty}
\end{align*}
Finally, using Equation~(\ref{eq:inftyToOneNorm}) shows that $\| \bsy{\Phi}_{\overline{S}}^{\mathrm{T}} (\bsy{\Phi}_S^+)^{\mathrm{T}} \|_{\infty} = \left\|\bsy{\Phi}_S^+ \bsy{\Phi}_{\overline{S}} \right\|_{1}$ and concludes the proof. \end{proof}

We are now ready to provide an ERC for SOMP-NS.

\begin{thm}[ERC for the noiseless SOMP-NS algorithm]\label{thm:ERC}
Let $\bsy{\Phi} \in \mathbb{R}^{m \times n}$ and $\bsy{X} \in \mathbb{R}^{n \times K}$. If $\bsy{Y} = \bsy{\Phi} \bsy{X}$ and $\bsy{\Phi}_S$ has full column rank, then  a sufficient condition for SOMP-NS to properly retrieve the support $S$ of $\bsy{X}$ after exactly $|S|$ iterations is
\begin{equation}
\left\|\bsy{\Phi}_S^+ \bsy{\Phi}_{\overline{S}} \right\|_{1} < 1
\end{equation}
where $\overline{S}$ is the relative complement of $S$ with respect to $\left[n \right]$.
\end{thm}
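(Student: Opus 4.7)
The plan is to argue by induction on the iteration index $t$, showing that under the hypothesis $\|\bsy{\Phi}_S^+\bsy{\Phi}_{\overline{S}}\|_1 < 1$ every atom selected at step $4$ of Algorithm~\ref{alg:SOMPNS} belongs to $S\setminus S_t$, so that after $|S|$ iterations $S_{|S|}=S$. The induction hypothesis is $S_t\subseteq S$, which is trivially true at $t=0$ since $S_0=\emptyset$.

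For the inductive step, I would first note that $\bsy{R}^{(t)}$ lies column-wise in $\mathrm{span}(\bsy{\Phi}_S)$ (this is the observation used at the start of the proof of Lemma~\ref{lem:ERC}, relying on $S_t\subseteq S$). The selection rule chooses $j_t$ maximizing the $\infty$-norm of $\bsy{\Phi}^{\mathrm{T}}\bsy{R}^{(t)}\bsy{Q}$ over all indices $j\in[n]$. Indices $j\in S_t$ contribute zero because $\bsy{\phi}_j\perp\bsy{R}^{(t)}$ by construction of the orthogonal residual, so the maximum over $[n]$ equals the maximum of $\max_{j\in S\setminus S_t}$ and $\max_{j\in\overline{S}}$. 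It therefore suffices to prove the strict inequality
\begin{equation*}
\|\bsy{\Phi}_S^{\mathrm{T}}\bsy{R}^{(t)}\bsy{Q}\|_\infty > \|\bsy{\Phi}_{\overline{S}}^{\mathrm{T}}\bsy{R}^{(t)}\bsy{Q}\|_\infty,
\end{equation*}
which by Lemma~\ref{lem:ERC} follows from $\|\bsy{\Phi}_S^+\bsy{\Phi}_{\overline{S}}\|_1<1$ \emph{provided} the left-hand side is nonzero.

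The remaining verification, and the main delicate point, is to show that $\|\bsy{\Phi}_S^{\mathrm{T}}\bsy{R}^{(t)}\bsy{Q}\|_\infty>0$ whenever $S_t\subsetneq S$. I would argue this in two stages: first that $\bsy{R}^{(t)}\neq \bsy{0}$, and then that $\bsy{\Phi}_S^{\mathrm{T}}\bsy{R}^{(t)}\neq \bsy{0}$. For the first, writing $\bsy{Y}=\bsy{\Phi}_{S_t}\bsy{X}_{S_t}+\bsy{\Phi}_{S\setminus S_t}\bsy{X}_{S\setminus S_t}$, a vanishing residual would force $\bsy{\Phi}_{S\setminus S_t}\bsy{X}_{S\setminus S_t}\in\mathrm{span}(\bsy{\Phi}_{S_t})$; full column rank of $\bsy{\Phi}_S$ then forces $\bsy{X}_{S\setminus S_t}=\bsy{0}$, contradicting $\mathrm{supp}(\bsy{X})=S$. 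For the second stage, since $\bsy{R}^{(t)}\in\mathrm{span}(\bsy{\Phi}_S)$ can be written $\bsy{R}^{(t)}=\bsy{\Phi}_S\bsy{\alpha}$ with $\bsy{\alpha}\neq\bsy{0}$, and $\bsy{\Phi}_S^{\mathrm{T}}\bsy{\Phi}_S$ is invertible, we get $\bsy{\Phi}_S^{\mathrm{T}}\bsy{R}^{(t)}\neq\bsy{0}$; combined with the standing assumption that the weights $q_k$ are not all zero this yields the required strict positivity.

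The hardest step, more bookkeeping than mathematics, is making the strict inequality airtight and tracking the compatibility between the weight matrix $\bsy{Q}$ and the possibility that a particular column of $\bsy{\Phi}_S^{\mathrm{T}}\bsy{R}^{(t)}$ might vanish even while the full matrix does not; but since the ERC condition is required to hold robustly over all weightings considered by the algorithm, assuming $q_k>0$ for all $k$ (or at least compatibility with the non-zero columns of $\bsy{X}$) disposes of this without difficulty. Once the induction is closed, the conclusion $S_{|S|}=S$ is immediate because each iteration strictly enlarges $S_t$ by one element of $S$.
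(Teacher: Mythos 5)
Your proposal is correct and follows essentially the same route as the paper: induction on the iteration index, with Lemma~\ref{lem:ERC} used to show that the greedy selection ratio $\rho_t = \|\bsy{\Phi}_{\overline{S}}^{\mathrm{T}}\bsy{R}^{(t)}\bsy{Q}\|_{\infty}/\|\bsy{\Phi}_{S}^{\mathrm{T}}\bsy{R}^{(t)}\bsy{Q}\|_{\infty}$ stays below $1$ as long as only correct atoms have been chosen. The only difference is that you explicitly verify the non-degeneracy $\|\bsy{\Phi}_S^{\mathrm{T}}\bsy{R}^{(t)}\bsy{Q}\|_{\infty}>0$ (via $\bsy{R}^{(t)}\neq\bsy{0}$ and the full column rank of $\bsy{\Phi}_S$), a detail the paper leaves implicit.
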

\begin{proof} Let us assume that SOMP-NS has made correct decisions before iteration $t$. The greedy selection ratio (at iteration $t$) is defined as:
\begin{equation} \label{eq:greedySelecRatioDef}
\rho_t = \frac{\|\bsy{\Phi}_{\overline{S}}^{\mathrm{T}} \bsy{R}^{(t)} \bsy{Q} \|_{\infty}}{\|\bsy{\Phi}_{S}^{\mathrm{T}} \bsy{R}^{(t)} \bsy{Q} \|_{\infty}}.
\end{equation}
Clearly, $\rho_t < 1$ ensures that SOMP-NS performs a correct decision at iteration $t$ since it implies that the largest sum of inner products is obtained for one of the atoms belonging to the support $S$, \textit{i.e.}, $\|\bsy{\Phi}_{S}^{\mathrm{T}} \bsy{R}^{(t)} \bsy{Q} \|_{\infty} > \|\bsy{\Phi}_{\overline{S}}^{\mathrm{T}} \bsy{R}^{(t)} \bsy{Q} \|_{\infty}$. Since it is assumed that SOMP-NS has only made correct decisions so far, Lemma \ref{lem:ERC} shows that $\rho_t < 1$ holds true whenever $\left\|\bsy{\Phi}_S^+ \bsy{\Phi}_{\overline{S}} \right\|_{1} < 1$.\\

Furthermore, $\left\|\bsy{\Phi}_S^+ \bsy{\Phi}_{\overline{S}} \right\|_{1} < 1$ ensures that a correct atom is picked during iteration $0$. By induction, the hypothesis that correct decisions have been made so far will be satisfied for every iteration and full support recovery at iteration $|S|-1$ is thus guaranteed.
\end{proof}

It can be shown that Theorem~\ref{thm:ERC} is sharp in the sense that if the inequality $\left\|\bsy{\Phi}_S^+ \bsy{\Phi}_{\overline{S}} \right\|_{1} < 1$  is not satisfied for some support $S$, it is always possible to find a $\bsy{X}_{\mathrm{bad}}$ whose support is $S$ for which SOMP-NS identifies an incorrect atom at the first iteration. The sharpness property for SOMP-NS directly derives from \cite[Theorem 3.10]{tropp2004greed} which provides an equivalent property in the SMV case for OMP. Indeed, if OMP fails to choose a correct atom with vector $\bsy{x}_{\mathrm{bad}}$, then SOMP-NS also fails with $\bsy{X}_{\mathrm{bad}} = \big(\bsy{x}_{\mathrm{bad}},\; \dots, \; \bsy{x}_{\mathrm{bad}}\big)$ as, in this particular case, OMP with signal $\bsy{y} = \bsy{\Phi} \bsy{x}_{\mathrm{bad}}$ and SOMP-NS with signal $\bsy{Y} = \bsy{\Phi} \bsy{X}_{\mathrm{bad}}$ make the same decisions.\\

Moreover, the theorem above is generally of theoretical use only since computing $\|\bsy{\Phi}_S^+ \bsy{\Phi}_{\overline{S}} \|_{1}$ requires to know the support beforehand. Also, computing $\|\bsy{\Phi}_S^+ \bsy{\Phi}_{\overline{S}} \|_{1}$ for all the possible supports of a given size is not computationally tractable.\\

However, it is shown in \cite{tropp2004greed} that if $\mu < \frac{1}{2 |S| -1}$, then $ \|\bsy{\Phi}_S^+ \bsy{\Phi}_{\overline{S}} \|_{1} \leq \frac{|S| \mu}{1 - (|S|-1) \mu} < 1$ for all the supports of size $|S|$. It is worth noticing that $\bsy{\Phi}_{S}$ is full rank for all supports $S$ of size $s$ if and only if $\mathrm{spark}(\bsy{\Phi}) > s$.\\

%

Although Theorem~\ref{thm:ERC} proves to be interesting for $\bsy{E} = \bsy{0}$ in signal model (\ref{eq:MMVSignalModel}), it should be extended to the noisy case which is the purpose of the next section.

\subsection{Correct support detection criterion in the noisy case}

We develop here an ERC for the noisy case which generalizes Theorem~\ref{thm:ERC} to this context. To reach this result, let us assume that SOMP-NS has made correct decisions before iteration $t$. First of all, we separate the contribution of the noise $\bsy{E}$ and that of the useful signal $\bsy{X}$:
\begin{align*}
\bsy{R}^{(t)}  &= (\bsy{I} - \bsy{P}^{(t)})\bsy{Y} \\
 &= (\bsy{I} - \bsy{P}^{(t)})(\bsy{\Phi} \bsy{X} + \bsy{E}) \\
 &= \underbrace{(\bsy{I} - \bsy{P}^{(t)} ) \bsy{\Phi} \bsy{X}}_{ = \bsy{Z}^{(t)} }  + \underbrace{(\bsy{I} - \bsy{P}^{(t)} ) \bsy{E}}_{ = \bsy{E}^{(t)} }\\
 &= \bsy{Z}^{(t)}  + \bsy{E}^{(t)}.
\end{align*}
Then, the SOMP-NS correlation is lower bounded by
\begin{equation} \label{eq:firstBigApprox}
\begin{aligned}
\|\bsy{\Phi}_S^{\mathrm{T}} \bsy{R}^{(t)} \bsy{Q} \|_{\infty} & = \|\bsy{\Phi}_S^{\mathrm{T}} (\bsy{Z}^{(t)} + \bsy{E}^{(t)}) \bsy{Q} \|_{\infty} \\
 & \geq \|\bsy{\Phi}_S^{\mathrm{T}} \bsy{Z}^{(t)} \bsy{Q} \|_{\infty} - \|\bsy{\Phi}_S^{\mathrm{T}} \bsy{E}^{(t)} \bsy{Q} \|_{\infty}.
\end{aligned}
\end{equation}
%
%
Moreover, the triangle inequality yields
\begin{equation}\label{eq:secondBigApprox}
\|\bsy{\Phi}_{\overline{S}}^{\mathrm{T}} \bsy{R}^{(t)} \bsy{Q} \|_{\infty} \leq \|\bsy{\Phi}_{\overline{S}}^{\mathrm{T}} \bsy{Z}^{(t)} \bsy{Q} \|_{\infty} + \|\bsy{\Phi}_{\overline{S}}^{\mathrm{T}} \bsy{E}^{(t)} \bsy{Q} \|_{\infty}.
\end{equation}
Since SOMP-NS makes a correct decision at step $t$ if $\|\bsy{\Phi}_S^{\mathrm{T}} \bsy{R}^{(t)} \bsy{Q} \|_{\infty} > \|\bsy{\Phi}_{\overline{S}}^{\mathrm{T}} \bsy{R}^{(t)} \bsy{Q} \|_{\infty}$, the inequalities above show that this condition is always satisfied whenever
\begin{equation}\label{eq:SOMPanalysis1}
\begin{aligned}  \|\bsy{\Phi}_S^{\mathrm{T}} \bsy{Z}^{(t)} \bsy{Q} \|_{\infty} - \|\bsy{\Phi}_{\overline{S}}^{\mathrm{T}} \bsy{Z}^{(t)} \bsy{Q} \|_{\infty} >  \|\bsy{\Phi}_S^{\mathrm{T}} \bsy{E}^{(t)} \bsy{Q} \|_{\infty} + \|\bsy{\Phi}_{\overline{S}}^{\mathrm{T}} \bsy{E}^{(t)} \bsy{Q} \|_{\infty}. \end{aligned}
\end{equation}
It is worth noticing that, due to Lemma \ref{lem:ERC}, the following relationship holds true:
\begin{equation*}
\|\bsy{\Phi}_{\overline{S}}^{\mathrm{T}} \bsy{Z}^{(t)} \bsy{Q} \|_{\infty} \leq \|\bsy{\Phi}_S^{\mathrm{T}} \bsy{Z}^{(t)} \bsy{Q} \|_{\infty} \|\bsy{\Phi}_S^+ \bsy{\Phi}_{\overline{S}} \|_{1}.
\end{equation*}
Note that the assumption that correct decisions have been so far is crucial for this result to be true. Furthermore, as $\bsy{\Phi}_S$ and $\bsy{\Phi}_{\overline{S}}$ are column submatrices of $\bsy{\Phi}$, one easily obtains
\begin{align}
\|\bsy{\Phi}_S^{\mathrm{T}} \bsy{E}^{(t)} \bsy{Q} \|_{\infty} \leq \|\bsy{\Phi}^{\mathrm{T}} \bsy{E}^{(t)} \bsy{Q} \|_{\infty}, \\
\|\bsy{\Phi}_{\overline{S}}^{\mathrm{T}} \bsy{E}^{(t)} \bsy{Q} \|_{\infty} \leq \|\bsy{\Phi}^{\mathrm{T}} \bsy{E}^{(t)} \bsy{Q} \|_{\infty} .
\end{align}
A sufficient condition for (\ref{eq:SOMPanalysis1}) to hold is thereby
\begin{equation*}
\left(1 - \|\bsy{\Phi}_S^+ \bsy{\Phi}_{\overline{S}} \|_{1} \right) \|\bsy{\Phi}_S^{\mathrm{T}} \bsy{Z}^{(t)} \bsy{Q} \|_{\infty} > 2 \|\bsy{\Phi}^{\mathrm{T}} \bsy{E}^{(t)} \bsy{Q} \|_{\infty}.
\end{equation*}
Theorem~\ref{thm:TheoFramework1} summarizes the previous discussion by explicitly stating the ERC in the noisy case.

\begin{thm}[ERC for the noisy SOMP-NS algorithm]\label{thm:TheoFramework1}
Let $\bsy{X}$ be the sparse matrix to be retrieved and let $\mathrm{supp}(\bsy{X}) = S$ denote its support. Let $\bsy{\Phi}_S$ exhibit a full column rank. Let us assume that only correct atoms have been picked before iteration $t < |S|$ and that the reduced dictionary matrix $\bsy{\Phi}_S$ has full column rank. SOMP-NS with dictionary matrix $\bsy{\Phi}$ and signal $\bsy{Y} = \bsy{\Phi} \bsy{X} + \bsy{E}$ is guaranteed to make a correct decision at iteration $t$ whenever
\begin{equation}\label{eq:SOMPanalysis2}
\left(1 - \|\bsy{\Phi}_S^+ \bsy{\Phi}_{\overline{S}} \|_{1} \right) \|\bsy{\Phi}_S^{\mathrm{T}} \bsy{Z}^{(t)} \bsy{Q} \|_{\infty} > 2 \|\bsy{\Phi}^{\mathrm{T}} \bsy{E}^{(t)} \bsy{Q} \|_{\infty}
\end{equation}
where $\bsy{R}^{(t)} = (\bsy{I} - \bsy{P}^{(t)})\bsy{Y}$, $\bsy{Z}^{(t)} = (\bsy{I} - \bsy{P}^{(t)} ) \bsy{\Phi} \bsy{X}$ and $\bsy{E}^{(t)} = (\bsy{I} - \bsy{P}^{(t)} ) \bsy{E}$.
\end{thm}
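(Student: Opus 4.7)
The plan is to turn the derivation that already appears in the prose just before the theorem statement into a clean inductive argument. First I would decompose the current residual linearly as $\bsy{R}^{(t)} = \bsy{Z}^{(t)} + \bsy{E}^{(t)}$ with $\bsy{Z}^{(t)} = (\bsy{I}-\bsy{P}^{(t)})\bsy{\Phi}\bsy{X}$ and $\bsy{E}^{(t)} = (\bsy{I}-\bsy{P}^{(t)})\bsy{E}$, then left-multiply by $\bsy{\Phi}_S^{\mathrm{T}}$ and $\bsy{\Phi}_{\overline{S}}^{\mathrm{T}}$ and right-multiply by $\bsy{Q}$. Applying the reverse triangle inequality to the $\ell_\infty$ operator norm gives $\|\bsy{\Phi}_S^{\mathrm{T}}\bsy{R}^{(t)}\bsy{Q}\|_{\infty} \ge \|\bsy{\Phi}_S^{\mathrm{T}}\bsy{Z}^{(t)}\bsy{Q}\|_{\infty} - \|\bsy{\Phi}_S^{\mathrm{T}}\bsy{E}^{(t)}\bsy{Q}\|_{\infty}$, and the direct triangle inequality gives the matching upper bound for the index $\overline{S}$. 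SOMP-NS selects an atom in $S$ at step $t$ as soon as $\|\bsy{\Phi}_S^{\mathrm{T}}\bsy{R}^{(t)}\bsy{Q}\|_{\infty} > \|\bsy{\Phi}_{\overline{S}}^{\mathrm{T}}\bsy{R}^{(t)}\bsy{Q}\|_{\infty}$, so these two inequalities already reduce the task to controlling four explicit quantities.

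Next I would handle the cross-term $\|\bsy{\Phi}_{\overline{S}}^{\mathrm{T}}\bsy{Z}^{(t)}\bsy{Q}\|_{\infty}$ using Lemma~\ref{lem:ERC}. The hypothesis that SOMP-NS has selected only correct atoms so far means $S_t \subseteq S$, and $\bsy{\Phi}_S$ is assumed full column rank, so the lemma applies verbatim to $\bsy{Z}^{(t)} = (\bsy{I}-\bsy{P}^{(t)})\bsy{\Phi}\bsy{X}$ and yields
\begin{equation*}
\|\bsy{\Phi}_{\overline{S}}^{\mathrm{T}}\bsy{Z}^{(t)}\bsy{Q}\|_{\infty} \;\le\; \|\bsy{\Phi}_S^+\bsy{\Phi}_{\overline{S}}\|_{1}\,\|\bsy{\Phi}_S^{\mathrm{T}}\bsy{Z}^{(t)}\bsy{Q}\|_{\infty}.
\end{equation*}
For the two noise pieces I would use the trivial observation that $\bsy{\Phi}_S^{\mathrm{T}}\bsy{E}^{(t)}\bsy{Q}$ and $\bsy{\Phi}_{\overline{S}}^{\mathrm{T}}\bsy{E}^{(t)}\bsy{Q}$ are obtained by keeping only a subset of the rows of $\bsy{\Phi}^{\mathrm{T}}\bsy{E}^{(t)}\bsy{Q}$, so both $\ell_\infty$ norms are majorized by $\|\bsy{\Phi}^{\mathrm{T}}\bsy{E}^{(t)}\bsy{Q}\|_{\infty}$.

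Chaining the four bounds gives the sufficient condition
\begin{equation*}
\bigl(1 - \|\bsy{\Phi}_S^+\bsy{\Phi}_{\overline{S}}\|_{1}\bigr)\,\|\bsy{\Phi}_S^{\mathrm{T}}\bsy{Z}^{(t)}\bsy{Q}\|_{\infty} \;>\; 2\,\|\bsy{\Phi}^{\mathrm{T}}\bsy{E}^{(t)}\bsy{Q}\|_{\infty},
\end{equation*}
exactly as in~(\ref{eq:SOMPanalysis2}), after which the decision at iteration $t$ is guaranteed correct. The only real obstacle is bookkeeping around the inductive hypothesis: Lemma~\ref{lem:ERC} is only valid when $S_t\subseteq S$ and $\mathrm{supp}(\bsy{X}) = S$, so one has to be explicit that correctness so far is being assumed and that the theorem is establishing the inductive step, not unconditional recovery. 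Everything else reduces to triangle inequalities, submultiplicativity of the $\ell_\infty$ operator norm, and the identity $\|\bsy{\Phi}_{\overline{S}}^{\mathrm{T}}(\bsy{\Phi}_S^+)^{\mathrm{T}}\|_{\infty} = \|\bsy{\Phi}_S^+\bsy{\Phi}_{\overline{S}}\|_{1}$ already used in the proof of Lemma~\ref{lem:ERC}.
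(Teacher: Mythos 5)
Your argument is correct and follows the paper's own derivation essentially verbatim: the same decomposition $\bsy{R}^{(t)} = \bsy{Z}^{(t)} + \bsy{E}^{(t)}$, the same pair of triangle inequalities, the same invocation of Lemma~\ref{lem:ERC} on the noiseless part (valid because $S_t \subseteq S$ and $\mathrm{supp}(\bsy{X}) = S$), and the same majorization of both noise terms by $\|\bsy{\Phi}^{\mathrm{T}} \bsy{E}^{(t)} \bsy{Q}\|_{\infty}$. Nothing is missing; your remark that the theorem is an inductive step conditioned on correct decisions so far matches the paper's framing exactly.
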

Noticeably a necessary condition for satisfying (\ref{eq:SOMPanalysis2}) reads $\|\bsy{\Phi}_S^+ \bsy{\Phi}_{\overline{S}} \|_{1} < 1$ which is precisely the ERC obtained before for the noiseless case. Moreover, low values of $\|\bsy{\Phi}_S^+ \bsy{\Phi}_{\overline{S}} \|_{1}$ imply a better robustness against the noise as the condition hereabove is then more easily satisfied. This observation is not surprising since the robustness in the noiseless case determines the amplitude of the noise we can apply without ruining the support recovery.\\

Theorem~\ref{thm:TheoFramework1} is the cornerstone of the theoretical analysis conducted in this paper. However, as done hereafter, it remains desirable to find a lower bound for $\|\bsy{\Phi}_S^{\mathrm{T}} \bsy{Z}^{(t)} \bsy{Q} \|_{\infty}$ that expresses in a simpler manner the impact of the signal to be estimated, the weights and the dictionary matrix. Also, the term $\|\bsy{\Phi}^{\mathrm{T}} \bsy{E}^{(t)} \bsy{Q} \|_{\infty}$ will be dealt with by means of a statistical analysis presented in Section~\ref{sec:recovGuarantees} when the noise is Gaussian.

\subsection{Lower bound on $\|\bsy{\Phi}_{S}^{\mathrm{T}} \bsy{Z}^{(t)} \bsy{Q} \|_{\infty}$}

We now focus on deriving a lower bound for $\|\bsy{\Phi}_{S}^{\mathrm{T}} \bsy{Z}^{(t)} \bsy{Q} \|_{\infty}$ that can be easily evaluated. Verifying Equation~(\ref{eq:SOMPanalysis2}) with this computable lower bound will provide a sufficient condition for the ERC to hold. In particular, we desire to obtain a lower bound that does not rely on the knowledge of the particular support $S$ that is chosen.\\

To reach this goal, we chose here to extend the method proposed in \cite{cai2011orthogonal} to MMV problems. The following theorem mainly has a theoretical interest and is afterwards particularized in Corollary \ref{corr:RIPBasedBtBound} that will used in the rest of the paper. Corollary~\ref{corr:muBasedBtBound} will provide a variant of Corollary~\ref{corr:RIPBasedBtBound} relying on both the coherence of the dictionary (instead of the RIP) and on the quantity $\|\bsy{\Phi}_S^+ \bsy{\Phi}_{\overline{S}} \|_{1}$.\\

\begin{thm}\label{thm:firstBoundGeneral}
Let $S := \mathrm{supp}(\bsy{X}) \subset \left[n \right]$ and let $S_t$ denote the indexes of the atoms chosen by SOMP-NS at iteration $t$. It is assumed that  $S_t \subset S$, \textit{i.e.}, only correct decisions have been made before iteration $t$. $\mathcal{J}_t = S \backslash S_t$ contains the indexes of the correct atoms yet to be selected at iteration $t$. Let $\bsy{Z}^{(t)} = (\bsy{I} - \bsy{P}^{(t)}) \bsy{\Phi} \bsy{X}$ where $\bsy{P}^{(t)} = \bsy{\Phi}_{S_t} \bsy{\Phi}_{S_t}^+$ denotes the orthogonal projector onto $\mathrm{span}(\bsy{\Phi}_{S_t})$. Then, for any $c_k^{(t)} \in \lbrace -1, 1 \rbrace$ ($1 \leq k \leq K$), 
\begin{equation}\label{eq:firstBound}
\|\bsy{\Phi}_S^{\mathrm{T}} \bsy{Z}^{(t)} \bsy{Q} \|_{\infty} \geq  \left|\mathcal{J}_t\right|^{-1/2} \lambda_{\mathrm{min}}\left(\bsy{\Phi}_{\mathcal{J}_t}^{\mathrm{T}}  \bsy{\Phi}_{\mathcal{J}_t}\right)  \left\| \overline{\bsy{x}}^{(t)}_{\mathcal{J}_t} \right\|_{2}
\end{equation}
where $\overline{\bsy{x}}^{(t)} = \sum_{k=1}^{K} \bsy{x}_k c_k^{(t)} q_k$. Moreover, if $\bsy{\Phi}$ satisfies the RIP with $|\mathcal{J}_t|$-th restricted isometry constant $\delta_{|\mathcal{J}_t|}  < 1$, then
\begin{equation}\label{eq:firstBoundv2}
\|\bsy{\Phi}_S^{\mathrm{T}} \bsy{Z}^{(t)} \bsy{Q} \|_{\infty} \geq  \left|\mathcal{J}_t\right|^{-1/2} (1-\delta_{|\mathcal{J}_t|})  \left\|  \overline{\bsy{x}}^{(t)}_{\mathcal{J}_t}  \right\|_{2}.
\end{equation}
\end{thm}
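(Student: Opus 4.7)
The proof will proceed in two conceptual steps: first a \emph{signed-averaging} trick that collapses the MMV quantity $\|\bsy{\Phi}_S^{\mathrm{T}}\bsy{Z}^{(t)}\bsy{Q}\|_{\infty}$ into a single-vector inequality, and then a routine $\ell_\infty$-vs-$\ell_2$ comparison to extract the eigenvalue factor. For the first step, for any sign vector $(c_k^{(t)}) \in \{-1,+1\}^K$, any atom $\bsy{\phi}_j$, and non-negative weights $q_k$, the triangle inequality yields $\sum_{k=1}^K q_k|\bsy{\phi}_j^{\mathrm{T}}\bsy{z}^{(t)}_k| \geq \big|\bsy{\phi}_j^{\mathrm{T}}\sum_k c_k^{(t)}q_k\bsy{z}^{(t)}_k\big|$. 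Introducing $\overline{\bsy{z}}^{(t)} := (\bsy{I}-\bsy{P}^{(t)})\bsy{\Phi}\,\overline{\bsy{x}}^{(t)}$ via linearity of $(\bsy{I}-\bsy{P}^{(t)})\bsy{\Phi}$ and maximising over $j\in S$ gives $\|\bsy{\Phi}_S^{\mathrm{T}}\bsy{Z}^{(t)}\bsy{Q}\|_{\infty} \geq \|\bsy{\Phi}_S^{\mathrm{T}}\overline{\bsy{z}}^{(t)}\|_{\infty}$, so that from this point on the weights and signs appear only through the sparse effective vector $\overline{\bsy{x}}^{(t)}$.

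Next, I would use the projector $\bsy{I}-\bsy{P}^{(t)}$ on both sides to simplify $\|\bsy{\Phi}_S^{\mathrm{T}}\overline{\bsy{z}}^{(t)}\|_{\infty}$. For $j\in S_t$, $\bsy{\phi}_j \in \mathrm{span}(\bsy{\Phi}_{S_t})$ so $(\bsy{I}-\bsy{P}^{(t)})\bsy{\phi}_j = \bsy{0}$, wiping out all rows indexed by $S_t$. Symmetrically, since $\mathrm{supp}(\overline{\bsy{x}}^{(t)}) \subseteq S$, the $S_t$-block $\bsy{\Phi}_{S_t}\overline{\bsy{x}}^{(t)}_{S_t}$ of $\bsy{\Phi}\,\overline{\bsy{x}}^{(t)}$ lies in $\mathrm{span}(\bsy{\Phi}_{S_t})$ and is annihilated. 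The surviving structure is $\bsy{\Phi}_S^{\mathrm{T}}\overline{\bsy{z}}^{(t)} = \bsy{G}_t\,\overline{\bsy{x}}^{(t)}_{\mathcal{J}_t}$ (padded with zeros on $S_t$), where $\bsy{G}_t := \bsy{\Phi}_{\mathcal{J}_t}^{\mathrm{T}}(\bsy{I}-\bsy{P}^{(t)})\bsy{\Phi}_{\mathcal{J}_t}$ is symmetric positive semidefinite. From here, the chain $\|\bsy{u}\|_{\infty}\geq |\mathcal{J}_t|^{-1/2}\|\bsy{u}\|_2$ applied to $\bsy{u} = \bsy{G}_t \overline{\bsy{x}}^{(t)}_{\mathcal{J}_t}$, followed by the standard PSD bound $\|\bsy{G}_t\bsy{v}\|_2 \geq \lambda_{\min}(\bsy{G}_t)\|\bsy{v}\|_2$, delivers $|\mathcal{J}_t|^{-1/2}\lambda_{\min}(\bsy{G}_t)\|\overline{\bsy{x}}^{(t)}_{\mathcal{J}_t}\|_2$. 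The RIP variant (\ref{eq:firstBoundv2}) then follows instantly from the bound $\lambda_{\min}(\bsy{\Phi}_{\mathcal{J}_t}^{\mathrm{T}}\bsy{\Phi}_{\mathcal{J}_t}) \geq 1 - \delta_{|\mathcal{J}_t|}$ recalled in Section~\ref{subsec:RIPDef}.

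I expect the main obstacle to be the identification $\lambda_{\min}(\bsy{G}_t) \geq \lambda_{\min}(\bsy{\Phi}_{\mathcal{J}_t}^{\mathrm{T}}\bsy{\Phi}_{\mathcal{J}_t})$ needed to match the theorem statement. Two candidate routes are available: factoring $\bsy{G}_t = [(\bsy{I}-\bsy{P}^{(t)})\bsy{\Phi}_{\mathcal{J}_t}]^{\mathrm{T}}[(\bsy{I}-\bsy{P}^{(t)})\bsy{\Phi}_{\mathcal{J}_t}]$ and directly comparing singular values of the deflated atoms, or recognising $\bsy{G}_t$ as the Schur complement of $\bsy{\Phi}_{S_t}^{\mathrm{T}}\bsy{\Phi}_{S_t}$ inside $\bsy{\Phi}_S^{\mathrm{T}}\bsy{\Phi}_S$ and invoking Haynsworth interlacing, which gives $\lambda_{\min}(\bsy{G}_t) \geq \lambda_{\min}(\bsy{\Phi}_S^{\mathrm{T}}\bsy{\Phi}_S) \geq 1 - \delta_{|S|}$. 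Because the naive projection bound can shrink singular values, I would scrutinise this passage carefully; in particular, the full-column-rank hypothesis on $\bsy{\Phi}_S$ and the precise RIP order ($|\mathcal{J}_t|$ vs $|S|$) used in (\ref{eq:firstBoundv2}) should be checked against whichever factorisation is ultimately employed.
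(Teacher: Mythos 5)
Your proposal retraces the paper's own proof almost step for step: the sign-insertion plus triangle inequality that collapses the weighted sum into $\bigl\|\bsy{\Phi}_{\mathcal{J}_t}^{\mathrm{T}}(\bsy{I}-\bsy{P}^{(t)})\bsy{\Phi}_{\mathcal{J}_t}\overline{\bsy{x}}^{(t)}_{\mathcal{J}_t}\bigr\|_{\infty}$, the $\ell_\infty$-to-$\ell_2$ factor $|\mathcal{J}_t|^{-1/2}$, and the final eigenvalue lower bound are all exactly the paper's steps. The one passage you single out for scrutiny is indeed the only delicate one, and your suspicion is warranted: the paper disposes of it by citing \cite[Lemma~5]{cai2011orthogonal} for the inequality $\lambda_{\min}\bigl(\bsy{\Phi}_{\mathcal{J}_t}^{\mathrm{T}}(\bsy{I}-\bsy{P}^{(t)})\bsy{\Phi}_{\mathcal{J}_t}\bigr) \geq \lambda_{\min}\bigl(\bsy{\Phi}_{\mathcal{J}_t}^{\mathrm{T}}\bsy{\Phi}_{\mathcal{J}_t}\bigr)$, but as transcribed this is false in general: take two unit atoms with inner product $\cos\theta\neq 0$, $S_t=\{1\}$, $\mathcal{J}_t=\{2\}$; the left-hand side is $\sin^2\theta<1$ while the right-hand side equals $1$. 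What your Schur-complement/Haynsworth route actually delivers is $\lambda_{\min}(\bsy{G}_t)\geq\lambda_{\min}(\bsy{\Phi}_S^{\mathrm{T}}\bsy{\Phi}_S)\geq 1-\delta_{|S|}$ (in the example, $\sin^2\theta\geq 1-|\cos\theta|$, consistent), so your argument proves (\ref{eq:firstBound})--(\ref{eq:firstBoundv2}) with $\lambda_{\min}(\bsy{\Phi}_S^{\mathrm{T}}\bsy{\Phi}_S)$ and $\delta_{|S|}$ in place of $\lambda_{\min}(\bsy{\Phi}_{\mathcal{J}_t}^{\mathrm{T}}\bsy{\Phi}_{\mathcal{J}_t})$ and $\delta_{|\mathcal{J}_t|}$. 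That weaker form is all that is ever used downstream --- Theorem~\ref{thm:finalGaussianThm} immediately relaxes $1-\delta_{|\mathcal{J}_t|}$ to $1-\delta_{|S|}$ anyway --- so your route is the sound one; just state the RIP order as $|S|$ rather than $|\mathcal{J}_t|$, and note that the full-column-rank hypothesis on $\bsy{\Phi}_S$ is what makes the Schur complement well defined.
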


\begin{proof} Denoting the $k$-th column of $\bsy{Z}^{(t)}$ by $\bsy{z}^{(t)}_k$, we first observe that
\begin{equation}
\|\bsy{\Phi}_S^{\mathrm{T}} \bsy{Z}^{(t)} \bsy{Q} \|_{\infty} = \max_{j \in \mathcal{J}_t} \left( \sum_{k=1}^{K} \left| \left\langle \bsy{\phi}_j, \bsy{z}^{(t)}_k \right\rangle \right| q_k \right),
\end{equation}
the maximum being taken over $j \in \mathcal{J}_t$ since $\bsy{z}^{(t)}_k$ is orthogonal to $\mathrm{span}(\bsy{\Phi}_{S_t})$ because of the orthogonal projector $(\bsy{I} - \bsy{P}^{(t)})$. Since $|c_k^{(t)}| = 1$, $ |\langle \bsy{\phi}_j, \bsy{z}^{(t)}_k c_k^{(t)} \rangle | = | \langle \bsy{\phi}_j, \bsy{z}^{(t)}_k \rangle c_k^{(t)} | = | \langle \bsy{\phi}_j, \bsy{z}^{(t)}_k \rangle |$ which implies
\begin{equation}
\|\bsy{\Phi}_S^{\mathrm{T}} \bsy{Z}^{(t)} \bsy{Q} \|_{\infty} = \max_{j \in \mathcal{J}_t} \left( \sum_{k=1}^{K} \left| \left\langle \bsy{\phi}_j, \bsy{z}^{(t)}_k c_k^{(t)} q_k \right\rangle \right| \right).
\end{equation}
The triangle inequality yields
\begin{align*}
\|\bsy{\Phi}_S^{\mathrm{T}} \bsy{Z}^{(t)} \bsy{Q} \|_{\infty} & \geq \max_{j \in \mathcal{J}_t} \left( \left| \sum_{k=1}^{K} \left\langle \bsy{\phi}_j, \bsy{z}^{(t)}_k c_k^{(t)} q_k \right\rangle \right|  \right) \\
 & =\left\| \bsy{\Phi}^{\mathrm{T}}_{\mathcal{J}_t} (\bsy{I} - \bsy{P}^{(t)}) \bsy{\Phi}_{\mathcal{J}_t} \overline{\bsy{x}}^{(t)}_{\mathcal{J}_t} \right\|_{\infty}.
\end{align*}
%
Thus,
\begin{align*}
\|\bsy{\Phi}_S^{\mathrm{T}} \bsy{Z}^{(t)} \bsy{Q} \|_{\infty} &  \geq \left|\mathcal{J}_t\right|^{-1/2} \left\| \vphantom{\sum_{k=1}^{K}} \bsy{\Phi}^{\mathrm{T}}_{\mathcal{J}_t} (\bsy{I} - \bsy{P}^{(t)}) \bsy{\Phi}_{\mathcal{J}_t} \overline{\bsy{x}}^{(t)}_{\mathcal{J}_t} \right\|_{2}  \\
  &  \geq  \left|\mathcal{J}_t \right|^{-1/2} \lambda_{\mathrm{min}}\left(\bsy{\Phi}_{\mathcal{J}_t}^{\mathrm{T}} (\bsy{I} - \bsy{P}^{(t)}) \bsy{\Phi}_{\mathcal{J}_t}\right) \left\| \overline{\bsy{x}}^{(t)}_{\mathcal{J}_t}  \right\|_{2}\\
    &\geq  \left|\mathcal{J}_t\right|^{-1/2} \lambda_{\mathrm{min}}\left(\bsy{\Phi}_{\mathcal{J}_t}^{\mathrm{T}}  \bsy{\Phi}_{\mathcal{J}_t}\right)  \left\|  \overline{\bsy{x}}^{(t)}_{\mathcal{J}_t}  \right\|_{2}.
\end{align*}
The first inequality results from the observation that, for any vector $\bsy{x} \in \mathbb{R}^{|\mathcal{J}_t|}$, we have $ \| \bsy{x} \|_2 \leq \sqrt{|\mathcal{J}_t|} \| \bsy{x} \|_{\infty}$. The inequality $\lambda_{\mathrm{min}}(\bsy{\Phi}_{\mathcal{J}_t}^{\mathrm{T}} (\bsy{I} - \bsy{P}^{(t)}) \bsy{\Phi}_{\mathcal{J}_t}) \geq \lambda_{\mathrm{min}}(\bsy{\Phi}_{\mathcal{J}_t}^{\mathrm{T}}  \bsy{\Phi}_{\mathcal{J}_t})$ is available in \cite[Lemma 5]{cai2011orthogonal}.  The first part of the theorem is now proved.\\

If $\bsy{\Phi}$ satisfies the RIP with RIC $\delta_{|\mathcal{J}_t|}  < 1$, then Equation~(\ref{eq:RICasMax}) yields $1- \delta_{|\mathcal{J}_t|} \leq \lambda_{\mathrm{min}}\left(\bsy{\Phi}_{\mathcal{J}_t}^{\mathrm{T}}  \bsy{\Phi}_{\mathcal{J}_t}\right)$, which proves the second part of the theorem. \end{proof}

The result above shows that the decision metric in \mbox{SOMP-NS} corresponding to the correct atoms, \textit{i.e.}, $\|\bsy{\Phi}_S^{\mathrm{T}} \bsy{Z}^{(t)} \bsy{Q} \|_{\infty}$, is closely related to the $\ell_2$ norm of the signal to be recovered and to the singular values of $\bsy{\Phi}_{\mathcal{J}_t}$. It is clear that the ability of $\bsy{\Phi}$ to conserve the norm of sparse vectors is necessary to ensure that the measurement noise $\bsy{E}^{(t)}$ does not absorb $\bsy{Z}^{(t)}$.\\

Through the following corollary, we wish to obtain a simple term that replaces $\|  \overline{\bsy{x}}^{(t)}_{\mathcal{J}_t}  \|_{2}$.

\begin{corr}\label{corr:RIPBasedBtBound}
Let $S := \mathrm{supp}(\bsy{X}) \subset \left[n \right]$ and let $S_t$ denote the indexes of the atoms chosen by SOMP-NS at iteration $t$. It is assumed that  $S_t \subset S$, \textit{i.e.}, only correct decisions have been made before iteration $t$. $\mathcal{J}_t$ contains the indexes of the correct atoms yet to be selected at iteration $t$. Let $\bsy{Z}^{(t)} = (\bsy{I} - \bsy{P}^{(t)}) \bsy{\Phi} \bsy{X}$ where $\bsy{P}^{(t)} = \bsy{\Phi}_{S_t} \bsy{\Phi}_{S_t}^+$ denotes the orthogonal projector onto $\mathrm{span}(\bsy{\Phi}_{S_t})$. If $\bsy{\Phi}$ satisfies the RIP with $|\mathcal{J}_t|$-th restricted isometry constant $\delta_{|\mathcal{J}_t|}  < 1$, then
\begin{align}
\|\bsy{\Phi}_S^{\mathrm{T}} \bsy{Z}^{(t)} \bsy{Q} \|_{\infty}  & \geq  (1-\delta_{|\mathcal{J}_t|})  \min_{j \in S} \sum_{k=1}^{K} | X_{j,k} | q_k \label{eq:firstBoundCorr1_2}\\
 & \geq \mathrm{tr}(\bsy{Q})  (1-\delta_{|\mathcal{J}_t|})  \min_{\substack{j \in S \\ k \in \lbrack K \rbrack}} |X_{j,k}|. \label{eq:firstBoundCorr1}
\end{align}
\end{corr}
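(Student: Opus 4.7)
The plan is to derive both inequalities directly from Theorem~\ref{thm:firstBoundGeneral}. Starting from the RIP-based bound~(\ref{eq:firstBoundv2}),
\[ \|\bsy{\Phi}_S^{\mathrm{T}} \bsy{Z}^{(t)} \bsy{Q}\|_\infty \geq |\mathcal{J}_t|^{-1/2} (1-\delta_{|\mathcal{J}_t|}) \|\overline{\bsy{x}}^{(t)}_{\mathcal{J}_t}\|_2, \]
which holds for every sign vector $c_k^{(t)} \in \lbrace -1, 1 \rbrace$ with $\overline{\bsy{x}}^{(t)} = \sum_k c_k^{(t)} q_k \bsy{x}_k$, my first aim is to convert the $\ell_2$ norm into a minimum of $|\overline{x}^{(t)}_j|$ over $j \in \mathcal{J}_t$ in a way that cancels the $|\mathcal{J}_t|^{-1/2}$ prefactor. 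My second aim is to use the freedom on the signs so that this minimum matches $\min_{j \in S} \sum_k |X_{j,k}| q_k$.

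For the norm conversion I would use the elementary inequality $\|\bsy{v}\|_2^2 = \sum_j v_j^2 \geq |\mathcal{J}_t| \min_j v_j^2$, which yields $\|\overline{\bsy{x}}^{(t)}_{\mathcal{J}_t}\|_2 \geq \sqrt{|\mathcal{J}_t|} \min_{j \in \mathcal{J}_t} |\overline{x}^{(t)}_j|$ and thus absorbs the unwanted $|\mathcal{J}_t|^{-1/2}$ factor. For the sign selection I would pick $j^\star \in \argmin_{j \in S} \sum_k |X_{j,k}| q_k$ and set $c_k^{(t)} = \mathrm{sign}(X_{j^\star, k})$, breaking ties arbitrarily on zero entries. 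This makes $\overline{x}^{(t)}_{j^\star} = \sum_k |X_{j^\star, k}| q_k$, which by definition equals $\min_{j \in S} \sum_k |X_{j,k}| q_k$; since $\mathcal{J}_t \subseteq S$, relaxing $\min_{j \in \mathcal{J}_t}$ to the weaker $\min_{j \in S}$ is harmless and produces inequality~(\ref{eq:firstBoundCorr1_2}). The second inequality~(\ref{eq:firstBoundCorr1}) is then immediate: bounding each $|X_{j,k}|$ from below by $\min_{j' \in S,\, k' \in [K]} |X_{j',k'}|$ turns the inner sum into $(\min_{j',k'} |X_{j',k'}|) \sum_k q_k = \mathrm{tr}(\bsy{Q}) \min_{j',k'} |X_{j',k'}|$, uniformly in $j$.

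The main obstacle I anticipate lies in the sign-alignment step: the vector $c_k^{(t)}$ is chosen once and used simultaneously for every $j \in \mathcal{J}_t$, so while aligning with $j^\star$ guarantees the targeted lower bound at the coordinate $j = j^\star$ (provided $j^\star \in \mathcal{J}_t$), it says nothing about $|\overline{x}^{(t)}_j|$ at the other indices, which could in principle be much smaller and drag the minimum down. A toy instance restricted to $S$ with $K = 2$, $\bsy{x}_1 = (1, 1)^{\mathrm{T}}$, $\bsy{x}_2 = (1, -1)^{\mathrm{T}}$ and $q_1 = q_2 = 1$ already shows that no single sign vector can realize the full $\ell_1$ column-sum at both coordinates at once. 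Closing this gap rigorously would likely require bypassing the $\ell_2$ route of Theorem~\ref{thm:firstBoundGeneral} through a direct entry-wise argument on the near-identity matrix $\bsy{\Phi}_{\mathcal{J}_t}^{\mathrm{T}}(\bsy{I} - \bsy{P}^{(t)})\bsy{\Phi}_{\mathcal{J}_t}$, or a different decomposition that handles each $j \in \mathcal{J}_t$ with its own sign choice, for instance via a max-over-$j$ versus average-over-$j$ step that keeps distinct sign patterns inside each term of the sum over $k$.
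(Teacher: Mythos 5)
Your route is exactly the paper's route, and the obstacle you flag at the end is not a defect of your write-up relative to the paper: it is a genuine gap in the paper's own proof. The paper makes the same two moves you do (the $\ell_2$-to-$\min$ conversion $\|\bsy{x}_{\mathcal{J}_t}\|_2 \geq \sqrt{|\mathcal{J}_t|}\,\|\bsy{x}_{\mathcal{J}_t}\|_{\mathrm{min}}$, then optimization over the signs $c_k^{(t)}$), and at the sign-selection step it asserts that choosing $c_k^{(t)}=\mathrm{sign}(X_{j^{*},k})$ makes $\max_{\{c_k^{(t)}\}}\min_{j\in S}\sum_{k}X_{j,k}c_k^{(t)}q_k$ \emph{equal} to $\min_{j\in S}\sum_k|X_{j,k}|q_k$. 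Only the inequality $\leq$ is justified there; the claimed equality is false in general, and your $K=2$ toy instance ($\bsy{x}_1=(1,1)^{\mathrm{T}}$, $\bsy{x}_2=(1,-1)^{\mathrm{T}}$, $q_1=q_2=1$) refutes it: every choice of signs annihilates one of the two rows, so the max-min equals $0$ while $\min_j\sum_k|X_{j,k}|q_k=2$. A single sign vector can align with at most one row's sign pattern, so this route only delivers (\ref{eq:firstBoundCorr1_2}) when the sign matrix $\bigl(\mathrm{sign}(X_{j,k})\bigr)_{j\in S,\,k}$ has rank one (which covers $K=1$ and the paper's ``sign pattern 1'' simulations) --- precisely the condition you isolate. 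The passage from (\ref{eq:firstBoundCorr1_2}) to (\ref{eq:firstBoundCorr1}) is indeed immediate, as you say.

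So the verdict is: your proposal is an honest rendering of the paper's argument together with a correct diagnosis of where it breaks; the paper supplies no additional idea at that point, only an unproved assertion. Your suggested repair is also the right direction. Writing $\bsy{G}:=\bsy{\Phi}_{\mathcal{J}_t}^{\mathrm{T}}(\bsy{I}-\bsy{P}^{(t)})\bsy{\Phi}_{\mathcal{J}_t}$, which is invertible since $\lambda_{\mathrm{min}}(\bsy{G})\geq 1-\delta_{|\mathcal{J}_t|}>0$, one has $\bsy{\Phi}_{\mathcal{J}_t}^{\mathrm{T}}\bsy{Z}^{(t)}\bsy{Q}=\bsy{G}\bsy{X}_{\mathcal{J}_t}\bsy{Q}$, hence
\begin{equation*}
\|\bsy{\Phi}_S^{\mathrm{T}}\bsy{Z}^{(t)}\bsy{Q}\|_{\infty}\;\geq\;\|\bsy{G}\bsy{X}_{\mathcal{J}_t}\bsy{Q}\|_{\infty}\;\geq\;\frac{\|\bsy{X}_{\mathcal{J}_t}\bsy{Q}\|_{\infty}}{\|\bsy{G}^{-1}\|_{\infty}}\;\geq\;\frac{1}{\|\bsy{G}^{-1}\|_{\infty\rightarrow\infty}}\,\min_{j\in S}\sum_{k=1}^{K}|X_{j,k}|q_k,
\end{equation*}
which handles each row with its own signs but replaces the constant $1-\delta_{|\mathcal{J}_t|}$ by $1/\|\bsy{G}^{-1}\|_{\infty\rightarrow\infty}$; this is essentially how \cite{gribonval2008atoms} proceed, and it is consistent with the weaker $(1-3\delta)$-type constants appearing in their Theorem 7 as quoted in Section~\ref{subsec:GribonvalrelatedThm}. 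Whether (\ref{eq:firstBoundCorr1_2}) holds with the constant $1-\delta_{|\mathcal{J}_t|}$ for arbitrary sign patterns is not settled by the paper's proof.
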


\begin{proof}
It is easy to show that for any $\bsy{x} \in \mathbb{R}^{n}$
\begin{equation*}
\left\| \bsy{x}_{\mathcal{J}_t} \right\|_{2} \geq \sqrt{|\mathcal{J}_t|} \left\| \bsy{x}_{\mathcal{J}_t} \right\|_{\mathrm{min}} \geq \sqrt{|\mathcal{J}_t|} \left\| \bsy{x} \right\|_{\mathrm{min}}.
\end{equation*}
It implies that 
\begin{equation*}
\left\|  \sum_{k=1}^{K} (\bsy{x}_k)_{\mathcal{J}_t} c_k^{(t)} q_k  \right\|_{2} \geq \sqrt{|\mathcal{J}_t|} \min_{j \in S} \sum_{k=1}^{K} X_{j,k} c_k^{(t)} q_k.
\end{equation*}
In particular, the choice of the $c_k^{(t)}$ is arbitrary so that the best lower bound is given by
\begin{equation*}
\left\|  \sum_{k=1}^{K} (\bsy{x}_k)_{\mathcal{J}_t} c_k^{(t)} q_k  \right\|_{2} \geq \sqrt{|\mathcal{J}_t|} \max_{\lbrace c_k^{(t)} \rbrace} \min_{j \in S} \sum_{k=1}^{K} X_{j,k} c_k^{(t)} q_k.
\end{equation*}
One easily notices that $\max_{\lbrace c_k^{(t)} \rbrace} \min_{j \in S} \sum_{k=1}^{K} X_{j,k} c_k^{(t)} q_k \leq \min_{j \in S} \sum_{k=1}^{K} |X_{j,k}| q_k$. If $j^{*} := \argmin_j \sum_{k=1}^{K} |X_{j,k}| q_k$, then choosing $c_k^{(t)} = \mathrm{sign} (X_{j^{*}, k})$ is optimal, \newline \textit{i.e.}, $\max_{\lbrace c_k^{(t)} \rbrace} \min_{j \in S} \sum_{k=1}^{K} X_{j,k} c_k^{(t)} q_k =  \min_{j \in S} \sum_{k=1}^{K} |X_{j,k}| q_k = \sum_{k=1}^{K} |X_{j^{*},k}| q_k $. Together with Theorem~\ref{thm:firstBoundGeneral}, this result shows that the first inequality holds true. The last inequality is then trivially obtained. \end{proof}

Both inequalities in the result above explicitly emphasize the impact of the weights, the RIC and the amplitude of the coefficients to be recovered. However, regarding the weights, it is clear that they will also impact the value of $\|\bsy{\Phi}^{\mathrm{T}} \bsy{E}^{(t)} \bsy{Q} \|_{\infty}$.  Moreover, Equation~(\ref{eq:firstBoundCorr1_2}) suggests that, in order to reliably retrieve an atom, the sum of the absolute values of the associated coefficients in $\bsy{X}$ should be high enough.\\

As already mentioned in the introduction, Theorem~\ref{thm:firstBoundGeneral} appears to be a new result while Corollary \ref{corr:RIPBasedBtBound} has already been obtained in the literature \cite[Theorem 5]{gribonval2008atoms}.\\

It is now possible to use the coherence-based inequalities provided by Lemma \ref{lem:RICLambda} so as to derive a new bound on the basis of the previous one. We obtain Corollary \ref{corr:muBasedBtBound} which should be understood as the coherence counterpart of Corollary \ref{corr:RIPBasedBtBound}.

\begin{corr}\label{corr:muBasedBtBound}
Let $S := \mathrm{supp}(\bsy{X}) \subset \left[n \right]$ and let $S_t$ denote the indexes of the atoms chosen by SOMP-NS at iteration $t$. It is assumed that  $S_t \subset S$, \textit{i.e.}, only correct decisions have been made before iteration $t$. $\mathcal{J}_t$ contains the indexes of the correct atoms yet to be selected at iteration $t$. Let $\bsy{Z}^{(t)} = (\bsy{I} - \bsy{P}^{(t)}) \bsy{\Phi} \bsy{X}$ where $\bsy{P}^{(t)} = \bsy{\Phi}_{S_t} \bsy{\Phi}_{S_t}^+$ denotes the orthogonal projector onto $\mathrm{span}(\bsy{\Phi}_{S_t})$. If $\mu_1(|\mathcal{J}_t|-1) < 1$, then
\begin{equation} \label{eq:firstBoundvmu1}
\|\bsy{\Phi}_S^{\mathrm{T}} \bsy{Z}^{(t)} \bsy{Q} \|_{\infty}  \geq  (1 - \mu_1(|\mathcal{J}_t|-1))  \min_{j \in S} \sum_{k=1}^{K} | X_{j,k} | q_k.
\end{equation}
Moreover, if $(|S|-t-1)\mu < 1$, then both Equation~(\ref{eq:firstBoundvmu1}) and (\ref{eq:firstBoundvmu}) hold true.
\begin{equation}\label{eq:firstBoundvmu}
\|\bsy{\Phi}_S^{\mathrm{T}} \bsy{Z}^{(t)} \bsy{Q} \|_{\infty} \geq  (1 - (|S|-t-1)\mu)  \min_{j \in S} \sum_{k=1}^{K} | X_{j,k} | q_k. 
\end{equation}
\end{corr}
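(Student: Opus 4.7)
The plan is to combine the sharper, eigenvalue-form bound from Theorem~\ref{thm:firstBoundGeneral} with the coherence estimates provided by Lemma~\ref{lem:RICLambda}. Rather than starting from Corollary~\ref{corr:RIPBasedBtBound}, I would return to the tighter statement
\[
\|\bsy{\Phi}_S^{\mathrm{T}}\bsy{Z}^{(t)}\bsy{Q}\|_{\infty} \geq |\mathcal{J}_t|^{-1/2}\,\lambda_{\min}\!\left(\bsy{\Phi}_{\mathcal{J}_t}^{\mathrm{T}}\bsy{\Phi}_{\mathcal{J}_t}\right)\,\|\overline{\bsy{x}}^{(t)}_{\mathcal{J}_t}\|_{2},
\]
which is the first conclusion of Theorem~\ref{thm:firstBoundGeneral}, and then replace $\lambda_{\min}$ by a coherence-based surrogate. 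Since only correct atoms have been picked so far, $|S_t|=t$ and hence $|\mathcal{J}_t|=|S|-t$, so that $|\mathcal{J}_t|-1=|S|-t-1$, which is precisely the quantity appearing in~\eqref{eq:firstBoundvmu}.

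For inequality~\eqref{eq:firstBoundvmu1}, I would invoke Lemma~\ref{lem:RICLambda} with $s=|\mathcal{J}_t|$ to obtain $\lambda_{\min}(\bsy{\Phi}_{\mathcal{J}_t}^{\mathrm{T}}\bsy{\Phi}_{\mathcal{J}_t}) \geq 1-\mu_1(|\mathcal{J}_t|-1)$ as soon as $\mu_1(|\mathcal{J}_t|-1)<1$. The passage from $\|\overline{\bsy{x}}^{(t)}_{\mathcal{J}_t}\|_2$ to $\min_{j\in S}\sum_k |X_{j,k}|q_k$ is then carried out exactly as in the proof of Corollary~\ref{corr:RIPBasedBtBound}: I would pick $j^{*}\in\argmin_{j\in S}\sum_k|X_{j,k}|q_k$ and set $c_k^{(t)}=\mathrm{sign}(X_{j^{*},k})$, which yields $\|\overline{\bsy{x}}^{(t)}_{\mathcal{J}_t}\|_2 \geq \sqrt{|\mathcal{J}_t|}\,\min_{j\in S}\sum_k|X_{j,k}|q_k$; the factor $\sqrt{|\mathcal{J}_t|}$ then cancels the $|\mathcal{J}_t|^{-1/2}$ prefactor coming from Theorem~\ref{thm:firstBoundGeneral}, giving the desired bound.

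For inequality~\eqref{eq:firstBoundvmu}, it suffices to chain the second bound of Lemma~\ref{lem:RICLambda}, namely $\mu_1(|\mathcal{J}_t|-1) \leq (|\mathcal{J}_t|-1)\mu = (|S|-t-1)\mu$. The hypothesis $(|S|-t-1)\mu<1$ implies in particular $\mu_1(|\mathcal{J}_t|-1)\leq(|S|-t-1)\mu<1$, so the previous reasoning goes through with the weaker lower bound $1-(|S|-t-1)\mu$ in place of $1-\mu_1(|\mathcal{J}_t|-1)$, and both estimates hold simultaneously under that single hypothesis.

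There is no real obstacle here; the argument is essentially bookkeeping on top of Theorem~\ref{thm:firstBoundGeneral} and Lemma~\ref{lem:RICLambda}. The only subtlety worth flagging is that it is cleaner to start from the eigenvalue-form conclusion of Theorem~\ref{thm:firstBoundGeneral} than from Corollary~\ref{corr:RIPBasedBtBound}: the RIC $\delta_{|\mathcal{J}_t|}$ that appears in the latter is a worst-case quantity over all supports of size $|\mathcal{J}_t|$, whereas Lemma~\ref{lem:RICLambda} already furnishes precisely such a uniform lower bound on $\lambda_{\min}(\bsy{\Phi}_{\mathcal{J}_t}^{\mathrm{T}}\bsy{\Phi}_{\mathcal{J}_t})$, so plugging the coherence bound directly into the eigenvalue is more transparent than routing it through $\delta_{|\mathcal{J}_t|}\leq\mu_1(|\mathcal{J}_t|-1)$.
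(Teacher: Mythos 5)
Your proof is correct and follows essentially the same route as the paper: the paper simply chains $1-\delta_{|\mathcal{J}_t|} \geq 1-\mu_1(|\mathcal{J}_t|-1) \geq 1-(|\mathcal{J}_t|-1)\mu = 1-(|S|-t-1)\mu$ via Lemma~\ref{lem:RICLambda} and plugs this into Corollary~\ref{corr:RIPBasedBtBound}, whereas you apply the same lemma one step earlier to bound $\lambda_{\min}(\bsy{\Phi}_{\mathcal{J}_t}^{\mathrm{T}}\bsy{\Phi}_{\mathcal{J}_t})$ directly in Theorem~\ref{thm:firstBoundGeneral}. The difference is purely organizational (your version avoids the detour through the restricted isometry constant), and both arguments rest on exactly the same coherence estimates.
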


\begin{proof}
Using Lemma \ref{lem:RICLambda} and the equality $|\mathcal{J}_t| = |S| - t$, one obtains
\begin{align*}
(1-\delta_{|\mathcal{J}_t|}) & \geq 1 - \mu_1(|\mathcal{J}_t|-1) \\
 & \geq 1 - (|\mathcal{J}_t|-1) \mu \\
 & = 1 - (|S|-t-1)\mu.
\end{align*}
The first inequality makes sense only if $\mu_1(|\mathcal{J}_t|-1) < 1$ while the last inequality requires $(|S|-t-1)\mu < 1$.
\end{proof}

Although Corollary~\ref{corr:muBasedBtBound} is less powerful and general than Corollary~\ref{corr:RIPBasedBtBound}, it provides an interesting insight into how the coherence of the dictionary influences $\|\bsy{\Phi}_S^{\mathrm{T}} \bsy{Z}^{(t)} \bsy{Q} \|_{\infty}$.

\section{Theoretical analysis for the Gaussian noise case} \label{sec:recovGuarantees}

The previous section provided a non-probabilistic analysis of the quantity $\|\bsy{\Phi}_S^{\mathrm{T}} \bsy{Z}^{(t)} \bsy{Q} \|_{\infty}$ by deriving lower bounds that are more simple to evaluate than the original quantity. Regarding the noise-related quantity $\|\bsy{\Phi}^{\mathrm{T}} \bsy{E}^{(t)} \bsy{Q} \|_{\infty}$, we will in this section perform a stochastic analysis to derive a lower bound on the probability that it does not exceed a threshold $\varepsilon$ for Gaussian noises.\\

As shown by Theorem~\ref{thm:TheoFramework1}, it is possible to examine whether SOMP-NS succeeds in choosing a correct atom at step $t$ by evaluating separately quantities linked to the sparse signal to be estimated and the noise vectors, $\|\bsy{\Phi}_S^{\mathrm{T}} \bsy{Z}^{(t)} \bsy{Q} \|_{\infty}$ and $\|\bsy{\Phi}^{\mathrm{T}} \bsy{E}^{(t)} \bsy{Q} \|_{\infty}$ respectively. Since several simple lower bounds for $\|\bsy{\Phi}_S^{\mathrm{T}} \bsy{Z}^{(t)} \bsy{Q} \|_{\infty}$ have been found, it becomes possible to evaluate a lower bound on
\begin{equation}
\mathbb{P}\left[ \|\bsy{\Phi}^{\mathrm{T}} \bsy{E}^{(t)} \bsy{Q} \|_{\infty} < 0.5 \left(1 - \|\bsy{\Phi}_S^+ \bsy{\Phi}_{\overline{S}} \|_{1} \right) \|\bsy{\Phi}_S^{\mathrm{T}} \bsy{Z}^{(t)} \bsy{Q} \|_{\infty}  \right]
\end{equation}
\textit{i.e.}, a lower bound on the probability that SOMP-NS makes correct decisions for signal model (\ref{eq:MMVSignalModel}) according to Equation~(\ref{eq:SOMPanalysis2}) of Theorem~\ref{thm:TheoFramework1}. Corollaries~\ref{corr:RIPBasedBtBound} and \ref{corr:muBasedBtBound} yield
\begin{equation}\label{eq:ProbInequalities}
\begin{aligned}
\mathbb{P}\left[ \|\bsy{\Phi}^{\mathrm{T}} \bsy{E}^{(t)} \bsy{Q} \|_{\infty} < 0.5 \left(1 - \|\bsy{\Phi}_S^+ \bsy{\Phi}_{\overline{S}} \|_{1} \right) \|\bsy{\Phi}_S^{\mathrm{T}} \bsy{Z}^{(t)} \bsy{Q} \|_{\infty}  \right] \\
\geq \mathbb{P}\left[ \|\bsy{\Phi}^{\mathrm{T}} \bsy{E}^{(t)} \bsy{Q} \|_{\infty} < 0.5 \left(1 - \|\bsy{\Phi}_S^+ \bsy{\Phi}_{\overline{S}} \|_{1} \right) \|\bsy{\Phi}_S^{\mathrm{T}} \bsy{Z}^{(t)} \bsy{Q} \|_{\infty}^{\mathrm{(RIP)}}  \right] \\
\geq \mathbb{P}\left[ \|\bsy{\Phi}^{\mathrm{T}} \bsy{E}^{(t)} \bsy{Q} \|_{\infty} < 0.5 \left(1 - \|\bsy{\Phi}_S^+ \bsy{\Phi}_{\overline{S}} \|_{1} \right) \|\bsy{\Phi}_S^{\mathrm{T}} \bsy{Z}^{(t)} \bsy{Q} \|_{\infty}^{(\mu)}  \right]
\end{aligned}
\end{equation}
where
%
\begin{align}
 & \|\bsy{\Phi}_S^{\mathrm{T}} \bsy{Z}^{(t)} \bsy{Q} \|_{\infty}^{\mathrm{(RIP)}} :=  (1-\delta_{|\mathcal{J}_t|})  \min_{j \in S} \sum_{k=1}^{K} | X_{j,k} | q_k \label{eq:EqRIPBound} \\
 & \|\bsy{\Phi}_S^{\mathrm{T}} \bsy{Z}^{(t)} \bsy{Q} \|_{\infty}^{(\mu)} := \left(1 - (|S|-t-1)\mu\right)  \min_{j \in S} \sum_{k=1}^{K} | X_{j,k} | q_k. \label{eq:EqMuBound}
\end{align}

A statistical analysis of $\|\bsy{\Phi}^{\mathrm{T}} \bsy{E}^{(t)} \bsy{Q} \|_{\infty}$ is proposed when $\bsy{e}_k \sim \mathcal{N}(0, \sigma_k^2 \bsy{I}_{m \times m})$ and $\bsy{e}_{k_1} \independent \bsy{e}_{k_2}$ for $k_1 \neq k_2$. The advantage of our approach is to take into account the isotropic nature of statistically independent Gaussian random vectors. Our main result is Theorem~\ref{thm:finalGaussianThm}, which shows that the probability of making incorrect decisions from iteration $0$ to iteration $s < |S|$ included decreases exponentially with regards to a certain number of parameters. This theorem is then particularized so as to make use of the coherence $\mu$ of $\bsy{\Phi}$ instead of the RIP and the ERC.\\

First of all, the statistical properties of $\sum_{k=1}^{K} \left| \left\langle \bsy{\phi}_j, \bsy{e}_{k}  \right\rangle \right| q_k$ for a single and arbitrary atom $\bsy{\phi}_j$ are investigated in Section~\ref{subsec:oneatom}. These properties are then extended to $\|\bsy{\Phi}^{\mathrm{T}} \bsy{E} \bsy{Q} \|_{\infty} = \max_{j \in \left[n \right]} \left( \sum_{k=1}^{K} \left| \left\langle \bsy{\phi}_j, \bsy{e}_{k}  \right\rangle \right| q_k \right)$ in Section~\ref{subsec:natoms}. On the basis of this last result, Theorem~\ref{thm:finalGaussianThm} is obtained in Section~\ref{subsec:fullrecovery}. A particular case of the signal model is then examined in order to ease, in Section~\ref{sec:numresults}, the comparison of the results provided by means of the theoretical bound and those obtained by simulation.



\subsection{Assumptions and theoretical framework}

In this section, we restate some assumptions and define quantities used later on.\\

It is assumed that the entries of each $\bsy{e}_k$ ($1 \leq k \leq K$) are i.i.d. mean-zero Gaussian random variables of variance $\sigma_{k}^2$, \textit{i.e.}, $\bsy{e}_k \sim \mathcal{N}(0, \sigma_k^2 \bsy{I}_{m \times m})$. Furthermore, the noise vectors $\bsy{e}_k$ are statistically independent. Finally, the columns of $\bsy{\Phi}_S$ are assumed to be linearly independent. As already mentioned in Section~\ref{subsec:RIPDef}, the latter condition is true for all supports $S$ of cardinality $s$ whenever $\delta_{s} < 1$.\\

We define 
\begin{align}
& \bsy{q} := \begin{pmatrix}
q_1, & q_2, & \cdots, & q_K
\end{pmatrix}^{\mathrm{T}} \\
 & \bsy{\sigma} := \begin{pmatrix}
\sigma_1, & \sigma_2, & \cdots, & \sigma_K
\end{pmatrix}^{\mathrm{T}} \\
 & \bsy{q}^{(\bsy{\sigma})} := \begin{pmatrix}
\sigma_1 q_1, & \sigma_2 q_2, & \cdots, & \sigma_K q_K
\end{pmatrix}^{\mathrm{T}}.
\end{align}

Before going on further, we also define
\begin{equation}\label{eq:defKappa}
\kappa(\bsy{q}, \bsy{\sigma}) = \dfrac{1}{2 \| \bsy{q}^{(\bsy{\sigma})} \|_2^2}
\end{equation}
and
\begin{equation}\label{eq:defb}
b(\bsy{q}, \bsy{\sigma}) = \sqrt{\dfrac{2}{\pi}} \| \bsy{q}^{(\bsy{\sigma})} \|_1.
\end{equation}

Corollary \ref{corr:RIPBasedBtBound} shows that a sufficient condition for \mbox{SOMP-NS} to choose a correct atom at iteration $0$ is given by

\begin{equation}
\begin{aligned}
\max_{j \in \left[n \right]} \left( \sum_{k=1}^{K} \left| \left\langle \bsy{\phi}_j, \bsy{e}_{k}  \right\rangle \right| q_k \right) <   0.5 \left(1 - \|\bsy{\Phi}_S^+ \bsy{\Phi}_{\overline{S}} \|_{1} \right) 
  (1-\delta_{|S|})  \min_{j \in S} \sum_{k=1}^{K} | X_{j,k} | q_k.
\end{aligned}
\end{equation}

Using the upper bound $\left| \left\langle \bsy{\phi}_j, \bsy{e}_{k}  \right\rangle \right| \leq \| \bsy{\phi}_j \|_2 \| \bsy{e}_{k} \|_2 = \| \bsy{e}_{k} \|_2$ is a recurrent solution in the literature and provides satisfactory performance in a SMV setting (see \cite{cai2011orthogonal}). However, using such an approach in a MMV setting does not properly capture the performance gains obtained whenever $K$ increases as this upper bound assumes that all the error vectors $\bsy{e}_{k}$ are aligned with a single atom at each iteration. While this approximation is acceptable whenever only one measurement vector is available, it proves to be highly pessimistic as soon as one considers many independent (and usually isotropic) error vectors.\\

This observation motivates an in-depth analysis of the statistical properties of \newline $\max_{j \in \left[n \right]} \left( \sum_{k=1}^{K} \left| \left\langle \bsy{\phi}_j, \bsy{e}_{k}  \right\rangle \right| q_k \right)$. The analysis conducted hereafter mainly relies on the notion of Lipschitz functions and the related concentration inequalities.

\subsection{Concentration inequalities for one atom}\label{subsec:oneatom}

In this section, we are interested in providing a lower bound for $\mathbb{P} \left(\sum_{k=1}^{K} \left| \left\langle \bsy{\phi}_j, \bsy{e}_{k}  \right\rangle \right| q_k \geq \varepsilon \right)$ for an arbitrary atom $\bsy{\phi}_j$ \mbox{($1 \leq j \leq n$)}. The main result of this section is Lemma~\ref{lem:finalBoundIndivProb}.

\begin{thm}\label{thm:concentrationIneqLipschitz}
\cite[Theorem 8.40.]{foucart2013mathematical}
Let $f: \mathbb{R}^n \rightarrow \mathbb{R}$ be a Lipschitz function (with regards to the metric $\ell_2$) with Lipschitz constant $L$. Let $\bsy{g} = \big(
g_1,  g_2,  \dots,  g_n
\big)^{\mathrm{T}}$ be a vector of independent standard Gaussian random variables. Then, for all $\varepsilon > 0$
\begin{equation}
\mathbb{P} \left(f(\bsy{g}) - \mathbb{E}\lbrack f(\bsy{g})\rbrack \geq \varepsilon \right) \leq \exp \left( \dfrac{-\varepsilon^2}{2 L^2}\right),
\end{equation}
and consequently
\begin{equation*}
\mathbb{P} \left(\left|f(\bsy{g}) - \mathbb{E}\lbrack f(\bsy{g})\rbrack\right| \geq \varepsilon \right) \leq 2 \exp \left( \dfrac{-\varepsilon^2}{2 L^2}\right).
\end{equation*}
\end{thm}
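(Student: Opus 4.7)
The plan is to establish the one-sided tail bound via the Herbst argument applied to Gross's Gaussian logarithmic Sobolev inequality, then obtain the absolute-value version by a symmetry plus union-bound argument. By a standard mollification (convolving $f$ with a Gaussian kernel of small width and then letting the width tend to $0$) I may assume without loss of generality that $f$ is smooth and that $\|\nabla f(\bsy{x})\|_2 \leq L$ for every $\bsy{x} \in \mathbb{R}^n$; the general Lipschitz case follows by Rademacher's theorem and dominated convergence on the tail probability.

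The core tool is the Gaussian log-Sobolev inequality: for every smooth $h$, with $\gamma_n$ the standard Gaussian measure on $\mathbb{R}^n$,
\begin{equation*}
\mathrm{Ent}_{\gamma_n}(h^2) \;:=\; \mathbb{E}[h(\bsy{g})^2 \log h(\bsy{g})^2] - \mathbb{E}[h(\bsy{g})^2]\, \log \mathbb{E}[h(\bsy{g})^2] \;\leq\; 2\, \mathbb{E}\!\left[\left\|\nabla h(\bsy{g})\right\|_2^2\right].
\end{equation*}
I would apply this to $h = \exp(\lambda f/2)$ for $\lambda > 0$ and set $Z(\lambda) := \mathbb{E}[\exp(\lambda f(\bsy{g}))]$. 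A direct computation identifies $\mathrm{Ent}_{\gamma_n}(h^2) = \lambda Z'(\lambda) - Z(\lambda) \log Z(\lambda)$, while $2\mathbb{E}[\|\nabla h(\bsy{g})\|_2^2] = (\lambda^2/2)\, \mathbb{E}[\|\nabla f(\bsy{g})\|_2^2\, e^{\lambda f(\bsy{g})}] \leq (\lambda^2 L^2 / 2)\, Z(\lambda)$ by the Lipschitz hypothesis. Dividing through by $\lambda^2 Z(\lambda)$ yields the Herbst differential inequality
\begin{equation*}
\frac{d}{d\lambda}\!\left(\frac{\log Z(\lambda)}{\lambda}\right) \;\leq\; \frac{L^2}{2}.
\end{equation*}

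Since $\log Z(\lambda)/\lambda \to \mathbb{E}[f(\bsy{g})]$ as $\lambda \to 0^+$ by L'Hôpital's rule (with $Z(0)=1$), integrating the inequality over $(0,\lambda]$ gives the subgaussian moment generating function bound $\mathbb{E}[\exp(\lambda(f(\bsy{g}) - \mathbb{E}[f(\bsy{g})]))] \leq \exp(\lambda^2 L^2/2)$. The one-sided tail bound then follows from the Chernoff-Markov method: $\mathbb{P}(f(\bsy{g}) - \mathbb{E}[f(\bsy{g})] \geq \varepsilon) \leq \exp(-\lambda \varepsilon + \lambda^2 L^2/2)$, optimized at $\lambda = \varepsilon/L^2$ to produce $\exp(-\varepsilon^2 / (2 L^2))$. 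Applying the same argument to $-f$, which is also $L$-Lipschitz and has expectation $-\mathbb{E}[f(\bsy{g})]$, and taking a union bound yields the stated two-sided bound with the extra factor of $2$.

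The main obstacle is the log-Sobolev inequality itself, since it is a nontrivial input; I would prove it by tensorizing a two-point (Bernoulli) log-Sobolev inequality over $\{-1,+1\}^N$ and invoking the multivariate central limit theorem to pass to the Gaussian limit, or alternatively by exploiting the hypercontractivity of the Ornstein-Uhlenbeck semigroup. If one preferred to avoid log-Sobolev entirely, the Maurey-Pisier interpolation $\bsy{g}(\theta) = \sin\theta \cdot \bsy{g} + \cos\theta \cdot \bsy{g}'$ with an independent copy $\bsy{g}'$ gives the same inequality up to replacing $1/2$ by $2/\pi^2$ in the exponent, which would still be sufficient for the downstream SOMP-NS recovery bounds but would waste a constant.
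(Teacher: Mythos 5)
Your proof is correct: the paper itself offers no proof of this statement (it is imported verbatim as Theorem~8.40 of Foucart and Rauhut), and the argument you give --- mollification to reduce to smooth $f$ with $\|\nabla f\|_2\le L$, the Gaussian logarithmic Sobolev inequality, the Herbst differential inequality for $\log Z(\lambda)/\lambda$, and the Chernoff bound optimized at $\lambda=\varepsilon/L^2$, followed by a union bound over $\pm f$ --- is precisely the standard route used in that reference. Your closing remark about the Maurey--Pisier interpolation yielding the weaker constant $2/\pi^2$ in place of $1/2$ is also accurate.
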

The theorem above shows that Lipschitz functions $f(\bsy{g})$ tend to concentrate around their expectations when $\bsy{g}$ is distributed as a standard Gaussian random vector. Moreover, the concentration gets better as the Lipschitz constant $L$ decreases.\\

This theorem is intended to be used in conjunction with the function
\begin{equation}\label{eq:LipschitzFunc1}
f : \mathbb{R}^K \rightarrow \mathbb{R} : \bsy{g} \mapsto f(\bsy{g}) = \sum_{k=1}^K q_k \sigma_k |g_k|
\end{equation}
where $\sigma_k > 0$ ($1 \leq k \leq K$). This function will be shown to be equivalent to $\sum_{k=1}^{K} \left| \left\langle \bsy{\phi}_j, \bsy{e}_{k}  \right\rangle \right| q_k$ when $\bsy{e}_k$ is Gaussian.\\

We now wish to establish that $f$ is a Lipschitz function, compute the associated Lipschitz constant and determine its expectation. Let $\bsy{x}, \bsy{y} \in \mathbb{R}^K$, then, using the reverse triangle inequality and the Cauchy-Schwarz inequality,
\begin{align*}
\left| f(\bsy{x}) - f(\bsy{y}) \right| & = \left| \sum_{k=1}^K q_k \sigma_k |x_k| - \sum_{k=1}^K q_k \sigma_k |y_k| \right| \\
 & \leq \sum_{k=1}^K q_k \sigma_k \left| x_k - y_k \right| \\
 & \leq \| \bsy{q}^{(\bsy{\sigma})} \|_2 \| \bsy{x} - \bsy{y} \|_2
\end{align*}
Therefore, a valid Lipschitz constant $L$ of $f$ is equal to $\| \bsy{q}^{(\bsy{\sigma})} \|_2$. This is the best Lipschitz constant since, for $\bsy{y} = \bsy{0}$ and $\bsy{x} = \bsy{q}^{(\bsy{\sigma})}$, $| f(\bsy{x}) - f(\bsy{y}) | = \| \bsy{q}^{(\bsy{\sigma})} \|_2 \| \bsy{x} - \bsy{y} \|_2$.\\

Using the concentration inequalities for Lipschitz functions requires to know the value of $\mathbb{E}\left[f(\bsy{g}) \right]$. Using the linearity of the expectation and the fact that for $g \sim \mathcal{N}(0,1)$, $\mathbb{E} \lbrack |g| \rbrack = \sqrt{2/\pi}$, one easily obtains
\begin{equation*}
\mathbb{E}\left[f(\bsy{g}) \right] = \sqrt{\dfrac{2}{\pi}} \| \bsy{q}^{(\bsy{\sigma})}\|_1 = b(\bsy{q}, \bsy{\sigma}).
\end{equation*}
By using Theorem~\ref{thm:concentrationIneqLipschitz}, it is now possible to conclude that, for $\varepsilon > 0$,
\begin{equation}
\mathbb{P} \left(f(\bsy{g}) - b(\bsy{q}, \bsy{\sigma}) \geq \varepsilon \right) \leq \exp \left( - \kappa(\bsy{q}, \bsy{\sigma}) \varepsilon^2 \right).
\end{equation}
The final step of the development is to prove that the function $f$ defined above is distributed as $f(\bsy{g})$ for $\bsy{g} \sim \mathcal{N}(0, \bsy{I}_{m \times m})$. Indeed, $\sigma_k g_k \sim \mathcal{N}(0, \sigma_k^2)$ and $ \left\langle \bsy{\phi}_j, \bsy{e}_{k} \right\rangle = \sum_{i=1}^m (\bsy{\phi}_j)_i (\bsy{e}_{k})_i \sim \mathcal{N}(0,  \|\bsy{\phi}_j\|_2^2 \sigma_k^2) \sim \mathcal{N}(0,  \sigma_k^2)$.

The following lemma summarizes the discussion above and will be used to establish the theorem of the next section.

\begin{lem} \label{lem:finalBoundIndivProb}
Let $\bsy{\phi}_j \in \mathbb{R}^m$ where $\| \bsy{\phi}_j \|_2 = 1$ ($1 \leq j \leq n$). Let $\bsy{e}_k$ ($1 \leq k \leq K$) be independent random variables respectively distributed as $\mathcal{N}(0, \sigma_k^2 \bsy{I}_{m \times m})$ where $\sigma_k > 0$. It is also assumed that $q_k \geq 0$. Then, for $\varepsilon > 0$,
\begin{equation}
\mathbb{P} \left(\sum_{k=1}^{K} \left| \left\langle \bsy{\phi}_j, \bsy{e}_{k}  \right\rangle \right| q_k \geq b(\bsy{q}, \bsy{\sigma}) + \varepsilon \right) \leq \exp \left( - \kappa(\bsy{q}, \bsy{\sigma}) \varepsilon^2 \right).
\end{equation}
\end{lem}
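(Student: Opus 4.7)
The plan is to reduce the random variable $\sum_{k=1}^K |\langle \bsy{\phi}_j, \bsy{e}_k\rangle| q_k$ to a Lipschitz function of a standard Gaussian vector, and then directly invoke the Gaussian concentration inequality for Lipschitz functions stated in Theorem~\ref{thm:concentrationIneqLipschitz}. All four ingredients needed for this reduction are in fact already assembled in the discussion preceding the lemma; the proof is a matter of putting them together cleanly.

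First, I would argue the distributional identity: since $\|\bsy{\phi}_j\|_2 = 1$ and $\bsy{e}_k \sim \mathcal{N}(0,\sigma_k^2 \bsy{I}_{m\times m})$, the scalar $\langle \bsy{\phi}_j,\bsy{e}_k\rangle$ is a zero-mean Gaussian of variance $\sigma_k^2$. Using the independence of the noise vectors $\bsy{e}_k$, the $K$-tuple $(\langle \bsy{\phi}_j,\bsy{e}_1\rangle,\ldots,\langle \bsy{\phi}_j,\bsy{e}_K\rangle)$ has the same joint distribution as $(\sigma_1 g_1,\ldots,\sigma_K g_K)$ where $\bsy{g} = (g_1,\ldots,g_K)^{\mathrm{T}} \sim \mathcal{N}(0,\bsy{I}_{K\times K})$. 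Therefore
\[
\sum_{k=1}^{K} \left| \left\langle \bsy{\phi}_j, \bsy{e}_{k}  \right\rangle \right| q_k \;\stackrel{d}{=}\; f(\bsy{g}) := \sum_{k=1}^{K} q_k \sigma_k |g_k|.
\]

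Second, I would certify that $f$ is Lipschitz on $\mathbb{R}^K$ with constant $L = \|\bsy{q}^{(\bsy{\sigma})}\|_2$. Using the reverse triangle inequality $\big||x_k|-|y_k|\big| \leq |x_k - y_k|$ componentwise and then the Cauchy--Schwarz inequality yields $|f(\bsy{x}) - f(\bsy{y})| \leq \sum_k q_k \sigma_k |x_k - y_k| \leq \|\bsy{q}^{(\bsy{\sigma})}\|_2 \|\bsy{x}-\bsy{y}\|_2$, as already computed in the excerpt. Third, I would evaluate the expectation: by linearity and the standard fact $\mathbb{E}|g| = \sqrt{2/\pi}$ for $g \sim \mathcal{N}(0,1)$, we get $\mathbb{E}[f(\bsy{g})] = \sqrt{2/\pi}\,\|\bsy{q}^{(\bsy{\sigma})}\|_1 = b(\bsy{q},\bsy{\sigma})$.

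Finally, Theorem~\ref{thm:concentrationIneqLipschitz} applied to this $f$ gives, for all $\varepsilon > 0$,
\[
\mathbb{P}\!\left( f(\bsy{g}) - b(\bsy{q},\bsy{\sigma}) \geq \varepsilon \right) \leq \exp\!\left( \frac{-\varepsilon^2}{2 \|\bsy{q}^{(\bsy{\sigma})}\|_2^2} \right) = \exp\!\left( -\kappa(\bsy{q},\bsy{\sigma})\,\varepsilon^2 \right),
\]
where the last equality is just the definition~\eqref{eq:defKappa} of $\kappa$. Combining this with the distributional identity from the first step yields the claimed bound. There is no genuine obstacle here: the only subtlety worth underlining is that the Lipschitz constant $\|\bsy{q}^{(\bsy{\sigma})}\|_2$ is sharp (attained at $\bsy{y}=\bsy{0}$, $\bsy{x}=\bsy{q}^{(\bsy{\sigma})}$), which is why the exponent in the bound scales with the $\ell_2$ norm of $\bsy{q}^{(\bsy{\sigma})}$ rather than with the $\ell_1$ or $\ell_\infty$ norms---precisely the isotropy gain in $K$ that the authors advertise as the advantage of this Lipschitz-based approach over the cruder $|\langle \bsy{\phi}_j,\bsy{e}_k\rangle| \leq \|\bsy{e}_k\|_2$ bound.
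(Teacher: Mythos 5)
Your proposal is correct and follows essentially the same route as the paper: the authors likewise identify $\sum_{k}|\langle\bsy{\phi}_j,\bsy{e}_k\rangle|q_k$ in distribution with $f(\bsy{g})=\sum_k q_k\sigma_k|g_k|$ for standard Gaussian $\bsy{g}$, compute the Lipschitz constant $\|\bsy{q}^{(\bsy{\sigma})}\|_2$ and the mean $b(\bsy{q},\bsy{\sigma})$, and invoke Theorem~\ref{thm:concentrationIneqLipschitz}. Your write-up is in fact slightly more careful than the paper's in stating explicitly that the joint law of the $K$ inner products matches that of $(\sigma_1 g_1,\dots,\sigma_K g_K)$, which is the point where the independence of the $\bsy{e}_k$ is actually used.
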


It is important to highlight that Lemma \ref{lem:finalBoundIndivProb} provides an upper bound of the probability that $\sum_{k=1}^{K} \left| \left\langle \bsy{\phi}_j, \bsy{e}_{k}  \right\rangle \right| q_k$ is higher than $\varepsilon'  = b(\bsy{q}, \bsy{\sigma}) + \varepsilon$ only if $\varepsilon'$ is higher than $b(\bsy{q}, \bsy{\sigma})$.

\subsection{Concentration inequalities for $n$ atoms} \label{subsec:natoms}

The next problem to be tackled is that we would like to estimate an upper bound of the probability that $\max_{j \in \left[n \right]} ( \sum_{k=1}^{K} | \langle \bsy{\phi}_j, \bsy{e}_{k}  \rangle | q_k )$ is higher than a threshold $\varepsilon'$ instead of the same probability obtained for $\sum_{k=1}^{K} | \langle \bsy{\phi}_j, \bsy{e}_{k}  \rangle | q_k$. The issue we are facing is that the $n$ random variables $\sum_{k=1}^{K} | \langle \bsy{\phi}_j, \bsy{e}_{k}  \rangle | q_k$ (for $1 \leq j \leq n$) are not statistically independent which implies that the probability of upper bounding all these $n$ random variables simultaneously is not equal to the product of the probability to upper bound them separately. However, it remains possible to find a workaround using the union bound, as demonstrated by the following theorem.

\begin{thm}\label{thm:finalBoundManyAtomsProb}
Let $\bsy{\phi}_j \in \mathbb{R}^m$ where $\| \bsy{\phi}_j \|_2 = 1$ ($1 \leq j \leq n$). Let $\bsy{e}_k$ ($1 \leq k \leq K$) be independent random variables respectively distributed as $\mathcal{N}(0, \sigma_k^2 \bsy{I}_{m \times m})$ where $\sigma_k > 0$. It is also assumed that $q_k \geq 0$. Then, for $\varepsilon > 0$,
\begin{equation}
\mathbb{P} \left[\|\bsy{\Phi}^{\mathrm{T}} \bsy{E}^{(t)} \bsy{Q} \|_{\infty} \geq b(\bsy{q}, \bsy{\sigma}) + \varepsilon \right] \leq n \exp \left( - \kappa(\bsy{q}, \bsy{\sigma})  \varepsilon^2 \right).
\end{equation}
Equivalently,
\begin{equation}
\mathbb{P} \left[\|\bsy{\Phi}^{\mathrm{T}} \bsy{E}^{(t)} \bsy{Q} \|_{\infty} \leq b(\bsy{q}, \bsy{\sigma}) + \varepsilon \right] \geq 1 - n \exp \left( - \kappa(\bsy{q}, \bsy{\sigma}) \varepsilon^2 \right).
\end{equation}
\end{thm}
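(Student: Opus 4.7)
The plan is to bound $\|\bsy{\Phi}^{\mathrm{T}} \bsy{E}^{(t)} \bsy{Q}\|_{\infty}$ by combining a union bound over the $n$ atom indices with the single-atom concentration result of Lemma~\ref{lem:finalBoundIndivProb}. Using the identification in Equation~(\ref{eq:inftyToOneNorm}), one has
$$\|\bsy{\Phi}^{\mathrm{T}} \bsy{E}^{(t)} \bsy{Q}\|_{\infty} \;=\; \max_{j \in \left[n\right]} \sum_{k=1}^{K} \left|\langle \bsy{\phi}_j, (\bsy{E}^{(t)})_k \rangle\right| q_k ,$$
so the union bound gives
$$\mathbb{P}\!\left[\|\bsy{\Phi}^{\mathrm{T}} \bsy{E}^{(t)} \bsy{Q}\|_{\infty} \geq b(\bsy{q}, \bsy{\sigma}) + \varepsilon\right] \leq \sum_{j=1}^{n} \mathbb{P}\!\left[\sum_{k=1}^{K} \left|\langle \bsy{\phi}_j, (\bsy{E}^{(t)})_k \rangle\right| q_k \geq b(\bsy{q}, \bsy{\sigma}) + \varepsilon\right] ,$$
and it will be enough to show that each summand on the right is at most $\exp(-\kappa(\bsy{q},\bsy{\sigma})\,\varepsilon^2)$; the complementary statement of the theorem then follows immediately.

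Next, I would fix $j$ and reduce the per-atom probability to the hypothesis of Lemma~\ref{lem:finalBoundIndivProb}. Since $(\bsy{I} - \bsy{P}^{(t)})$ is a symmetric orthogonal projector, $\langle \bsy{\phi}_j, (\bsy{E}^{(t)})_k \rangle = \langle \bsy{\psi}_j, \bsy{e}_k \rangle$ with $\bsy{\psi}_j := (\bsy{I}-\bsy{P}^{(t)})\bsy{\phi}_j$ and $\alpha_j := \|\bsy{\psi}_j\|_2 \in [0,1]$. If $\alpha_j = 0$ the per-atom sum vanishes almost surely and the bound is trivial; otherwise I would set $\hat{\bsy{\psi}}_j := \bsy{\psi}_j/\alpha_j$, which is a unit vector, and factor out $\alpha_j$ to write the sum as $\alpha_j \sum_{k} |\langle \hat{\bsy{\psi}}_j, \bsy{e}_k \rangle|\, q_k$. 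Applying Lemma~\ref{lem:finalBoundIndivProb} to the unit atom $\hat{\bsy{\psi}}_j$ yields the tail bound $\exp(-\kappa(\bsy{q},\bsy{\sigma})\,\varepsilon^2)$ for $\sum_{k} |\langle \hat{\bsy{\psi}}_j, \bsy{e}_k \rangle|\, q_k$, and since $\alpha_j \leq 1$ and $b(\bsy{q},\bsy{\sigma}) + \varepsilon > 0$, the event $\{\alpha_j S_j \geq b + \varepsilon\}$ (with $S_j$ the unit-atom sum) is contained in $\{S_j \geq b + \varepsilon\}$, so the same tail bound transfers to the projected quantity.

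I expect the only delicate point to be the treatment of the projector $(\bsy{I} - \bsy{P}^{(t)})$: unlike Lemma~\ref{lem:finalBoundIndivProb}, the ``noise'' here is the degenerate Gaussian $(\bsy{E}^{(t)})_k$ whose variance in the direction $\bsy{\phi}_j$ is only $\alpha_j^{2}\sigma_k^{2} \leq \sigma_k^{2}$. The argument above resolves this by moving the projector onto the deterministic side, renormalising, and exploiting monotonicity in $\alpha_j \leq 1$ to certify that invoking Lemma~\ref{lem:finalBoundIndivProb} with the original $\sigma_k$ still produces a valid (though slightly conservative) upper bound. Assembling the per-atom bounds via the union bound then delivers the claimed $n\exp(-\kappa(\bsy{q},\bsy{\sigma})\varepsilon^2)$ estimate.
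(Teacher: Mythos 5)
Your proof is correct and follows essentially the same route as the paper: a union bound over the $n$ atoms combined with the single-atom tail bound of Lemma~\ref{lem:finalBoundIndivProb}. The one point where you differ is that you explicitly handle the projector $(\bsy{I}-\bsy{P}^{(t)})$ by renormalising $(\bsy{I}-\bsy{P}^{(t)})\bsy{\phi}_j$ and exploiting $\alpha_j\leq 1$ together with the nonnegativity of the per-atom sum, whereas the paper's proof of this theorem silently writes $\bsy{e}_k$ in place of $(\bsy{E}^{(t)})_k$ (which is only literally valid for $t=0$) and defers the projector issue to Lemmas~\ref{lem:fullSupportLem1} and~\ref{lem:fullSupportLem2} in the proof of Theorem~\ref{thm:finalGaussianThm}; your version is, if anything, the more careful one.
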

\begin{proof}
First of all, we observe that $\|\bsy{\Phi}^{\mathrm{T}} \bsy{E}^{(t)} \bsy{Q} \|_{\infty} = \max_{j \in \left[n \right]} \left( \sum_{k=1}^{K} \left| \left\langle \bsy{\phi}_j, \bsy{e}_{k}  \right\rangle \right| q_k \right)$.
Then, by union bound,
\begin{align*}
 & \mathbb{P} \left[\max_{j \in \left[n \right]} \left( \sum_{k=1}^{K} \left| \left\langle \bsy{\phi}_j, \bsy{e}_{k}  \right\rangle \right| q_k \right) \geq b(\bsy{q}, \bsy{\sigma}) + \varepsilon \right]  \\
 = & \mathbb{P} \left[ \bigcup_{j=1}^{n} \left[  \left( \sum_{k=1}^{K} \left| \left\langle \bsy{\phi}_j, \bsy{e}_{k}  \right\rangle \right| q_k \right) \geq b(\bsy{q}, \bsy{\sigma}) + \varepsilon \right] \right] \\
 \leq & \sum_{j=1}^{n} \mathbb{P} \left[ \left( \sum_{k=1}^{K} \left| \left\langle \bsy{\phi}_j, \bsy{e}_{k}  \right\rangle \right| q_k \right) \geq b(\bsy{q}, \bsy{\sigma}) + \varepsilon \right] \\
  \leq & n \exp \left( - \kappa(\bsy{q}, \bsy{\sigma}) \varepsilon^2 \right).
\end{align*}
The first inequality results from the union bound while the second inequality holds because of Lemma \ref{lem:finalBoundIndivProb}.
\end{proof}

\subsection{Full recovery of the support}\label{subsec:fullrecovery}
Theorem~\ref{thm:finalBoundManyAtomsProb} implicitly provides a lower bound on the probability of making correct decisions during iteration $0$. It is indeed possible to use either Equation~(\ref{eq:EqRIPBound}) or (\ref{eq:EqMuBound}) in conjunction with Equation~(\ref{eq:ProbInequalities}) and Theorem~\ref{thm:finalBoundManyAtomsProb} to obtain the desired lower bound. Nevertheless, a lower bound on the probability that SOMP-NS makes correct decisions from iteration $0$ to iteration $s < |S|$ remains to be found. This is the purpose of this section.\\

Before establishing the main result, one needs Lemmas~\ref{lem:fullSupportLem1} and \ref{lem:fullSupportLem2}. \\

\begin{lem}[On the distribution of $\langle \bsy{\phi}_j, (\bsy{I} - \bsy{P}) \bsy{e}_k \rangle$]\label{lem:fullSupportLem1}
If $\bsy{e}_{k}$ is distributed as $\mathcal{N}(0, \sigma_k^2 \bsy{I}_{m \times m})$ and $\bsy{P}$ is a fixed orthogonal projector matrix, then
$\left\langle \bsy{\phi}_j, (\bsy{I} - \bsy{P}) \bsy{e}_k \right\rangle$ is distributed as $\mathcal{N}(0, \tilde{\sigma}_k^2)$ where $\tilde{\sigma}_k \leq \sigma_k \|\bsy{\phi}_j \|_2 = \sigma_k$.
\end{lem}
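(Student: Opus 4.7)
The plan is to reduce the inner product to a linear functional of $\bsy{e}_k$ and then exploit the fact that linear combinations of jointly Gaussian variables remain Gaussian, so the only work is to compute the variance and bound it via the non-expansive property of orthogonal projectors.

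First I would rewrite
\begin{equation*}
\langle \bsy{\phi}_j, (\bsy{I} - \bsy{P}) \bsy{e}_k \rangle = \bsy{\phi}_j^{\mathrm{T}} (\bsy{I} - \bsy{P}) \bsy{e}_k = \bigl( (\bsy{I} - \bsy{P}) \bsy{\phi}_j \bigr)^{\mathrm{T}} \bsy{e}_k,
\end{equation*}
using the fact that $\bsy{I} - \bsy{P}$ is symmetric (as the orthogonal projector onto $\mathrm{span}(\bsy{\Phi}_{S_t})^{\perp}$). Setting $\bsy{u} := (\bsy{I} - \bsy{P}) \bsy{\phi}_j$, the quantity of interest is $\bsy{u}^{\mathrm{T}} \bsy{e}_k$, i.e. a deterministic linear functional of the Gaussian vector $\bsy{e}_k \sim \mathcal{N}(0, \sigma_k^2 \bsy{I}_{m \times m})$. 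Standard properties of the multivariate Gaussian then give immediately that $\bsy{u}^{\mathrm{T}} \bsy{e}_k \sim \mathcal{N}(0, \sigma_k^2 \|\bsy{u}\|_2^2)$, so one can set $\tilde{\sigma}_k^2 := \sigma_k^2 \|\bsy{u}\|_2^2$.

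It then remains to upper bound $\|\bsy{u}\|_2$. Since $\bsy{I} - \bsy{P}$ is an orthogonal projector, it is idempotent and symmetric, hence its operator $\ell_2$ norm is at most $1$. Equivalently, decomposing $\bsy{\phi}_j = \bsy{P}\bsy{\phi}_j + (\bsy{I} - \bsy{P})\bsy{\phi}_j$ into orthogonal components and applying the Pythagorean identity yields $\|(\bsy{I} - \bsy{P}) \bsy{\phi}_j\|_2^2 + \|\bsy{P} \bsy{\phi}_j\|_2^2 = \|\bsy{\phi}_j\|_2^2 = 1$, so $\|\bsy{u}\|_2 \leq \|\bsy{\phi}_j\|_2 = 1$. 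Combining with the variance computation gives $\tilde{\sigma}_k \leq \sigma_k \|\bsy{\phi}_j\|_2 = \sigma_k$, as claimed.

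There is no real obstacle: the lemma is essentially a restatement of the fact that an affine image of a Gaussian is Gaussian, combined with the contractive nature of orthogonal projections. The only subtlety worth highlighting is that $\bsy{P}$ must be treated as \emph{fixed} (non-random) for this reasoning to be valid; in the later use of the lemma this will need care because $\bsy{P}^{(t)}$ depends on previously selected atoms and is therefore in principle random, so one should either condition on the event that the correct atoms have been selected so far (which determines $S_t$ deterministically) or invoke the result pointwise for each admissible realization of $\bsy{P}^{(t)}$.
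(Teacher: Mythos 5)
Your proof is correct and follows essentially the same route as the paper's: move the projector onto the atom via symmetry of $\bsy{I} - \bsy{P}$, define the auxiliary vector $(\bsy{I} - \bsy{P})\bsy{\phi}_j$, note its $\ell_2$ norm is at most $\|\bsy{\phi}_j\|_2 = 1$, and invoke the Gaussianity of linear functionals of a Gaussian vector. Your closing remark about $\bsy{P}^{(t)}$ needing to be treated as fixed (handled in the paper by enumerating all admissible realizations of the projector in the proof of Theorem~\ref{thm:finalGaussianThm}) is a correct and worthwhile observation, but the core argument is the same.
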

\begin{proof} We have
\begin{equation*}
\left\langle \bsy{\phi}_j, (\bsy{I} - \bsy{P})\bsy{e}_{k} \right\rangle = \left\langle (\bsy{I} - \bsy{P}) \bsy{\phi}_j, \bsy{e}_{k} \right\rangle.
\end{equation*}
An auxiliary atom $\tilde{\bsy{\phi}}_j$ can be defined for index $j$, $\tilde{\bsy{\phi}}_j := (\bsy{I} - \bsy{P}) \bsy{\phi}_j$. This atom satisfies the inequality $\| \tilde{\bsy{\phi}}_j \|_2 \leq \| \bsy{\phi}_j \|_2 = 1$. Let us now observe that if the entries of $\bsy{e}_k$ are i.i.d. random variables distributed as $\mathcal{N}(0, \sigma_k^2)$, then $\langle \tilde{\bsy{\phi}}_j, \bsy{e}_{k} \rangle \sim  \mathcal{N}(0,  \|\tilde{\bsy{\phi}}_j\|_2^2 \sigma_k^2))$. The result immediately follows.
\end{proof}

Lemma $\ref{lem:fullSupportLem1}$ basically establishes that, for a fixed projection matrix $\bsy{P}$, $\left\langle \bsy{\phi}_j, (\bsy{I} - \bsy{P}) \bsy{e}_k \right\rangle$ is distributed as a mean-zero Gaussian random variable whose variance is always lower than that of the entries of $\bsy{e}_k $. This result will enable us to apply a statistical analysis similar to that of the previous section by keeping the Gaussian hypothesis.

\begin{lem}[Upper bounding the sum of random variables]\label{lem:fullSupportLem2}
We consider the random variables $X$, $Y_1 \sim \mathcal{N}(0, \sigma_{Y_1}^2)$ and $Y_2 \sim \mathcal{N}(0, \sigma_{Y_2}^2)$ where $\sigma_{Y_1} > \sigma_{Y_2} \geq 0$. It is assumed that $X \independent Y_1$ and $X \independent Y_2$. Then, for all $\varepsilon \in \mathbb{R}$, 
\begin{equation}
\mathbb{P} \left[ X + |Y_1| \leq \varepsilon  \right] \leq \mathbb{P} \left[ X + |Y_2| \leq \varepsilon  \right].
\end{equation}
\end{lem}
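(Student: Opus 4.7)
The plan is to show that $|Y_2|$ is stochastically dominated by $|Y_1|$ and then lift this pointwise CDF inequality through conditioning on $X$, exploiting the independence assumption.

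First I would establish stochastic dominance of $|Y_1|$ over $|Y_2|$. For $t \geq 0$, the CDF of $|Y_i|$ satisfies
\begin{equation*}
F_{|Y_i|}(t) \;=\; \mathbb{P}\bigl[|Y_i| \leq t\bigr] \;=\; 2 \Phi\!\bigl(t / \sigma_{Y_i}\bigr) - 1,
\end{equation*}
where $\Phi$ is the standard normal CDF (for $t < 0$ both CDFs vanish). Since $\sigma_{Y_1} > \sigma_{Y_2} \geq 0$, one has $t/\sigma_{Y_1} \leq t/\sigma_{Y_2}$ for every $t \geq 0$, so monotonicity of $\Phi$ yields $F_{|Y_1|}(t) \leq F_{|Y_2|}(t)$ for every $t \in \mathbb{R}$.

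Next I would condition on $X$. Using $X \independent Y_i$, the conditional law of $|Y_i|$ given $X$ coincides with its marginal law, hence
\begin{equation*}
\mathbb{P}\bigl[X + |Y_i| \leq \varepsilon\bigr] \;=\; \mathbb{E}\!\left[ \mathbb{P}\bigl[|Y_i| \leq \varepsilon - X \,\big|\, X\bigr] \right] \;=\; \mathbb{E}\bigl[ F_{|Y_i|}(\varepsilon - X) \bigr] .
\end{equation*}
By the pointwise dominance from the first step, $F_{|Y_1|}(\varepsilon - X) \leq F_{|Y_2|}(\varepsilon - X)$ almost surely (the case $\varepsilon - X < 0$ being trivial since both sides are zero). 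Taking expectations preserves the inequality and gives exactly $\mathbb{P}[X + |Y_1| \leq \varepsilon] \leq \mathbb{P}[X + |Y_2| \leq \varepsilon]$.

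There is essentially no hard step here; the only minor care required is bookkeeping the sign of $\varepsilon - X$ so that the CDF comparison is invoked only where it is meaningful, and noting that the hypothesis $X \independent Y_1$ and $X \independent Y_2$ (rather than joint independence of $X$ from the pair $(Y_1, Y_2)$) is enough because the two probabilities are computed separately and each relies only on one marginal independence.
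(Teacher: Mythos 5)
Your proof is correct and follows essentially the same route as the paper: establish the pointwise CDF domination $F_{|Y_1|}(t) \leq F_{|Y_2|}(t)$ from monotonicity of the Gaussian CDF, then transfer it through conditioning on $X$ using the marginal independence assumptions. The only (cosmetic) difference is that the paper writes the conditioning step as an integral against a density $f_X$, whereas your conditional-expectation formulation avoids implicitly assuming that $X$ admits a density, which is slightly cleaner given that the lemma is later applied with $X$ a sum of half-normal variables.
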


\begin{proof}
We know that $F_{|Y_1|}(y) = \mathrm{erf} (y/(\sqrt{2} \sigma_{Y_1}))$ and $F_{|Y_2|}(y) = \mathrm{erf} (y/(\sqrt{2} \sigma_{Y_2}))$ where 
\begin{equation*}
\mathrm{erf} \left( x \right) := \dfrac{2}{\sqrt{\pi}} \int_{0}^{x} \exp \left( - \xi^2 \right) \mathrm{d}\xi.
\end{equation*}

Since $\mathrm{erf}(x)$ is a monotonically increasing function, one obtains $\forall y > 0, \, F_{|Y_1|}(y) < F_{|Y_2|}(y)$. Thus,
\begin{align*}
\mathbb{P} \left[X + |Y_1| \leq \varepsilon  \right] & = \int_{-\infty}^\varepsilon \mathbb{P} \big[ |Y_1| \leq \varepsilon - x \big] f_X(x) \mathrm{d}x \\
 & \leq \int_{-\infty}^\varepsilon \mathbb{P} \big[ |Y_2| \leq \varepsilon - x \big] f_X(x) \mathrm{d}x\\
 & = \mathbb{P} \left[X + |Y_2| \leq \varepsilon  \right]. \qedhere
\end{align*}
\end{proof}

In the rest of this paper, the random variable $X$ of Lemma~\ref{lem:fullSupportLem2} will be replaced with a sum of $K-1$ independent random variables exhibiting half-normal distributions, \textit{i.e.}, $X = \sum_{k=1}^{K-1} |X_k|$ where the $X_k$ ($1 \leq k \leq K-1$) are independent and $X_k \sim \mathcal{N}(0, \sigma_k^2)$. Thereby, an immediate corollary of the lemma above is that the probability of upper bounding by $\varepsilon > 0$ the sum of statistically independent random variables exhibiting half-normal distributions is always decreased whenever the variance of at least one of the random variables is increased.\\

Let us now state the key theoretical result of this paper, which provides a lower bound on the probability that SOMP-NS selects correct atoms during the first $s+1$ iterations.

\begin{thm}[A lower bound on the probability of partial or full support recovery by SOMP-NS]\label{thm:finalGaussianThm}
Let $\bsy{e}_k$ ($1 \leq k \leq K$) be statistically independent Gaussian random vectors respectively distributed as $\mathcal{N}(0, \sigma_k^2 \bsy{I}_{m \times m})$. Let $\bsy{E} = \big(\bsy{e}_1, \dots , \bsy{e}_K \big)$. Let $\bsy{X} \in \mathbb{R}^{m \times K}$ be the signal matrix whose support is $S$. Let also $\bsy{\phi}_1, \dots, \bsy{\phi}_n$ be unit-norm vectors in $\mathbb{R}^m$ and $\bsy{\Phi} = \big( \bsy{\phi}_1, \dots, \bsy{\phi}_n \big)$ be the corresponding matrix which is assumed to satisfy the RIP with $|S|$-th RIC $\delta_{|S|} < 1$. Let $q_1, \dots, q_K \geq 0$. Then, SOMP-NS with dictionary matrix $\bsy{\Phi}$, weights $q_k$ ($1 \leq k \leq K$) and signal $\bsy{Y} = \bsy{\Phi} \bsy{X} + \bsy{E}$ is ensured to make correct decisions during the first $s+1$ iterations, \textit{i.e.}, from iteration $0$ to iteration $s$ included, with probability higher than
\begin{equation}\label{eq:finalTheoremProb}
1 - n \mathcal{C}_s \exp \left( - \kappa(\bsy{q}, \bsy{\sigma}) \overline{\varepsilon}(\bsy{q}, \bsy{\sigma})^2 \right)
\end{equation}
whenever $\overline{\varepsilon}(\bsy{q}, \bsy{\sigma}) := \varepsilon(\bsy{q}) - b(\bsy{q}, \bsy{\sigma}) > 0$ where
\begin{gather}
\mathcal{C}_s = \sum_{i=0}^{s} \binom{|S|}{i}, \\
\varepsilon(\bsy{q}) := 0.5 \left(1 - \|\bsy{\Phi}_S^+ \bsy{\Phi}_{\overline{S}}    \|_{1} \right) (1-\delta_{|S|})  \min_{j \in S} \sum_{k=1}^{K} | X_{j,k} | q_k,
\end{gather}
$\kappa(\bsy{q}, \bsy{\sigma})$ and $b(\bsy{q}, \bsy{\sigma})$ being defined in Equation~(\ref{eq:defKappa}) and (\ref{eq:defb}), respectively.
\end{thm}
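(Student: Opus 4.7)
The strategy is to combine the deterministic ERC of Theorem~\ref{thm:TheoFramework1} with the Gaussian concentration of Theorem~\ref{thm:finalBoundManyAtomsProb}, dealing with the fact that the projector $\bsy{P}^{(t)}$ is data-dependent by means of a union bound over the possible realizations of $S_t$.

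First I would argue that if SOMP-NS has made correct decisions at iterations $0, \dots, t-1$, then by Theorem~\ref{thm:TheoFramework1} combined with Corollary~\ref{corr:RIPBasedBtBound}, a sufficient condition for a correct decision at iteration $t$ is $\|\bsy{\Phi}^{\mathrm{T}} \bsy{E}^{(t)} \bsy{Q}\|_{\infty} < \varepsilon(\bsy{q})$. Here I exploit the monotonicity of the RIC, $\delta_{|\mathcal{J}_t|} \leq \delta_{|S|}$ since $|\mathcal{J}_t| \leq |S|$, in order to replace the iteration-dependent lower bound of Corollary~\ref{corr:RIPBasedBtBound} by the iteration-uniform quantity $\varepsilon(\bsy{q})$. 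Consequently, failure to recover correctly during iterations $0, \dots, s$ implies that for some such $t$, $\|\bsy{\Phi}^{\mathrm{T}} \bsy{E}^{(t)} \bsy{Q}\|_{\infty} \geq \varepsilon(\bsy{q})$ while $S_t \subseteq S$ with $|S_t| = t$.

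Next I would enumerate the possible realizations of $S_t$: there are $\binom{|S|}{t}$ subsets of $S$ of cardinality $t$, and on the failure event at iteration $t$ the data-dependent $S_t$ must equal one of them. A union bound over these subsets and over $t \in \{0, \dots, s\}$ yields
\[
\mathbb{P}[\text{failure}] \;\leq\; \sum_{t=0}^{s}\;\sum_{\substack{A \subseteq S \\ |A|=t}} \mathbb{P}\!\left[\bigl\|\bsy{\Phi}^{\mathrm{T}}(\bsy{I} - \bsy{P}_A)\bsy{E}\bsy{Q}\bigr\|_{\infty} \geq \varepsilon(\bsy{q})\right],
\]
where $\bsy{P}_A$ denotes the (now deterministic) orthogonal projector onto $\mathrm{span}(\bsy{\Phi}_A)$.

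For each fixed $A$, Lemma~\ref{lem:fullSupportLem1} gives $\langle \bsy{\phi}_j, (\bsy{I} - \bsy{P}_A)\bsy{e}_k\rangle \sim \mathcal{N}(0, \tilde{\sigma}_{k,j}^2)$ with $\tilde{\sigma}_{k,j} \leq \sigma_k$, and the $\bsy{e}_k$ remain mutually independent across $k$. Iterating Lemma~\ref{lem:fullSupportLem2} one coordinate $k$ at a time (with the remaining independent half-normal terms playing the role of $X$) shows that inflating each $\tilde{\sigma}_{k,j}$ back up to $\sigma_k$ can only increase the tail probability of the sum defining $\|\bsy{\Phi}^{\mathrm{T}}(\bsy{I} - \bsy{P}_A)\bsy{E}\bsy{Q}\|_{\infty}$. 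The inner probability is therefore upper bounded by $\mathbb{P}[\|\bsy{\Phi}^{\mathrm{T}}\bsy{E}\bsy{Q}\|_{\infty} \geq \varepsilon(\bsy{q})]$, which Theorem~\ref{thm:finalBoundManyAtomsProb} controls by $n\exp(-\kappa(\bsy{q},\bsy{\sigma})\overline{\varepsilon}(\bsy{q},\bsy{\sigma})^2)$ whenever $\overline{\varepsilon}(\bsy{q},\bsy{\sigma})>0$, which is precisely the condition needed for the Lipschitz concentration to be non-trivial. Summing over $A$ and $t$ produces the combinatorial factor $\mathcal{C}_s = \sum_{t=0}^{s}\binom{|S|}{t}$ and yields the claimed bound. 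The main obstacle is the reduction from the random projector $\bsy{P}^{(t)}$ (whose construction depends on $\bsy{E}$) to a deterministic $\bsy{P}_A$: the union bound over the $\mathcal{C}_s$ possible ``histories'' of the algorithm, followed by the variance-inflation reduction via Lemmas~\ref{lem:fullSupportLem1}--\ref{lem:fullSupportLem2}, is what decouples $\bsy{E}$ from the projector while preserving the Gaussian structure on which Theorem~\ref{thm:finalBoundManyAtomsProb} relies.
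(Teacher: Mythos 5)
Your proposal is correct and follows essentially the same route as the paper: reduce to the iteration-uniform sufficient condition $\|\bsy{\Phi}^{\mathrm{T}}\bsy{E}^{(t)}\bsy{Q}\|_{\infty}<\varepsilon(\bsy{q})$ via Theorem~\ref{thm:TheoFramework1}, Corollary~\ref{corr:RIPBasedBtBound} and the monotonicity of the RIC, union-bound over the $\binom{|S|}{t}$ possible supports at each iteration to decouple the projector from the noise, and then use Lemmas~\ref{lem:fullSupportLem1} and \ref{lem:fullSupportLem2} to inflate the projected variances back to $\sigma_k$ before invoking the Lipschitz concentration bound. The only cosmetic difference is that you invoke Theorem~\ref{thm:finalBoundManyAtomsProb} (which already packages the union bound over the $n$ atoms) where the paper applies Lemma~\ref{lem:finalBoundIndivProb} atom by atom; the resulting bound is identical, and your explicit remark that Lemma~\ref{lem:fullSupportLem2} is iterated one coordinate $k$ at a time makes a step the paper leaves implicit.
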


\begin{proof}
First, we observe that $\langle \bsy{\phi}_j, (\bsy{I} - \bsy{P}^{(t)} )\bsy{e}_{k}  \rangle = \langle (\bsy{I} - \bsy{P}^{(t)} ) \bsy{\phi}_j, \bsy{e}_{k}  \rangle$ where it is convenient to define the $j$-th auxiliary atom at iteration $t$ as $\bsy{\phi}_j^{(t)} = (\bsy{I} - \bsy{P}^{(t)} ) \bsy{\phi}_j$. According to (\ref{eq:ProbInequalities}), at iteration $t$ (for $0 \leq t \leq s$), a sufficient condition for \mbox{SOMP-NS} to make a correct decision at iteration $t$ is given by 
\begin{equation*}
\begin{aligned}
\max_{j \in \left[n \right]} \left( \sum_{k=1}^{K} \left| \left\langle \bsy{\phi}_j^{(t)}, \bsy{e}_{k}  \right\rangle \right| q_k \right) \leq  0.5 (1 - \|\bsy{\Phi}_S^+ \bsy{\Phi}_{\overline{S}} \|_{1} ) 
 (1-\delta_{|\mathcal{J}_t|}) \min_{j \in S} \sum_{k=1}^{K} | X_{j,k} | q_k.
\end{aligned}
\end{equation*}
Since $a \leq b$ implies $\delta_a \leq \delta_b$, then $1-\delta_{|\mathcal{J}_t|} \geq 1-\delta_{|S|}$ which shows that a less tight sufficient condition for SOMP-NS to make a correct decision at iteration $t$ is
\begin{equation}\label{eq:suffCondWSOMPStept}
\begin{aligned}
\max_{j \in \left[n \right]} \left( \sum_{k=1}^{K} \left| \left\langle \bsy{\phi}_j^{(t)}, \bsy{e}_{k}  \right\rangle \right| q_k \right) \leq \varepsilon (\bsy{q}) = 0.5 (1 - \|\bsy{\Phi}_S^+ \bsy{\Phi}_{\overline{S}} \|_{1} )
 (1-\delta_{|S|})  \min_{j \in S} \sum_{k=1}^{K} | X_{j,k} | q_k.
\end{aligned}
\end{equation}
Therefore, if the condition above holds for $t=0$, SOMP-NS is guaranteed to pick a correct atom at iteration $0$. At iteration $1$, the orthogonal projector $\bsy{P}^{(1)}$ can take $|S|$ different values (each value of $\bsy{P}^{(1)}$ corresponds to a specific support $S_1 \subset S$). Therefore, if (\ref{eq:suffCondWSOMPStept}) holds for every possible projector matrix at iteration $1$, SOMP-NS is guaranteed to pick a correct atom at iteration $1$. Thus, we need to satisfy $|S| n + 1$ equations of the form $\sum_{k=1}^{K} \left| \left\langle \bsy{\phi}, \bsy{e}_{k}  \right\rangle \right| q_k \leq \varepsilon(\bsy{q})$ (where $\| \bsy{\phi} \|_2 \leq 1$) to ensure that SOMP-NS picks correct atoms during the first two iterations.\\

Extending the previous train of thought from iteration $0$ to iteration $s$, one easily comes to the conclusion that $\mathcal{C}_{s} n$ equations of the form $\sum_{k=1}^{K} \left| \left\langle \bsy{\phi}, \bsy{e}_{k}  \right\rangle \right| q_k \leq \varepsilon(\bsy{q})$ (where $\| \bsy{\phi} \|_2 \leq 1$) should be satisfied since, at iteration $t$, there exist $\binom{|S|}{t}$ possible realizations of the orthogonal projector $\bsy{P}^{(t)}$. Using the union bound as in the proof of Theorem~\ref{thm:finalBoundManyAtomsProb} shows that the probability of satisfying the $\mathcal{C}_s n$ equations is lower bounded by
\begin{equation}\label{eq:trueEquationModel}
1 - \sum_{t=0}^{s} \sum_{i=1}^{\binom{|S|}{t}} \sum_{j=1}^{n} \mathbb{P} \left[ \sum_{k=1}^{K} \left| \left\langle (\bsy{I} - \bsy{P}^{(t,i)}) \bsy{\phi}_j, \bsy{e}_{k}  \right\rangle \right| q_k \geq \varepsilon(\bsy{q}) \right]
\end{equation}
where $\bsy{P}^{(t,i)}$ is the $i$-th possible realization of the orthogonal projector at iteration $t$ assuming that only correct atoms have been picked before iteration $t$.\\

Lemmas \ref{lem:fullSupportLem1} and \ref{lem:fullSupportLem2} imply that
\begin{equation}
\begin{aligned}
\mathbb{P} \left[ \sum_{k=1}^{K} \left| \left\langle (\bsy{I} - \bsy{P}^{(t,i)})\bsy{\bsy{\phi}_j}, \bsy{e}_{k}  \right\rangle \right| q_k \geq \varepsilon(\bsy{q}) \right] \leq 
 \mathbb{P} \left[ \sum_{k=1}^{K} \left| \left\langle \bsy{\bsy{\phi}_j}, \bsy{e}_{k}  \right\rangle \right| q_k \geq \varepsilon(\bsy{q}) \right].
\end{aligned}
\end{equation}
Furthermore, since $\varepsilon(\bsy{q}) = \overline{\varepsilon}(\bsy{q}, \bsy{\sigma}) + b(\bsy{q}, \bsy{\sigma})$, Lemma \ref{lem:finalBoundIndivProb} shows that
\begin{equation*}
\mathbb{P} \left[\sum_{k=1}^{K} \left| \left\langle \bsy{\phi}_j, \bsy{e}_{k}  \right\rangle \right| q_k \geq \varepsilon(\bsy{q}) \right] \leq \exp \left( - \kappa(\bsy{q}, \bsy{\sigma})  \overline{\varepsilon}(\bsy{q}, \bsy{\sigma})^2 \right).
\end{equation*}
whenever $\overline{\varepsilon}(\bsy{q}, \bsy{\sigma}) > 0$.
Finally, the probability that SOMP-NS chooses correct atoms from iteration $0$ to iteration $s$ (let us denote this event by $E_{\mathrm{succ}}^{(s)}$) satisfies
\begin{equation*}
\mathbb{P} \left[E_{\mathrm{succ}}^{(s)} \right] \geq 1 - n \mathcal{C}_s \exp \left( - \kappa(\bsy{q}, \bsy{\sigma}) \overline{\varepsilon}(\bsy{q}, \bsy{\sigma})^2 \right). \qedhere
\end{equation*}
\end{proof}

The theorem above translates several intuitive realities. First, by examining the expression of $\varepsilon$, one concludes that several quantities influence the probability of correct recovery:
\begin{itemize}
\item The expression $\|\bsy{\Phi}_S^+ \bsy{\Phi}_{\overline{S}}    \|_{1}$ quantifies the robustness of the recovery in the noiseless case. It is evident that the margin of error when deciding which atom to choose in the noiseless case will contribute to determine the admissible noise level in the noisy case.
\item The term $(1-\delta_{|S|})$ translates the ability of the dictionary matrix $\bsy{\Phi}$ to maintain the $\ell_2$ norm of $|S|$-sparse signals, which ensures that the norm of the columns of $\bsy{Z}^{(t)}$ remain sufficiently high to avoid being absorbed by the noise matrix $\bsy{E}^{(t)}$.
\item The minimized sum $\min_{j \in S} \sum_{k=1}^{K} | X_{j,k} | q_k$ depicts the idea that the weighted sum of the coefficients associated with every atom should be high enough to allow SOMP-NS to identify them when noise is added to the measurements. This term simultaneously  captures the influence of the signal to be recovered as well as that of the weights.
\end{itemize}


Moreover, $\kappa(\bsy{q}, \bsy{\sigma})$ indicates that increasing the noise variances decreases the probability of support recovery. Also, increasing the weights naturally augments the power of the noise signal $\bsy{E}^{(t)}$. Nevertheless, it should be expected that this effect is counterbalanced by $\varepsilon(\bsy{q})$ as the latter variable is also a function of the weights. Finally, $n$ translates the fact that the noise affects every atom while $\mathcal{C}_s$ takes into account the existence of $s$ iterations.\\

It is worth noticing that $\mathcal{C}_{|S|-1} = 2^{|S|}-1 \leq 2^{|S|}$ so that it indicates that the probability of full support recovery is lower bounded by 
\begin{equation*}
1 - n 2^{|S|} \exp \left( - \kappa(\bsy{q}, \bsy{\sigma}) \overline{\varepsilon}(\bsy{q}, \bsy{\sigma})^2 \right).
\end{equation*}

On the basis of past results in the literature, it will be suggested in Section~\ref{subsec:GribonvalrelatedThm} that $\mathcal{C}_s$ is an artifact of our proof and should ideally be replaced with a function that increases more slowly as $|S|$ is augmented.\\

Also, as a last remark, Theorem~\ref{thm:finalGaussianThm} can be rephrased by stating that the probability of failure of SOMP-NS from iteration $0$ to iteration $s$, \textit{i.e.}, at least one wrong atom is chosen during the first $s+1$ iterations, admits the upper bound
\begin{equation}
n \mathcal{C}_s \exp \left( - \kappa(\bsy{q}, \bsy{\sigma}) \overline{\varepsilon}(\bsy{q}, \bsy{\sigma})^2 \right).
\end{equation}

Theorem~\ref{thm:finalGaussianThm} is now to be particularized using the coherence of $\bsy{\Phi}$ instead of the RIP and the ERC.

\begin{thm}[A coherence-based lower bound on the probability of full support recovery by SOMP-NS]\label{thm:finalGaussianThmCoherence}
Let $\bsy{e}_k$ ($1 \leq k \leq K$) be statistically independent Gaussian random vectors respectively distributed as $\mathcal{N}(0, \sigma_k^2 \bsy{I}_{m \times m})$. Let $\bsy{E} = \big(
\bsy{e}_1, \dots, \bsy{e}_K \big)$. Let also $\bsy{X} \in \mathbb{R}^{m \times K}$ be the signal matrix whose support is $S$. Let also $\bsy{\phi}_1, \dots, \bsy{\phi}_n$ be unit-norm vectors in $\mathbb{R}^m$ and $\bsy{\Phi} = \big( \bsy{\phi}_1, \dots, \bsy{\phi}_n \big)$ be the corresponding matrix whose coherence $\mu$ is assumed to satisfy $\mu < \frac{1}{2 |S| -1}$. Let $q_1, \dots, q_K \geq 0$. Then, SOMP-NS with dictionary matrix $\bsy{\Phi}$, weights $q_k$ ($1 \leq k \leq K$) and signal $\bsy{Y} = \bsy{\Phi} \bsy{X} + \bsy{E}$ is ensured to make correct decisions during the first $s+1$ iterations, \textit{i.e.}, from iteration $0$ to iteration $s$ included, with probability higher than
\begin{equation}
1 - n \mathcal{C}_s \exp \left( - \kappa(\bsy{q}, \bsy{\sigma}) \overline{\varepsilon}^{(\mu)}(\bsy{q}, \bsy{\sigma})^2 \right)
\end{equation}
whenever $\overline{\varepsilon}^{(\mu)}(\bsy{q}, \bsy{\sigma}) := \varepsilon^{(\mu)}(\bsy{q}) - b(\bsy{q}, \bsy{\sigma}) > 0$, where
\begin{gather}
\mathcal{C}_s = \sum_{i=0}^{s} \binom{|S|}{i}, \nonumber \\
\varepsilon^{(\mu)}(\bsy{q}) := 0.5 (1 - \mu (2|S|-1))  \min_{j \in S} \sum_{k=1}^{K} | X_{j,k} | q_k,
\end{gather}
$\kappa(\bsy{q}, \bsy{\sigma})$ and $b(\bsy{q}, \bsy{\sigma})$ being defined in Equation~(\ref{eq:defKappa}) and (\ref{eq:defb}), respectively.
\end{thm}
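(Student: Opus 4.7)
The plan is to derive Theorem~\ref{thm:finalGaussianThmCoherence} as a direct specialization of Theorem~\ref{thm:finalGaussianThm} by replacing the RIP constant and the ERC quantity $\|\bsy{\Phi}_S^+ \bsy{\Phi}_{\overline{S}}\|_{1}$ with coherence-based upper bounds that are already available in the excerpt. Concretely, the goal is to show that the hypothesis $\mu < 1/(2|S|-1)$ implies both that the hypotheses of Theorem~\ref{thm:finalGaussianThm} are satisfied and that $\varepsilon(\bsy{q}) \geq \varepsilon^{(\mu)}(\bsy{q})$, from which the monotonicity of $y \mapsto \exp(-\kappa(\bsy{q},\bsy{\sigma}) y^{2})$ for $y > 0$ finishes the job.

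First I would verify the RIP hypothesis: since $\mu < 1/(2|S|-1) \leq 1/(|S|-1)$, Equation~(\ref{eq:RICToCoherence}) (a consequence of Lemma~\ref{lem:RICLambda}) gives $\delta_{|S|} \leq (|S|-1)\mu < 1$, so Theorem~\ref{thm:finalGaussianThm} is applicable. In particular $1 - \delta_{|S|} \geq 1 - (|S|-1)\mu > 0$. Second, I would invoke the Tropp estimate recalled after Theorem~\ref{thm:ERC}: under the assumption $\mu < 1/(2|S|-1)$, one has
\begin{equation*}
\|\bsy{\Phi}_S^+ \bsy{\Phi}_{\overline{S}}\|_{1} \;\leq\; \frac{|S|\,\mu}{1-(|S|-1)\mu},
\end{equation*}
so that
\begin{equation*}
1 - \|\bsy{\Phi}_S^+ \bsy{\Phi}_{\overline{S}}\|_{1} \;\geq\; \frac{1-(2|S|-1)\mu}{1-(|S|-1)\mu}.
\end{equation*}

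The key algebraic step is then to multiply the two lower bounds: the denominator $1-(|S|-1)\mu$ cancels against the factor $1-\delta_{|S|} \geq 1-(|S|-1)\mu$, yielding
\begin{equation*}
\bigl(1 - \|\bsy{\Phi}_S^+ \bsy{\Phi}_{\overline{S}}\|_{1}\bigr)\,(1-\delta_{|S|}) \;\geq\; 1-(2|S|-1)\mu.
\end{equation*}
Multiplying both sides by $\tfrac{1}{2}\min_{j \in S}\sum_{k=1}^{K}|X_{j,k}|q_{k} \geq 0$ gives $\varepsilon(\bsy{q}) \geq \varepsilon^{(\mu)}(\bsy{q})$, and hence $\overline{\varepsilon}(\bsy{q},\bsy{\sigma}) \geq \overline{\varepsilon}^{(\mu)}(\bsy{q},\bsy{\sigma})$; if the right-hand side is positive, so is the left-hand side and Theorem~\ref{thm:finalGaussianThm} applies.

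Finally, since $\kappa(\bsy{q},\bsy{\sigma}) > 0$ and both $\overline{\varepsilon}$ and $\overline{\varepsilon}^{(\mu)}$ are positive, the map $y \mapsto \exp(-\kappa(\bsy{q},\bsy{\sigma})y^{2})$ is decreasing on $(0,\infty)$, so $\exp(-\kappa\,\overline{\varepsilon}^{2}) \leq \exp(-\kappa\,(\overline{\varepsilon}^{(\mu)})^{2})$. Plugging this into the conclusion of Theorem~\ref{thm:finalGaussianThm} yields the weaker but coherence-only lower bound claimed in Theorem~\ref{thm:finalGaussianThmCoherence}. There is no real obstacle here beyond the routine algebraic manipulation above; the whole argument is a monotone rewrite of Theorem~\ref{thm:finalGaussianThm} in which the only non-trivial ingredient is the cancellation between Tropp's $\frac{|S|\mu}{1-(|S|-1)\mu}$ bound and the coherence bound on $\delta_{|S|}$.
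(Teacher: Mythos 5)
Your proof is correct and follows essentially the same route as the paper: both invoke Tropp's coherence bound $\|\bsy{\Phi}_S^+ \bsy{\Phi}_{\overline{S}}\|_{1} \leq \frac{|S|\mu}{1-(|S|-1)\mu}$ together with $\delta_{|S|} \leq (|S|-1)\mu$ from Equation~(\ref{eq:RICToCoherence}), and exploit the cancellation of the factor $1-(|S|-1)\mu$ to obtain $\varepsilon(\bsy{q}) \geq \varepsilon^{(\mu)}(\bsy{q})$ before applying Theorem~\ref{thm:finalGaussianThm}. Your write-up is in fact slightly more careful than the paper's, since you make explicit the monotonicity argument ($\overline{\varepsilon} \geq \overline{\varepsilon}^{(\mu)} > 0$ and the decrease of $y \mapsto \exp(-\kappa y^2)$) that the paper leaves implicit.
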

\begin{proof}
As it has already been pointed out in Section~\ref{subsec:ERCNoiseless}, if $\mu < \frac{1}{2 |S| -1}$, then $ \|\bsy{\Phi}_S^+ \bsy{\Phi}_{\overline{S}} \|_{1} \leq \frac{|S| \mu}{1 - (|S|-1) \mu} < 1$ for all the supports of size $|S|$. Moreover, Equation~(\ref{eq:RICToCoherence}) shows that $\delta_s \leq (s-1) \mu$ whenever $\mu < 1/((s-1))$ Thus, using the two inequalities above in conjunction with Theorem~\ref{thm:finalGaussianThm} yields
\begin{equation*}
\begin{aligned}
\varepsilon^{(\mu)}(\bsy{q}) = 0.5 \left(1 - \frac{|S| \mu}{1 - (|S|-1) \mu} \right) (1-(|S|-1) \mu)
  \min_{j \in S} \sum_{k=1}^{K} | X_{j,k} | q_k.
\end{aligned}
\end{equation*}
Elementary algebraic manipulations show that the expression above simplifies to
\begin{equation*}
\varepsilon^{(\mu)}(\bsy{q}) = 0.5 (1 - \mu (2|S|-1))  \min_{j \in S} \sum_{k=1}^{K} | X_{j,k} | q_k. \qedhere
\end{equation*}
\end{proof}

A similar although stronger result can be obtained by means of the cumulative coherence function $\mu_1$. However, the resulting expression of $\varepsilon^{(\mu)}(\bsy{q})$ is more complicated. Moreover, coherence-based bounds only have a theoretical interest as they prove to be pessimistic for practical cases, even when using the cumulative coherence function.

\subsection{Related results}\label{subsec:GribonvalrelatedThm}

A result similar to Theorem~\ref{thm:finalGaussianThm} has already been obtained in \cite{gribonval2008atoms}. The striking similarities between our result and that obtained in \cite{gribonval2008atoms} motivate this section. They prove the following theorem.

\begin{thm} \cite[Theorem 7]{gribonval2008atoms} \label{thm:gribonval}
Let $\bsy{Y} = \bsy{\Phi}_S \Sigma^{1/2} \bsy{U} + \bsy{E} \in \mathbb{R}^{m \times K}$ with $\bsy{U}$ a $|S| \times K$ matrix of standard Gaussian random variables, $\Sigma = \mathrm{diag}(\sigma_i^2)_{i \in S}$ and $\bsy{E}$ an error term orthogonal to the atoms in $|S|$. Assume that the dictionary matrix $\bsy{\Phi}$ satisfies the restricted isometry property (RIP) with restricted isometry constant (RIC) (of order $|S|+1$) $\delta_{|S|+1} < 1/3$ and
\begin{equation*}
\|\bsy{\Phi}_{\overline{S}}^{\mathrm{T}} \bsy{E}\|_{\infty} < \sqrt{\dfrac{2}{\pi}} K \min_{i \in S} \sigma_i (1-3 \delta_{|S|+1}).
\end{equation*}
Then, the probability that $|S|$ iterations of SOMP fail to exactly recover the support $S$ on the basis of $\bsy{Y}$ does not exceed $n 2^{|S|} \exp \left( - K \gamma^2/ \pi \right)$ with $K$ the number of measurement vectors, $n$ the number of atoms and
\begin{equation*}
\gamma := 1 - 3 \delta_{|S|+1} - \left( \sqrt{\dfrac{2}{\pi}} K \min_{i \in S} \sigma_i \right)^{-1} \|\bsy{\Phi}_{\overline{S}}^{\mathrm{T}} \bsy{E}\|_{\infty}.
\end{equation*}
\end{thm}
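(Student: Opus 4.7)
My plan is to mirror the template of Theorem~\ref{thm:finalGaussianThm} in this paper but with the roles of ``random signal'' and ``deterministic noise'' exchanged: the sparse coefficients $\bsy{X}_S = \bsy{\Sigma}^{1/2} \bsy{U}$ are now the Gaussian object to which concentration is applied, whereas $\bsy{E}$ is a deterministic residual controlled by the hypothesis on $\|\bsy{\Phi}_{\overline{S}}^{\mathrm{T}} \bsy{E}\|_{\infty}$. Fix an iteration $t \in \{0, \dots, |S|-1\}$ and assume, by induction, that only correct atoms have been picked so far, so $S_t \subset S$. Because $\bsy{E}$ is orthogonal to every atom of $S$ and $\mathrm{span}(\bsy{\Phi}_{S_t}) \subseteq \mathrm{span}(\bsy{\Phi}_S)$, one gets $\bsy{P}^{(t)} \bsy{E} = \bsy{0}$ and $\bsy{\Phi}_S^{\mathrm{T}} \bsy{E} = \bsy{0}$. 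Writing $\bsy{Z}^{(t)} = (\bsy{I} - \bsy{P}^{(t)}) \bsy{\Phi}_S \bsy{\Sigma}^{1/2} \bsy{U}$ and applying Lemma~\ref{lem:ERC} with $\bsy{Q} = \bsy{I}$ reduces the decision at iteration $t$ to the sufficient condition
\begin{equation*}
\left(1 - \|\bsy{\Phi}_S^+ \bsy{\Phi}_{\overline{S}}\|_1\right) \|\bsy{\Phi}_S^{\mathrm{T}} \bsy{Z}^{(t)}\|_{\infty} > \|\bsy{\Phi}_{\overline{S}}^{\mathrm{T}} \bsy{E}\|_{\infty}.
\end{equation*}

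The next step is to combine a RIP-based upper bound of the form $\|\bsy{\Phi}_S^+ \bsy{\Phi}_{\overline{S}}\|_1 \leq \delta_{|S|+1}/(1-\delta_{|S|+1})$ (a standard consequence of the $(|S|+1)$-RIP) with the RIP lower bound for $\|\bsy{\Phi}_S^{\mathrm{T}} \bsy{Z}^{(t)}\|_{\infty}$ provided by Theorem~\ref{thm:firstBoundGeneral} and Corollary~\ref{corr:RIPBasedBtBound} (with $q_k = 1$). Using $(1-\delta_{|\mathcal{J}_t|}) \geq (1-\delta_{|S|+1})$, the product of the two RIP factors simplifies, after elementary algebra, into the characteristic $(1 - 3\delta_{|S|+1})$ appearing in the statement. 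With an optimal choice of signs $c_k^{(t)} \in \{-1,+1\}$ in Corollary~\ref{corr:RIPBasedBtBound}, the sufficient condition rewrites as
\begin{equation*}
(1 - 3\delta_{|S|+1}) \min_{j \in S} \sum_{k=1}^K \sigma_j |U_{jk}| > \|\bsy{\Phi}_{\overline{S}}^{\mathrm{T}} \bsy{E}\|_{\infty}.
\end{equation*}

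The probabilistic heart is a Gaussian concentration argument on the left-hand side. For a fixed $j \in S$, the map $\bsy{u} \mapsto \sigma_j \sum_k |u_k|$ is Lipschitz in the $\ell_2$ metric on $\mathbb{R}^K$ with constant $\sigma_j \sqrt{K}$, and its expectation equals $\sigma_j K \sqrt{2/\pi}$ (same calculation as for $b(\bsy{q},\bsy{\sigma})$ in Section~\ref{subsec:oneatom}). Plugging into Theorem~\ref{thm:concentrationIneqLipschitz} and specialising to the worst index $\sigma_{j} = \min_{i \in S} \sigma_i$ produces a lower tail probability of the form $\exp(-K\gamma^2/\pi)$, with $\gamma$ precisely equal to the normalised slack between $\sqrt{2/\pi}\,K\,\min_{i \in S} \sigma_i\,(1-3\delta_{|S|+1})$ and $\|\bsy{\Phi}_{\overline{S}}^{\mathrm{T}} \bsy{E}\|_{\infty}$. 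The scaling $K$ (rather than $1$) inside the exponent comes from the mismatch between a Lipschitz constant of order $\sqrt{K}$ and an expectation of order $K$.

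Finally, a union bound is taken exactly as in the proof of Theorem~\ref{thm:finalGaussianThm}: at iteration $t$ there are $\binom{|S|}{t}$ possible realisations of $\bsy{P}^{(t)}$ and $n$ candidate atoms, and $\sum_{t=0}^{|S|-1} \binom{|S|}{t} \leq 2^{|S|}$ yields the claimed prefactor $n\,2^{|S|}$. The main obstacle I expect is not the concentration step, which is classical, but the algebraic bridge between the $(|S|+1)$-RIP hypothesis and a usable bound on $\|\bsy{\Phi}_S^+ \bsy{\Phi}_{\overline{S}}\|_1$: getting this in such a way that the two RIP factors collapse to a clean $(1-3\delta_{|S|+1})$ is what fixes the constant $1/3$ in the hypothesis and is the delicate bookkeeping step. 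A secondary, more technical difficulty is ensuring that the projection $(\bsy{I}-\bsy{P}^{(t)})$ does not erode the $\min_{i \in S} \sigma_i$ scaling; since $\bsy{P}^{(t)}$ is itself a function of $\bsy{U}$, this forces the union bound to enumerate all possible $S_t \subseteq S$ rather than condition on $\bsy{P}^{(t)}$, which is precisely what inflates the prefactor by $2^{|S|}$.
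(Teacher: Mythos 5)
You should first note that the paper itself offers no proof of this statement: Theorem~\ref{thm:gribonval} is quoted verbatim from \cite{gribonval2008atoms} in Section~\ref{subsec:GribonvalrelatedThm} purely as a point of comparison with Theorem~\ref{thm:finalGaussianThm}, so there is no in-paper argument to measure your proposal against. Your reconstruction is nevertheless in the right spirit: the genuine proof does swap the probabilistic roles of signal and noise, does rest on a Gaussian concentration inequality for a Lipschitz function whose constant is of order $\sqrt{K}$ while its expectation is of order $K$ (whence the rate $\exp(-K\gamma^2/\pi)$), and does pay the prefactor $2^{|S|}$ by enumerating all possible partial supports $S_t \subseteq S$, exactly as in the proof of Theorem~\ref{thm:finalGaussianThm}.

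The concrete gap is the ``algebraic bridge'' you yourself flag as delicate. The bound $\|\bsy{\Phi}_S^+ \bsy{\Phi}_{\overline{S}}\|_{1} \leq \delta_{|S|+1}/(1-\delta_{|S|+1})$ is not a standard consequence of the RIP: the usual route controls $\|\bsy{\Phi}_S^+ \bsy{\phi}_j\|_2$ by $\delta_{|S|+1}/(1-\delta_{|S|+1})$ and passing from the $\ell_2$ to the $\ell_1$ norm costs a factor $\sqrt{|S|}$, which would destroy the dimension-free constant in $\gamma$. Moreover, even granting your bound, the product of your two RIP factors is $\bigl(1 - \delta/(1-\delta)\bigr)(1-\delta) = 1 - 2\delta$, not $1-3\delta$, so the claimed ``elementary algebra'' does not produce the constant appearing in the statement, and the hypothesis $\delta_{|S|+1} < 1/3$ is left unexplained by your chain of inequalities. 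The argument in \cite{gribonval2008atoms} does not pass through the ERC quantity $\|\bsy{\Phi}_S^+ \bsy{\Phi}_{\overline{S}}\|_{1}$ at all; it compares $\|\bsy{\Phi}_{\overline{S}}^{\mathrm{T}} \bsy{R}^{(t)}\|_{\infty}$ and $\|\bsy{\Phi}_{S}^{\mathrm{T}} \bsy{R}^{(t)}\|_{\infty}$ directly, bounding the relevant cross-Gram blocks such as $\bsy{\Phi}_{\overline{S}}^{\mathrm{T}}(\bsy{I}-\bsy{P}^{(t)})\bsy{\Phi}_{\mathcal{J}_t}$ by RIP-type estimates, and it is the accumulation of three such $\delta$'s that yields $1-3\delta_{|S|+1}$. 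A secondary gap: once $\bsy{X}_S = \Sigma^{1/2}\bsy{U}$ is random, $\min_{j\in S}\sum_{k=1}^{K}\sigma_j|U_{j,k}|$ is a minimum of $|S|$ dependent random variables, so an additional union bound over $j\in S$ (or a per-atom organisation of the failure events) is needed before Theorem~\ref{thm:concentrationIneqLipschitz} can be invoked; this is absorbed into the combinatorial bookkeeping in the original proof but is silently skipped in yours.
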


In \cite{gribonval2008atoms}, the statistical analysis has been focused on the sparse signal to be estimated, \textit{i.e.}, $\bsy{X}$, while no particular statistical distribution is assumed for $\bsy{E}$. Since the noise vectors are assumed to be orthogonal to the columns of $\bsy{X}$, their purpose is to model an approximation noise, \textit{i.e.}, the part of the signal $\bsy{Y}$ that cannot be mapped by $\bsy{\Phi} \bsy{X}$. Conversely, the noise signals envisioned in our paper represent additive measurement noises that cannot be assumed to be orthogonal to any vector subspace.\\

In the noiseless case, a result similar to Theorem~\ref{thm:gribonval} has been obtained in \cite[Theorem 6.2]{eldar2010average} for a variant of SOMP entitled 2-SOMP, \textit{i.e.}, a SOMP algorithm where the decision on which atom to pick is performed on the basis of the maximization of a $\ell_2$ norm instead of a $\ell_1$ norm.\\

Although the problem addressed in this paper is different from that of Theorem~\ref{thm:gribonval}, the authors of \cite{gribonval2008atoms, eldar2010average} have used approaches similar to ours and they also noticed that the parasitic term $\mathcal{C}_s$ seems difficult to remove. The rationale of this discussion is thus that it is difficult to avoid the suboptimal term $\mathcal{C}_s$ when conducting developments similar to that presented in this paper.

\subsection{Conjectures}\label{subsec:conjectures}

Theorem~\ref{thm:finalGaussianThm} provides interesting insights into how successful SOMP-NS is whenever some parameters are modified. However, the theoretical developments presented in this paper do not properly capture all the characteristics of SOMP-NS.\\

First of all, regarding $n \mathcal{C}_{|S|-1}$, the value of $n$ should be lowered in practice. The reason why $n$ should be replaced by $\overline{n} \ll n$ is because our analysis assumes that all the atoms not to be picked (of index $j$) are such that $\sum_{k=1}^{K} | \langle \bsy{\phi}_j, \bsy{x}_{k}^{(t)}  \rangle | q_k  = \max_{z \in \overline{S}} ( \sum_{k=1}^{K} | \langle \bsy{\phi}_z, \bsy{x}_{k}^{(t)}  \rangle | q_k )$ while, in practice, only a few atoms are likely to exhibit a significant value of $\sum_{k=1}^{K} | \langle \bsy{\phi}_j, \bsy{x}_{k}^{(t)}  \rangle | q_k$.\\

We conjecture that $\mathcal{C}_{|S|-1}$ may be replaced by a linear function of $|S|$. The reason why we failed at obtaining such a result is probably linked to the proof of Theorem \ref{thm:finalGaussianThm}. All the possible supports at each iteration are considered and it is ensured that the sufficient condition for making a correct decision is satisfied for each support. This is very pessimistic as only one support out of all the numerous possibilities actually matters. As indicated in Section~\ref{subsec:GribonvalrelatedThm}, other researchers working on similar problems have stumbled upon this issue and no solution has been found so far to the best of the authors' knowledge. The simulation results presented in the end of this paper will however not address this problem.\\

Furthermore, as it will be explained in Section~\ref{subsec:summaryNumres}, the bias $b(\bsy{q}, \bsy{\sigma})$ should be removed in order to deliver results compatible with what has been observed in our numerical simulations. This term is most likely an artifact linked to Equation~(\ref{eq:firstBigApprox}) and Equation~(\ref{eq:secondBigApprox}). Equation~(\ref{eq:firstBigApprox}) basically assumes that $\langle \bsy{\phi}_j, \bsy{x}_k \rangle$ and $\langle \bsy{\phi}_j, \bsy{e}_k \rangle$ always have opposite signs for atoms $\bsy{\phi}_j$ whose indexes $j$ belong to the support $S$. Conversely, Equation~(\ref{eq:secondBigApprox}) assumes that $\langle \bsy{\phi}_j, \bsy{x}_k \rangle$ and $\langle \bsy{\phi}_j, \bsy{e}_k \rangle$ always have identical signs for atoms whose indexes do not belong to the support $S$. Thus, a statistical analysis directly performed on $\sum_{k=1}^{K} |\langle \bsy{\phi}_j, \bsy{x}_k \rangle + \langle \bsy{\phi}_j, \bsy{e}_k \rangle| q_k $ may prevent the bias term from appearing. \\

Therefore, considering all the conjectures above, a better bound may be given by
\begin{equation}\tag{B1}\label{eq:B1}
1 - \overline{n} \alpha s \exp \left( - \kappa(\bsy{q}, \bsy{\sigma}) \overline{\varepsilon}(\bsy{q}, \bsy{\sigma})^2 \right)
\end{equation}
where $n$ has been replaced with $\overline{n} \ll n$, $\alpha s$ has been substituted to $\mathcal{C}_s$ and $b(\bsy{q}, \bsy{\sigma})$ has been canceled out.\\

Finally, it is worth pointing out that Equation~(\ref{eq:B1}) will likely not deliver perfect results. Indeed, Equation~(\ref{eq:trueEquationModel}) suggests that the probability of sparse support recovery success is actually a sum of exponential functions, each function corresponding to a single atom, iteration and support. Moreover, possibly different values of $\overline{\varepsilon}(\bsy{q}, \bsy{\sigma})^2$ should be chosen for each exponential function. It is also probably suboptimal to make use of the union bound.

\section{Numerical results}\label{sec:numresults}
This section aims at demonstrating that:
\begin{enumerate}
\item SOMP-NS provides significant performance improvements when compared to SOMP provided that the noise vectors exhibit different variances and that the weights are properly chosen.
\item Depending on the signal model that is chosen, it is possible to accurately estimate the optimal weights by using simple closed-formed formulas derived from Equation~(\ref{eq:B1}).
\item Equation~(\ref{eq:B1}) properly predicts the performance improvements obtained when the number of measurement vectors increases.
\end{enumerate}
The purpose of the first point is to demonstrate numerically that the gains provided by SOMP-NS are significant. The last two points rather focus on the numerical validation of the theoretical analysis presented in this paper. With these goals in mind, a particular signal model will be chosen. In particular, this signal model will be sufficiently simple to allow the computation of the optimal weights when the model~(\ref{eq:B1}) is assumed to be correct.

\subsection{Particular signal model}\label{subsec:specialSigModel}

The objective of this section is to particularize Theorem~\ref{thm:finalGaussianThm} to models for which $X_{i, k} = \varepsilon_{i, k} \mu_X$ ($1 \leq k \leq K$, $i \in S$) and $X_{i, k} = 0$ ($1 \leq k \leq K$, $i \not\in S$) where the terms $\varepsilon_{i, k}$ denote Rademacher random variables, \textit{i.e.}, random variables that return either $1$ or $-1$ with probability $0.5$ for both outcomes.\\

Two different sign patterns will be distinguished. Sign pattern 1 refers to the case where the sign pattern is identical for all the sparse vectors $\bsy{x}_k$ to be recovered, \textit{i.e.}, $\varepsilon_{i, k} = \varepsilon_{i, 1}$ for all $k \in \lbrack K \rbrack$, $i \in S$ and $\varepsilon_{i_1, 1} \independent \varepsilon_{i_2, 1}$ whenever $i_1 \neq i_2$. Sign pattern 2 corresponds to the situation where the sign pattern is independent for each and within each $\bsy{x}_k$, \textit{i.e.}, $\varepsilon_{i_1, k_1} \independent \varepsilon_{i_2, k_2}$ whenever $i_1 \neq i_2$ and/or $k_1 \neq k_2$.\\

In both cases, it is worth mentioning that the absolute values of the entries of each $\bsy{x}_k$ are equal to $\mu_X$. Thus, $\min_{j \in S} \sum_{k=1}^{K} | X_{j,k} | q_k = \|\bsy{q}\|_1 \mu_X$ and, according to Theorem~\ref{thm:finalGaussianThm}, the probability of full support recovery is always higher than 
\begin{equation}
1 - n \mathcal{C}_{|S|-1} \exp \left( -  \|\bsy{q}\|_1^2 \mu_X^2 \kappa(\bsy{q}, \bsy{\sigma}) \left( \varepsilon' - \dfrac{b(\bsy{q}, \bsy{\sigma})}{\|\bsy{q}\|_1 \mu_X} \right)^2 \right)
\end{equation}
where $\varepsilon' := 0.5 \left(1 - \|\bsy{\Phi}_S^+ \bsy{\Phi}_{\overline{S}}    \|_{1} \right) (1-\delta_{|S|})$. Also, the bound above only holds if $\varepsilon' > b(\bsy{q}, \bsy{\sigma})/(\|\bsy{q}\|_1 \mu_X)$.\\

By using the conjecture about the elimination of the bias term $b(\bsy{q}, \bsy{\sigma})$ described in Section~\ref{subsec:conjectures}, one obtains Equation~(\ref{eq:modelConjecture}), which is reminiscent of Equation~(\ref{eq:B1}), except that (\ref{eq:modelConjecture}) does not include the conjectures linked to the term $n \mathcal{C}_{|S|-1}$. If our only objective is to derive the optimal weights according the theoretical model, only the argument of the exponential matters so that adjusting the term $n \mathcal{C}_{|S|-1}$ has no effect.

\begin{equation}\label{eq:modelConjecture}
1 - n \mathcal{C}_{|S|-1} \exp \left( -  \|\bsy{q}\|_1^2 \mu_X^2 \kappa(\bsy{q}, \bsy{\sigma}) \varepsilon'^2 \right)
\end{equation}

Let $\langle \bsy{x} \rangle$ denote the arithmetic mean of the entries of vector $\bsy{x}$. Then, $\| \bsy{q} \|_1^2 = K^2  \langle \bsy{q} \rangle^2$ and $\| \bsy{q}^{(\bsy{\sigma})} \|_2^2 = K \langle (q_k^2 \sigma_k^2)_{k \in \lbrack K \rbrack} \rangle$. Thus, the probability of full support recovery is always higher than
\begin{equation}\tag{B2}\label{eq:B2}
1 - n \mathcal{C}_{|S|-1} \exp \left( -  \dfrac{K \langle \bsy{q} \rangle^2 \mu_X^2}{2 \langle (q_k^2 \sigma_k^2)_{k \in \lbrack K \rbrack} \rangle} \varepsilon'^2 \right).
\end{equation}
As a particular case, if $q_k = 1$ ($ 1 \leq k \leq K$), the latter probability also rewrites
\begin{equation}
1 - n \mathcal{C}_{|S|-1} \exp \left( -  \dfrac{K \mu_X^2}{2 \langle (\sigma_k^2)_{k \in \left[ K \right]}  \rangle} \varepsilon'^2 \right).
\end{equation}
Let us now focus our attention on the weights that maximize Equation~(\ref{eq:B2}) or, equivalently, $\langle \bsy{q} \rangle^2/\langle (q_k^2 \sigma_k^2)_{k \in \lbrack K \rbrack} \rangle$. We can restrict our attention to the maximization of 
\begin{equation*}
\dfrac{\langle \bsy{q} \rangle}{\sqrt{\langle (q_k^2 \sigma_k^2)_{k \in \lbrack K \rbrack} \rangle}} = \dfrac{ \sum_{k=1}^K q_k }{\sqrt{\sum_{k=1}^K q_k^2 \sigma_k^2}}.
\end{equation*}
We define $\tilde{q}_k = q_k \sigma_k$ so that the expression above also reads
\begin{equation*}
\dfrac{ \sum_{k=1}^K \tilde{q}_k (1/ \sigma_k) }{\| \bsy{\tilde{q}} \|_2} =  \left\langle \dfrac{\bsy{\tilde{q}}}{\| \bsy{\tilde{q}} \|_2}, \left( \dfrac{1}{\sigma_k} \right)_{k \in \lbrack K \rbrack} \right\rangle.
\end{equation*}
The quantity $\bsy{\tilde{q}}/\| \bsy{\tilde{q}} \|_2$ represents the direction of $\bsy{\tilde{q}}$ so that we know that a global maximizer is obtained whenever $\bsy{\tilde{q}}$ and $ \left( 1/\sigma_k \right)_{k \in \lbrack K \rbrack}$ have the same direction. It means that a global maximizer is obtained if and only if $\tilde{q}_k = C (1/\sigma_k)$ where $C > 0$. This is equivalent to requiring $q_k = C (1/\sigma_k^2)$  since $\tilde{q}_k = q_k \sigma_k$. By choosing $C = 1$, one concludes that $q_k = 1/ \sigma_k^2$ provides an optimal weighting strategy according to (\ref{eq:B2}).

\subsection{Simulation frameworks}

Now that the optimal weighting strategy for the signal models envisioned in Section~\ref{subsec:specialSigModel} has been derived, it becomes possible to determine whether the bound (\ref{eq:B2}) properly predicts the impact of the weights on the performance that is achieved.\\

Our MATLAB simulation software is available in \cite{determe2015SOMPNSSimPack}. All the scripts needed to generate the figures presented in this section are also available. The reader should know that every simulation result exposed hereafter has been performed by using single precision floating point representations. The reason for such a choice is that single precision arithmetic is faster and thus preferred for algorithms intended to run on real-time platform such as SOMP-NS. For the same reason, the simulation results are obtained faster when using the single precision format.\\

It is assumed that $m = 250$ and $n = 1000$. The simulation framework consists of a fixed dictionary matrix $\bsy{\Phi} \in \mathbb{R}^{250 \times 1000}$ whose entries were generated on the basis of independent and identically distributed Gaussian random variables and then normalized in such a way that each column of the matrix exhibits a unit $\ell_2$ norm. This matrix is fixed for all the simulations and is available in \cite{determe2015SOMPNSSimPack}. \\

Two simulation frameworks have been envisioned to demonstrate the three points introduced at the very beginning of Section~\ref{sec:numresults}. The first framework consists of simulations for the case $K = 2$ and addresses the first two objectives while the last framework examines how the probability of successful support recovery evolves as $K$ increases.

\subsection{Simulation results for $K = 2$}\label{subsec:simresK2}

First of all, it is worth pointing out that the performance achieved by \mbox{SOMP-NS} is invariant if the weight vector $\bsy{q} := (q_1, q_2)^{\mathrm{T}}$ is multiplied by a positive constant. Therefore, only the angle $\theta_q := \arctan (q_2/q_1)$ is investigated. The weights are thus generated on the basis of the polar coordinate system $q_1 = \cos \theta_q$, $q_2 = \sin \theta_q$ where $0^{\circ} \leq \theta_q \leq 90^{\circ}$. In practice, the grid of weighting angles $\theta_q$ will consist of $21$ uniformly spaced angles from $5^{\circ}$ to $85^{\circ}$. \\

The noise standard deviation vector $\bsy{\sigma} := (\sigma_1, \sigma_2)^{\mathrm{T}}$ is generated on the basis of the polar coordinate system $\bsy{\sigma} = ( \cos \theta_{\sigma}, \sin \theta_{\sigma})^{\mathrm{T}}$ where $\theta_{\sigma}$ describes the orientation of the noise vector. The grid of values for $\theta_{\sigma}$ is composed of $21$ uniformly spaced angles ranging from $20^{\circ}$ to $70^{\circ}$. Extremely high or low angles have been avoided because they correspond to situations for which the noise concentrates essentially on one measurement vector. Therefore, appropriate weighting strategies would be able to cancel most of the noise and would lead to probabilities of correct support recovery that are too high to be reliably estimated on the basis of a limited number of Monte Carlo cases.\\

A total of six simulation configurations have been run. Each configuration corresponds to one of the two sign patterns described in Section~\ref{subsec:specialSigModel}, to a support size $|S|$ and to a value of $\mu_X$. Simulation cases have been generated for each value of $\theta_{\sigma}$ belonging to the grid defined beforehand. Once the support size is fixed, the actual support is randomly and uniformly chosen among all the possibilities. The support that is simulated is independent for each case.\\

Although $\| \bsy{\sigma} \|_2^2 = 1$, the input signal-to-noise ratio (SNR), referred to as $\mathrm{SNR}_{\mathrm{in}}$ and defined in Equation~(\ref{eq:defSNRinput}), is to be modified by means of the quantity $\mu_X$. \\

\begin{equation}\label{eq:defSNRinput}
\mathrm{SNR}_{\mathrm{in}} \; \mathrm{(dB)} = 20 \log_{10} \left( \dfrac{\| \bsy{Y} \|_{\mathrm{F}}}{\| \bsy{Y} - \bsy{\Phi} \bsy{X} \|_{\mathrm{F}}} \right)
\end{equation}

Table \ref{tab:configsSimK2} describes all the configurations that have been investigated numerically. The values of $\mu_X$ have been chosen in such a way that the probability of full support recovery when $(\theta_q, \theta_{\sigma}) = (45^{\circ}, 45^{\circ})$ is approximately equal to $0.1$ and $0.3$ for the sign pattern 1 and sign pattern 2 respectively. These probabilities have been chosen in such a way that the probability of successful full support recovery over the grid defined for $(\theta_q, \theta_{\sigma})$ never reaches values so high that it cannot be reliably estimated on the basis of a limited number of simulation experiments. The lower value of the probability of successful recovery for sign pattern $1$ is linked to the fact that having a sign pattern independent for each measurement vector, as for sign pattern 2, provides performance improvements. This observation is actually reminiscent to what has been established in Theorem~\ref{thm:gribonval}.\\

In Table \ref{tab:configsSimK2}, the input SNR, \textit{i.e.}, $\mathrm{SNR}_{\mathrm{in}}$, is estimated by generating $2 \; 10^5$ cases for the configuration of interest, then the input SNR (in dB) is computed for each case and the results are finally averaged. The Matlab script implementing this estimation is available in \cite{determe2015SOMPNSSimPack}.

\begin{table}[!h]
\caption{Simulation configurations | $K = 2$}
\label{tab:configsSimK2}
\centering
\begin{tabular}{|p{3.6cm}|p{2.6cm}|p{0.5cm}|p{0.7cm}|p{1.5cm}|p{1.6cm}|}
\hline
\bfseries Configuration ID & \bfseries Sign pattern & \bfseries $|S|$ & \bfseries $\mu_X$ & \bfseries $\mathrm{SNR}_{\mathrm{in}}$ & \bfseries \# cases \\
\hline\hline
Configuration $1$ & $1$ & $10$ & $2.28$ & $1.51$ dB & $1.0 \; 10^4$\\
\hline
Configuration $2$ & $1$ & $30$ & $3.19$ & $5.37$ dB & $1.0 \; 10^4$\\
\hline
Configuration $3$ & $1$ & $40$ & $6.94$ & $12.15$ dB & $2.5 \; 10^4$\\
\hline\hline
Configuration $4$ & $2$ & $10$ & $2.50$ & $1.76$ dB & $1.0 \; 10^4$\\
\hline
Configuration $5$ & $2$ & $30$ & $3.06$ & $5.12$ dB & $1.0 \; 10^4$\\
\hline
Configuration $6$ & $2$ & $40$ & $3.42$ & $6.76$ dB & $1.0 \; 10^4$\\
\hline
\end{tabular}
\end{table}

As an example, Figure~\ref{fig:simAnglesPlotK2Conf2} depicts the probability of full support recovery for Configuration $2$ as a function of $(\theta_q, \theta_{\sigma})$. \\

\begin{figure}[!h]
\centering
\includegraphics[width=13.0cm]{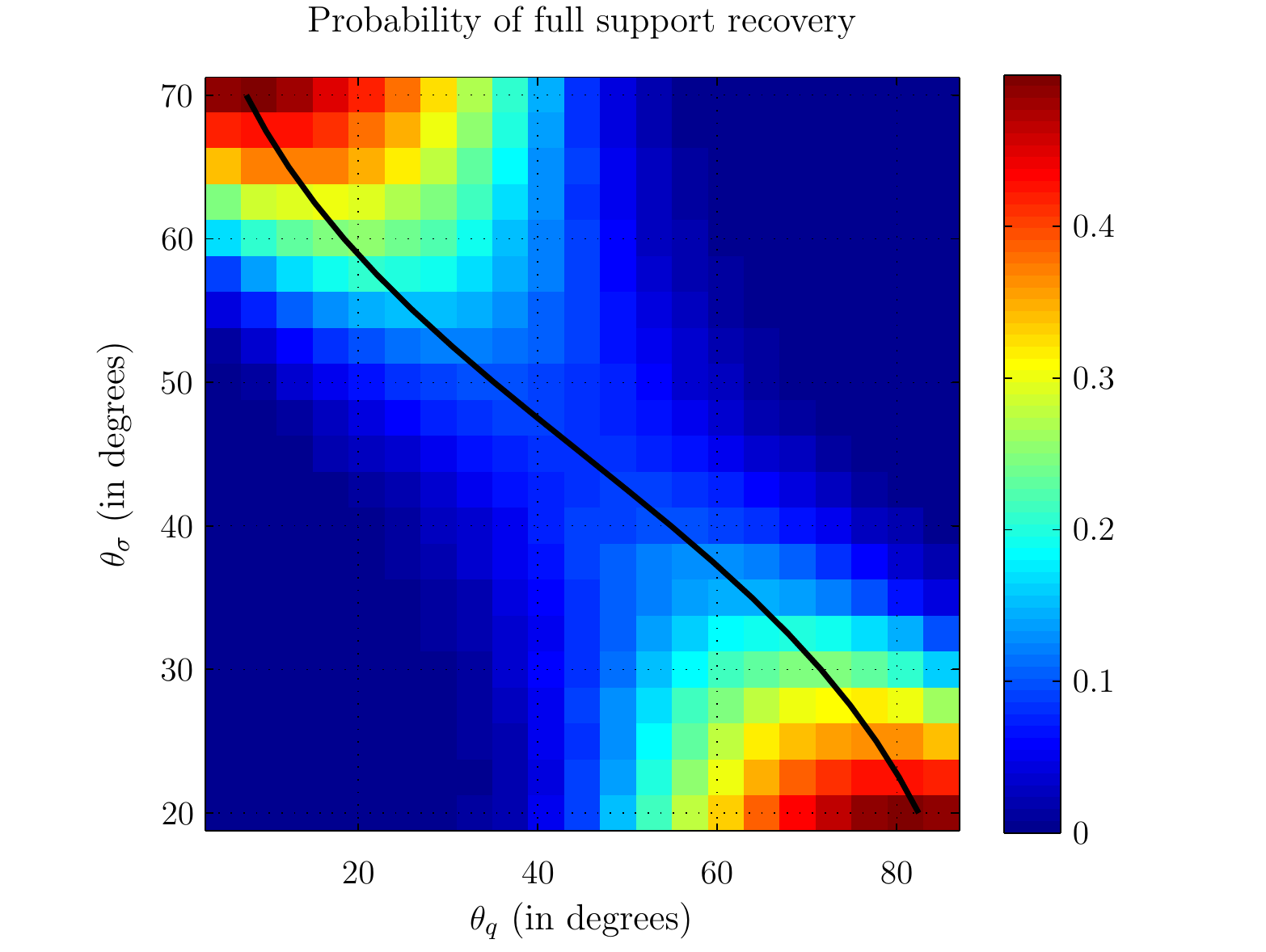}
\caption{Simulation results ($K = 2$) for Configuration $2$ (sign pattern $1$, $|S| = 30$, $\mu_X = 3.19$ and $\mathrm{SNR}_{\mathrm{in}} = 5.37$ dB) -- Probability of full support recovery as a function of $\theta_q$ and $\theta_{\sigma}$ -- The black curve refers to the optimal weights derived from Equation~(\ref{eq:B2}) -- Each pixel of the figure has been computed on the basis of $1.0 \; 10^4$ simulation cases}%
\label{fig:simAnglesPlotK2Conf2}%
\end{figure}

The first objective we would like to fulfill is demonstrating that SOMP-NS is capable to outperform SOMP whenever the noise standard deviations are not identical for each measurement vector. To do so, we will examinate, for each configuration and for each value of $\theta_{\sigma}$, what is the ratio of the probability of failure obtained for $\theta_q = 45^{\circ}$, \textit{i.e.}, the weights corresponding to SOMP, to the lowest probability of failure, \textit{i.e.}, that obtained for the truly optimal weights. Figure~\ref{fig:simAnglesSOMPvsOptimal} plots the aforementioned quantity for the all the configurations of Table \ref{tab:configsSimK2}. Note that the optimal weights are determined on the basis of the numerical results, the formula $q_k = 1/\sigma_k^2$ obtained on the basis of Equation~(\ref{eq:B2}) is not used. \\
\begin{figure}[!h]%
    \centering
    \hspace*{-23mm}\subfloat[Results for sign pattern $1$]{{\includegraphics[width=11.0cm]{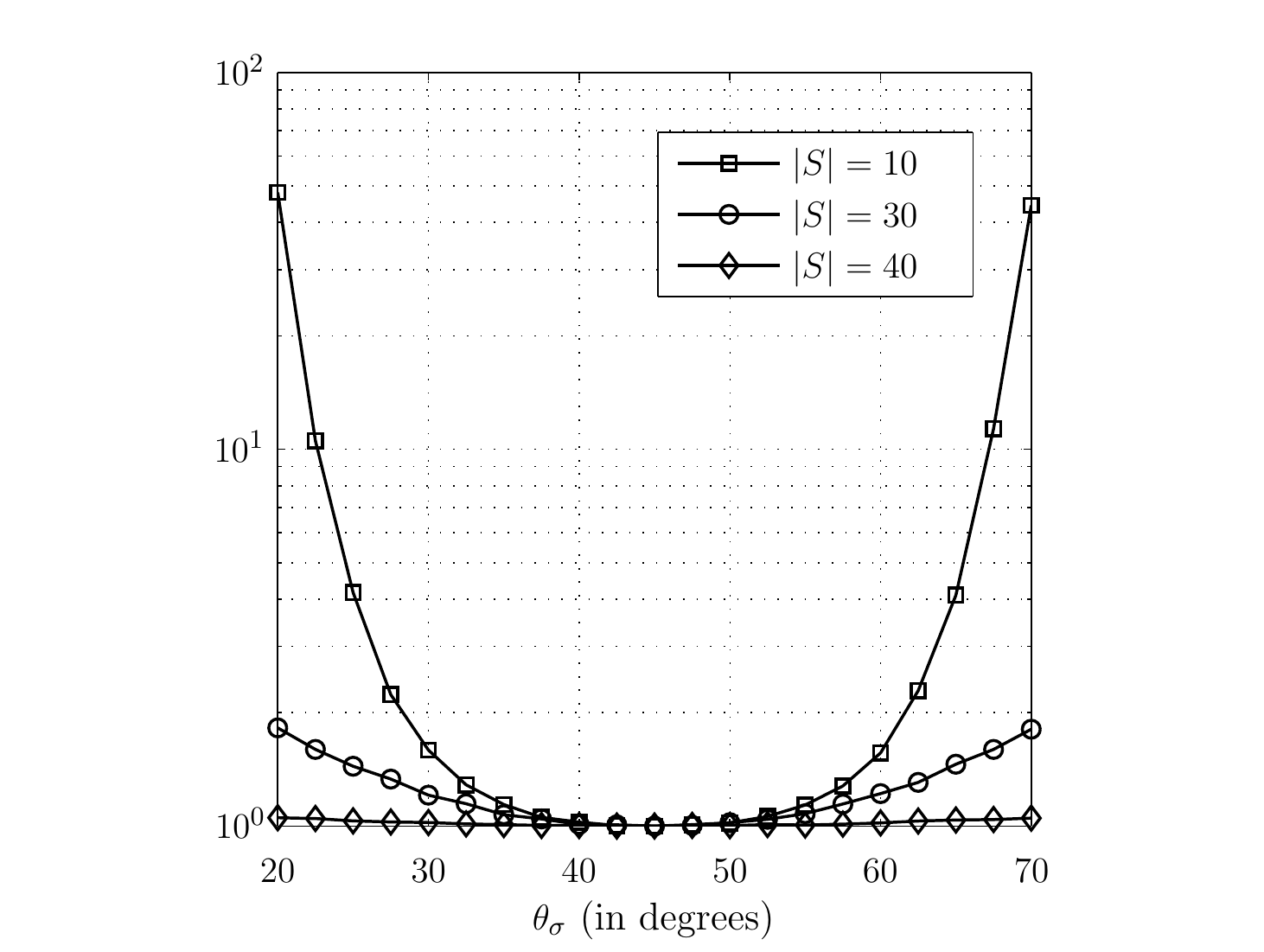} }}
    \hspace*{-20mm}\subfloat[Results for sign pattern $2$]{{\includegraphics[width=11.0cm]{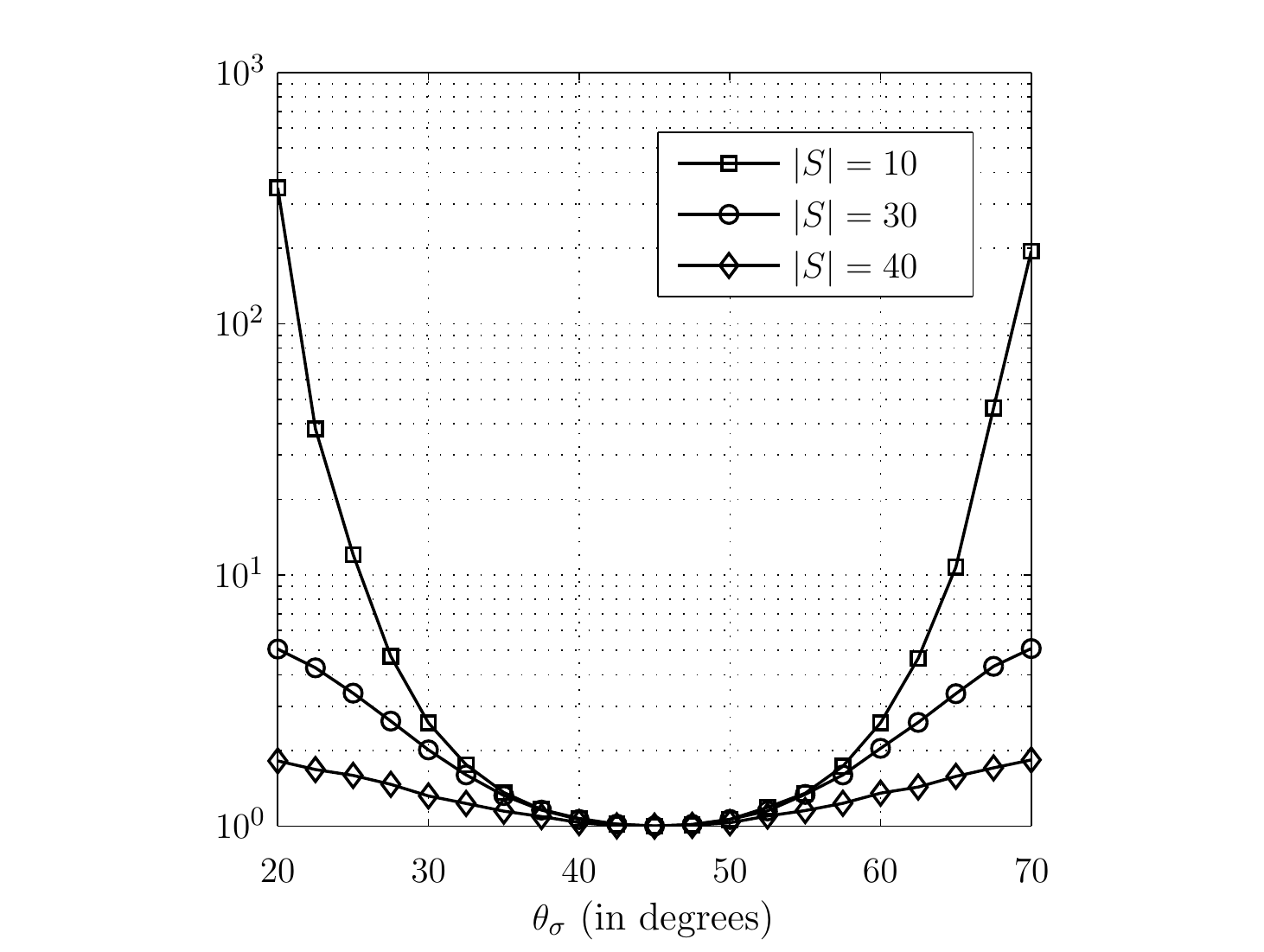} }}%
    \caption{Simulation results ($K = 2$) -- Ratio of the probability of failure of SOMP to that obtained for SOMP-NS when using the optimal weights}%
    \label{fig:simAnglesSOMPvsOptimal}%
\end{figure}

It is observed that the gains provided by the proper application of SOMP-NS are significant, especially for low values of the support size. The only case for which the gain is almost nonexistent is for sign pattern $1$, $\mu_X = 2.50$ and $|S| = 40$, \textit{i.e.}, Configuration $3$. This may actually be a consequence of the normalization procedure of $\mu_X$ described previously which ensures that the probability of successful support recovery for $(\theta_q, \theta_{\sigma}) = (45^{\circ}, 45^{\circ})$ is equal to $0.1$. The value of $\mu_X$ had to be chosen significantly higher than that of the other cases to attain the $0.1$ goal, which limits the impact of the noise and thus hinders the improvement of the performance by modifying the weights.\\

Let us now attack the second objective of Section~\ref{sec:numresults}. We wish to show that the formula $q_k = 1/\sigma_k^2$ obtained in Section~\ref{subsec:specialSigModel} delivers reliable estimates of the optimal weights. Figure~\ref{fig:simAnglesOptimalvsTheory} summarizes the results. First of all, it is observed that, for the first sign pattern, the solution $q_k = 1/\sigma_k^2$ always corresponds to the truly optimal weights while this is not true for the second sign pattern. In particular, the discrepancy between the numerical results and the theoretical formula (\ref{eq:B2}) increases as the size of the support augments.\\

\begin{figure}[!h]%
    \centering
    \hspace*{-23mm}\subfloat[Results for sign pattern $1$]{{\includegraphics[width=11.0cm]{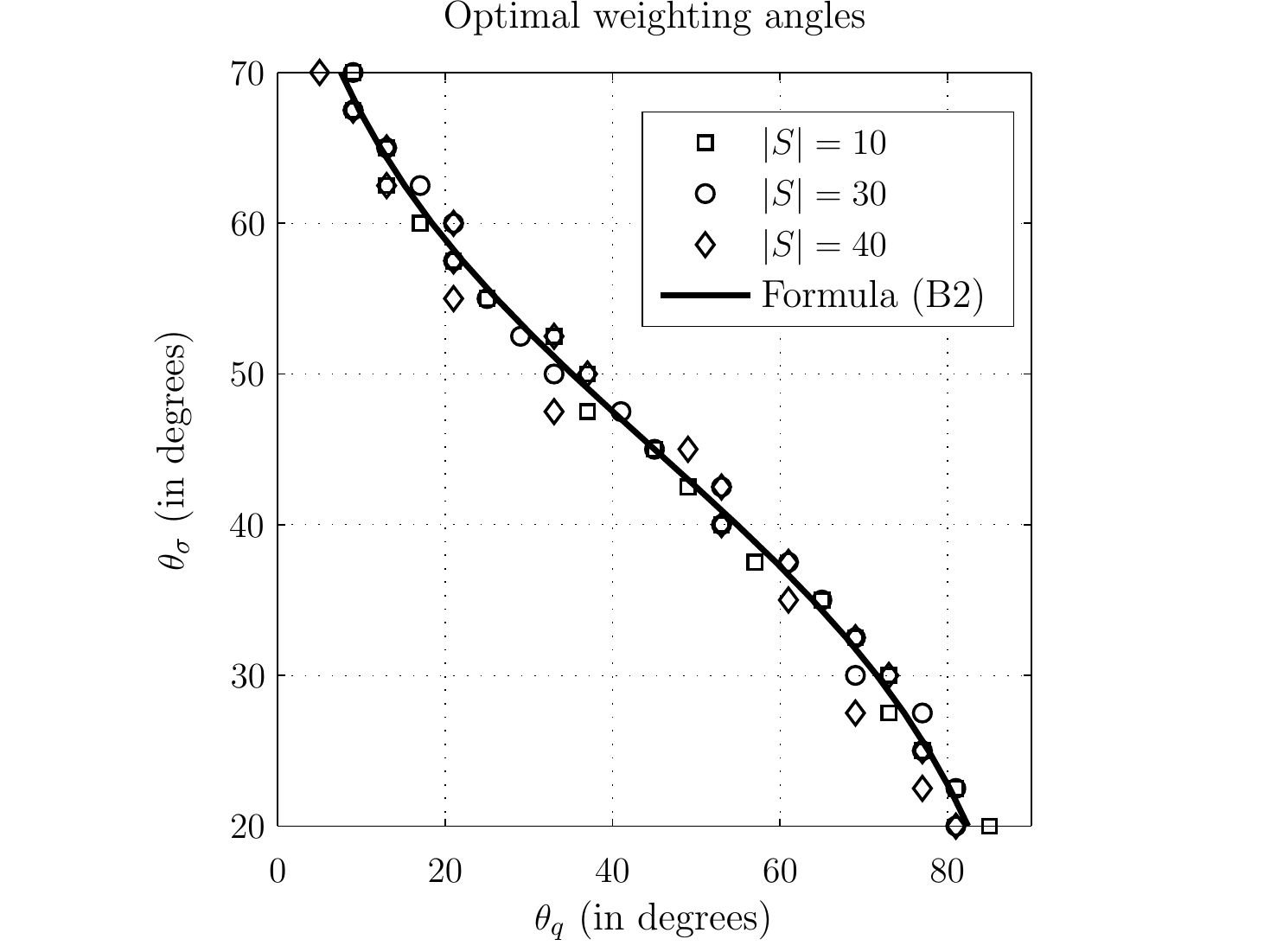} }}
    \hspace*{-20mm}\subfloat[Results for sign pattern $2$]{{\includegraphics[width=11.0cm]{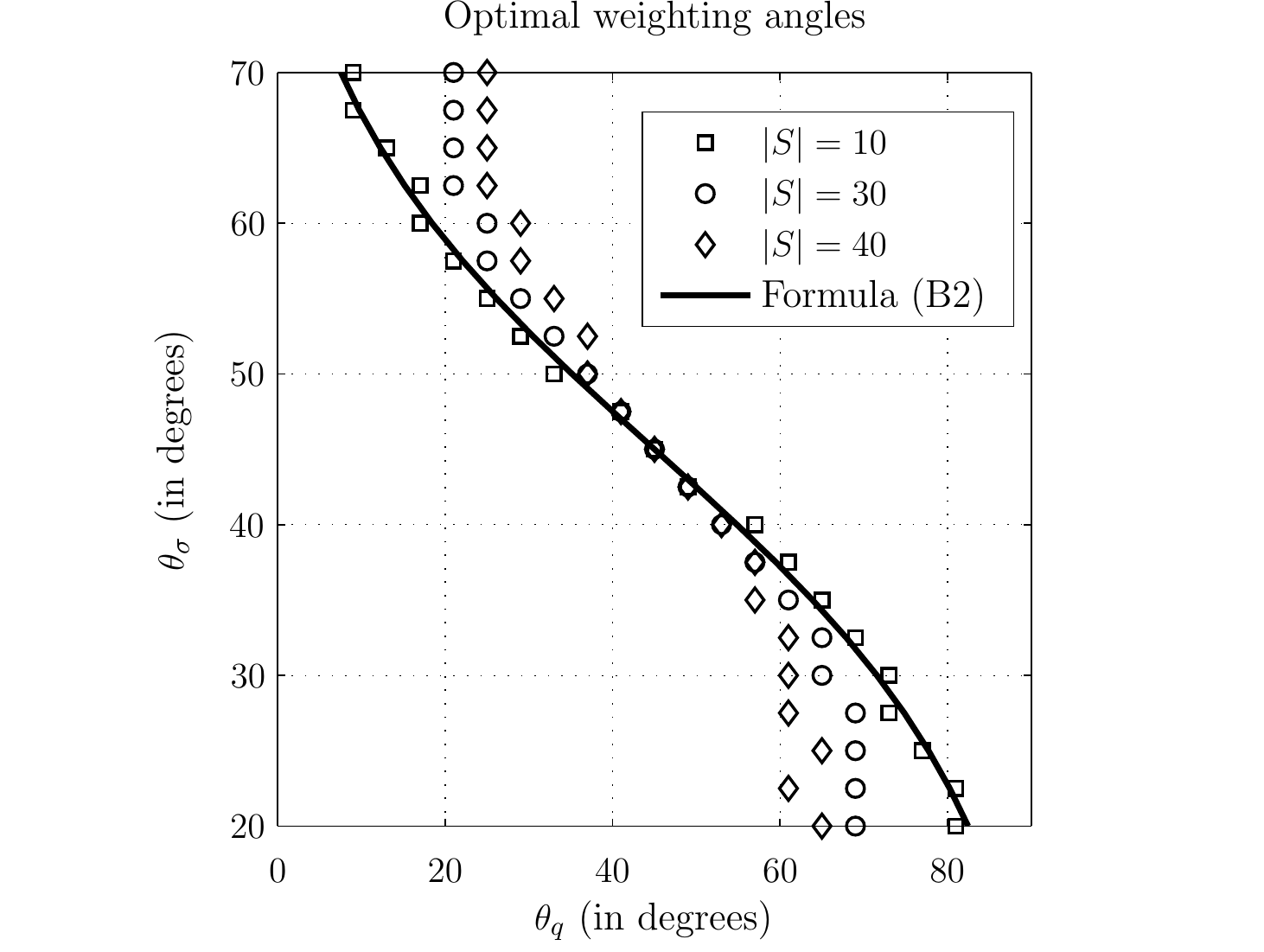} \label{subfig:SP2OptAngles} }}%
    \caption{Simulation results ($K = 2$) -- Optimal weighting angles -- The black curve denotes the optimal weights predicted by Equation~(\ref{eq:B2}) -- The markers correspond to the optimal weights derived from the simulation results}%
    \label{fig:simAnglesOptimalvsTheory}%
\end{figure}

The first observation is explained by the different sign patterns for each measurement vector. Although the weights have been introduced to better filter the influence of the noise, they also have an impact on the relative importance of each $\bsy{x}_k$ in the decisions that are taken. Given that the sparse vectors $\bsy{x}_k$ to be recovered  have identical distributions, it is to be expected that, without noise, the optimal weights are obtained by choosing $\theta_q = 45^{\circ}$ for symmetry reasons. Figure~\ref{fig:simAnglesPlotK2Conf6NoNoise} displays the simulation results obtained for a configuration identical to Configuration $6$ except that $\mathrm{SNR}_{\mathrm{in}} = 78.73$ dB, \textit{i.e.}, the influence of the noise is negligible. It is observed that the interaction of the weights and the sparse vectors to be recovered exists and that the optimal weighting angle is equal to $45^{\circ}$. A possible interpretation of the observations above is that the optimal weighting strategy is a mixture of the strategy that optimizes the support recovery in the noiseless case and of that which minimizes the impact of the noise on the decisions that are taken, \textit{i.e.}, $q_k = 1/\sigma_k^2$. Nevertheless, further theoretical developments should be conducted to assess whether the proposed interpretation is correct. Finally, it is worth pointing out that the phenomenon described above is not observed for sign pattern $1$ because $\bsy{x}_1$ and $\bsy{x}_2$ are identical in this case.\\

The reasons that explain why the optimal weights get closer to $\theta_q = 45^{\circ}$ when the support size increases, as shown in Figure~\ref{subfig:SP2OptAngles}, is not clear and would require additional theoretical investigations that fall outside of the framework of this paper.\\

\begin{figure}[!h]
\centering
\hspace*{-8mm}
\includegraphics[width=13.0cm]{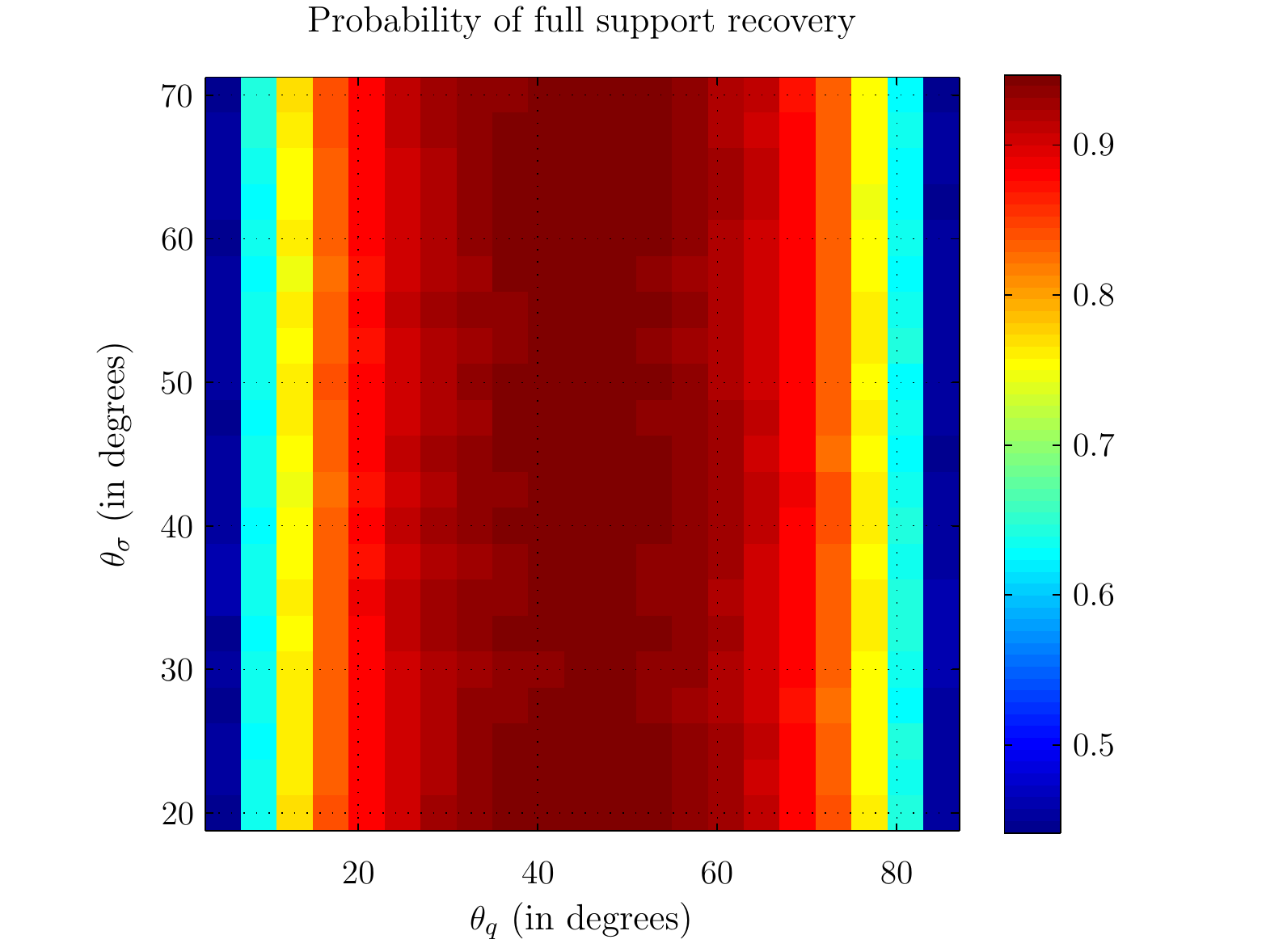}
\caption{Simulation results ($K = 2$) for Configuration $6$ with \mbox{$\mathrm{SNR}_{\mathrm{in}} = 78.73$ dB} -- Probability of full support recovery as a function of $\theta_q$ and $\theta_{\sigma}$ -- Sign pattern $2$ -- $|S| = 40$ -- $\mu_X = 15280$ -- \# cases $= 1.6 \; 10^4$}%
\label{fig:simAnglesPlotK2Conf6NoNoise}%
\end{figure}

\subsection{Simulation results for increasing values of $K$}

The final question of whether the proposed theoretical analysis properly conveys the properties of SOMP-NS whenever $K$ increases is to be discussed in this section. Our main objective is to show that, as predicted by Equation~(\ref{eq:B2}), the probability of failure of SOMP-NS decreases linearly with $K$ when it is plotted is semi-logarithmic axes. Indeed, Equation~(\ref{eq:B2}) yields
\begin{equation*}
\log \mathbb{P}_{\mathrm{fail}} \leq  \log \left( n \mathcal{C}_{|S|-1} \right) - K  \dfrac{\langle \bsy{q} \rangle^2 \mu_X^2}{2 \langle (q_k^2 \sigma_k^2)_{k \in \lbrack K \rbrack} \rangle} \varepsilon'^2
\end{equation*}
where $\mathbb{P}_{\mathrm{fail}}$ denotes the probability of failure of correct support recovery.\\

The simulation framework consists of a fixed weighting strategy for which all the weights are equal, \textit{i.e.}, the weighting strategy corresponds to SOMP. For each value of $K$, the noise vector is given by $\bsy{\sigma} = (\sqrt{2}/2) \; (1, 1, \dots , 1)^{\mathrm{T}}$ so that it is reminiscent of the noise vector defined in Section~\ref{subsec:simresK2} for $K = 2$. The results are plotted in Figure~\ref{fig:simPlotK}. The configurations that have been chosen are directly inspired of those presented in Table~\ref{tab:configsSimK2}. The number of cases simulated for each curve and each value of $K$ is equal to $4\; 10^5$. Some configurations have been discarded because they do not exhibit a probability of failure equal to $0$ without noise, which is incompatible with the implicit assumption of our theoretical model that no errors are committed in the noiseless case. Indeed, if $1 - \|\bsy{\Phi}_S^+ \bsy{\Phi}_{\overline{S}} \|_{1} \leq 0$, \textit{i.e.}, the ERC is the noiseless case is not satisfied, then Equation~(\ref{eq:SOMPanalysis2}) cannot hold.  For example, Figure~\ref{fig:simAnglesPlotK2Conf6NoNoise} shows that Configuration $6$ from Table \ref{tab:configsSimK2} exhibits a non-zero probability of failure without noise.\\

\begin{figure}[!h]
\centering
\hspace*{-13mm}
\includegraphics[width=13.0cm]{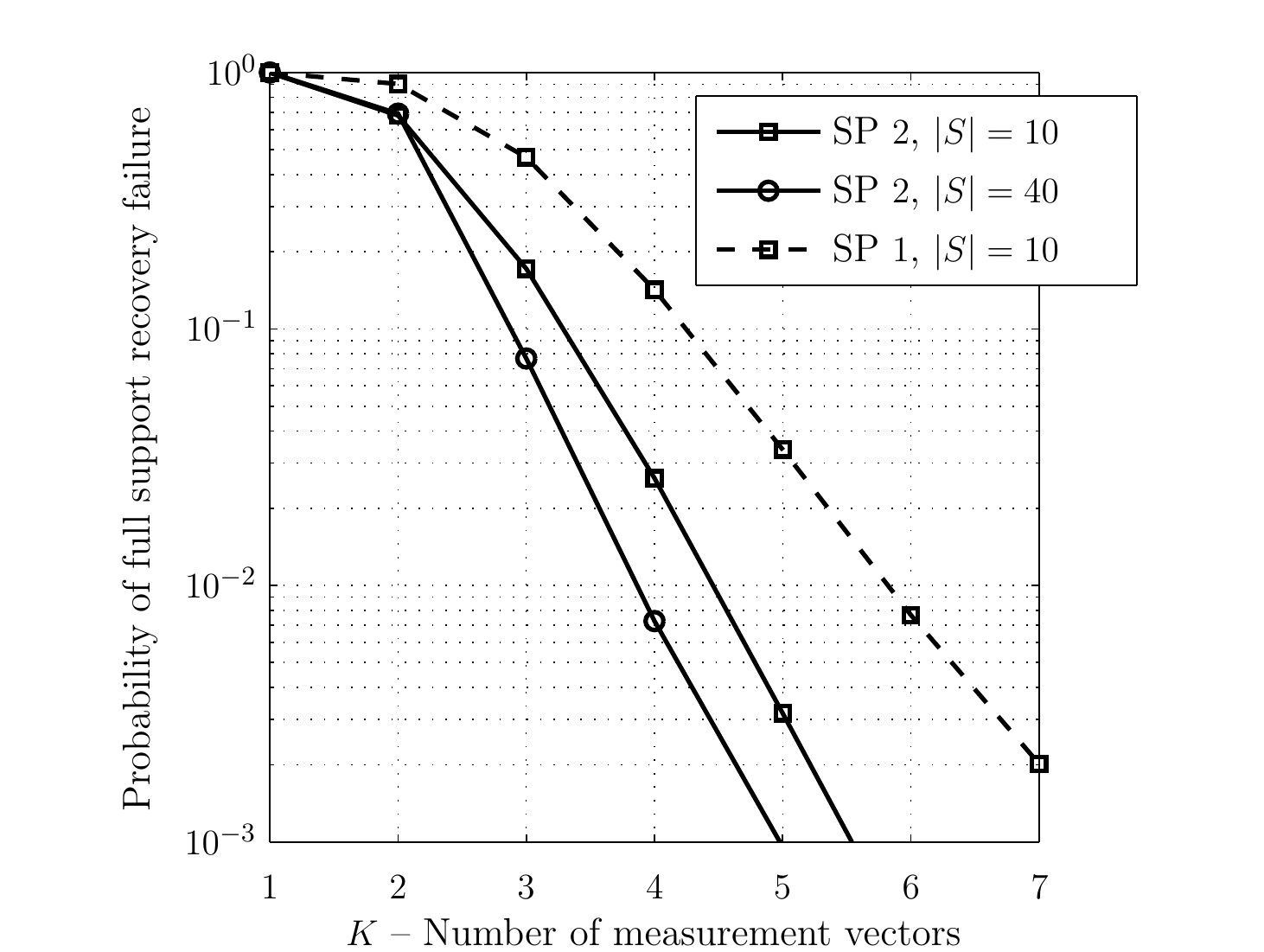}
\caption{Probability of full support recovery failure as a function of $K$ -- SP refers to sign pattern}%
\label{fig:simPlotK}%
\end{figure}

The principal observation for Figure~\ref{fig:simPlotK} is that the slope of $\mathbb{P}_{\mathrm{fail}}$ in semi-logarithmic axes is linear with regards to $K$. This observation provides evidence that the theoretical model conveys the behavior of SOMP-NS when $K$ increases.

\subsection{Summary of the numerical results}\label{subsec:summaryNumres}

The numerical results have revealed the following interesting facts:
\begin{enumerate}
\item SOMP-NS provides significant performance improvements when compared to SOMP provided that the weights are properly chosen and that the noise variances are different for each measurement vector.
\item The formula $q_k = 1/\sigma_k^2$ derived from Equation~(\ref{eq:B2}) corresponds to the truly optimal weights whenever
\begin{itemize}
\item The sparse vectors $\bsy{x}_k$ to be recovered are identical.
\item The support size $|S|$ is low enough.
\end{itemize}
The exact reason why the formula $q_k = 1/\sigma_k^2$ gets less accurate as the size of the support increases remains an open question.
\item The theoretical analysis properly predicts the characteristics of the decrease of the probability of failure of SOMP-NS whenever the number of measurement vectors increases.
\end{enumerate}

Besides the three points above, which answer the three questions enumerated at the beginning of Section~\ref{sec:numresults}, the close fitting of $q_k = 1/\sigma_k^2$ and the numerical results for sign pattern $1$ suggests that the bias term $b(\bsy{q}, \bsy{\sigma})$ is indeed an artifact of our developments as adding it would change the theoretically optimal weights and we would then observe a mismatch between them and those obtained by simulation.

\section{Future work}
As suggested in Section~\ref{subsec:conjectures}, the bias $b(\bsy{q}, \bsy{\sigma})$ could be removed by avoiding to make use of the inequalities (\ref{eq:firstBigApprox}) and (\ref{eq:secondBigApprox}). Using a more subtle approach than the use of the union bound could also yield performance improvements.\\

Although it appears to be difficult, replacing the term $\mathcal{C}_{|S|-1}$ by a function that depends linearly on $|S|$ would be of great interest, especially since it would close a hole in the literature regarding the performance of the well-known SOMP algorithm that is a particular case of ours.\\

Finally, extending the presented analysis by performing a joint statistical analysis of both the noise and the sparse signals to be recovered would be of great interest. To begin with, it would provide a theoretical model that predicts the truly optimal weights by simultaneously taking into account how they impact the sparse signals to be recovered and the noise vectors. Next, it would also enable one to comprehend why the discrepancy between the formula $q_k = 1/\sigma_k^2$ and the truly optimal weights increases as the support size augments (see Figure~\ref{fig:simAnglesOptimalvsTheory}). Finally, the statistical analysis of the sparse signals could replace  $n \mathcal{C}_{|S|-1}$ by $\overline{n} \mathcal{C}_{|S|-1}$ (where $\overline{n}$ is significantly lower than $n$) as conjectured in Section~\ref{subsec:conjectures}.

\section{Conclusion}
A novel algorithm entitled SOMP-NS that generalizes SOMP by associating weights with each measurement vector has been proposed. A theoretical framework to analyze this algorithm has been built. Lower bounds on the probability of full support recovery by means of SOMP-NS have been developed in the case where the noise corrupting the measurements is Gaussian. Numerical simulations have revealed that the developed theoretical results accurately depict key components of the behavior of SOMP-NS while they also fail to capture some of its properties. In particular, it has been shown that, under the right circumstances, the weights of SOMP-NS can be efficiently optimized on the basis of the proposed theoretical bounds. Finally, the reasons that explain why some characteristics of SOMP-NS are not properly conveyed by the theoretical analysis have been discussed and potential workarounds to be investigated have been suggested.

\footnotesize

\end{document}